\def\tsc#1{\csdef{#1}{\textsc{\lowercase{#1}}\xspace}}
\renewcommand{\eqref}[1]{(\ref{#1})}
\newtheorem{theorem}{Theorem}
\newtheorem{lemma}[theorem]{Lemma}
\newtheorem{corollary}[theorem]{Corollary}
\newdefinition{definition}[theorem]{Definition}
\newdefinition{remark}[theorem]{Remark}
\newcommand{\on}[1]{\ensuremath{\operatorname{#1}}}
\newcommand{\abs}[1]{\ensuremath{\left\lvert #1 \right\rvert}}
\newcommand{\smallabs}[1]{\ensuremath{\lvert #1 \rvert}}
\newcommand{\eps}{\ensuremath{\varepsilon}}
\newcommand{\ZZ}{\ensuremath{\mathbb{Z}}}
\newcommand{\RR}{\ensuremath{\mathbb{R}}}
\newcommand{\INSTANCES}[1]{\ensuremath{\mathcal{{I}}_{#1}}\xspace}
\newcommand{\OPT}{\ensuremath{\on{\textsc{Opt}}}\xspace}
\newcommand{\ALG}{\ensuremath{\on{\textsc{Alg}}}\xspace}
\newcommand{\PAR}{\ensuremath{\on{\textsc{PO}}}\xspace}
\newcommand{\ALPHA}{\ensuremath{\alpha}} 
\newcommand{\BETA}{\ensuremath{\beta}} 
\newcommand{\GAMMA}{\ensuremath{\gamma}} 
\newcommand{\ABC}{\ensuremath{(\ALPHA,\BETA,\GAMMA)}\xspace}
\newcommand{\myoverset}[3][-0.3ex]{%
  \mathrel{\mathop{#3}\limits^{
    \vbox to#1{\kern-2\ex@
    \hbox{$\scriptstyle#2$}\vss}}}}
\newcommand{\orarrow}{\ensuremath{\myoverset{r}{\rightarrow}}}
\definecolor{custom_yellow}{RGB}{246,219,166}
\definecolor{custom_yellow_dark}{RGB}{196,136,19}
\definecolor{custom_blue}{RGB}{166,193,246}
\definecolor{custom_blue_dark}{RGB}{19,79,196}
\definecolor{custom_red}{RGB}{244,123,126}
\definecolor{custom_red_dark}{RGB}{220,18,23}
\newcommand{\MAXDEGREE}{\ensuremath{\Delta}\xspace}
\newcommand{\AT}{\ensuremath{\kappa}} 
\newcommand{\ALGS}[1]{\ensuremath{\mathcal{A}_{#1}}\xspace}
\newcommand{\ORALG}[1]{\ensuremath{{#1}_{\textsc{a}}}\xspace}
\newcommand{\ORTRANS}[1]{\ensuremath{{#1}_{\textsc{r}}}\xspace}
\newcommand{\ONLRED}[1]{\ensuremath{(\ORALG{#1},\ORTRANS{#1})}\xspace}
\newcommand{\CCWM}[2]{\ensuremath{\mathcal{C}_{#1}^{#2}}\xspace}
\newcommand{\ASG}[1]{\ensuremath{\textsc{ASG}_{#1}}\xspace}
\newcommand{\ASGINFTY}{\ensuremath{\textsc{ASG}}\xspace}
\newcommand{\BDVC}[1]{\ensuremath{\textsc{VC}_{#1}}\xspace}
\newcommand{\VC}{\ensuremath{\textsc{VC}}\xspace}
\newcommand{\COL}[1]{\ensuremath{#1\textsc{-Spill}}\xspace}
\newcommand{\INTER}[1]{\ensuremath{\textsc{IR}_{#1}}\xspace}
\newcommand{\IR}{\ensuremath{\textsc{IR}}\xspace}
\newcommand{\SAT}[1]{\ensuremath{#1\textsc{-SatD}}\xspace}
\newcommand{\DOM}{\ensuremath{\textsc{Dom}}\xspace}
\newcommand{\PAG}[1]{\ensuremath{\textsc{Pag}_{#1}}\xspace}
\newcommand{\ETA}{\ensuremath{\eta}}
\newcommand{\ETAZERO}{\ensuremath{\ETA_0}}
\newcommand{\ETAONE}{\ensuremath{\ETA_1}}
\newcommand{\PAIRETAFORCC}{\ensuremath{\ETAZERO,\ETAONE}}
\newcommand{\PAIRETA}{\ensuremath{(\PAIRETAFORCC)}\xspace}
\newcommand{\PHI}{\ensuremath{\varphi}}
\newcommand{\PHIZERO}{\ensuremath{\PHI_0}}
\newcommand{\PHIONE}{\ensuremath{\PHI_1}}
\newcommand{\PAIRPHIFORCC}{\ensuremath{\PHIZERO,\PHIONE}}
\newcommand{\PAIRPHI}{\ensuremath{(\PAIRPHIFORCC)}\xspace}
\newcommand{\MU}{\ensuremath{\mu}}
\newcommand{\MUZERO}{\ensuremath{\MU_0}}
\newcommand{\MUONE}{\ensuremath{\MU_1}}
\newcommand{\PAIRMUFORCC}{\ensuremath{\MUZERO,\MUONE}}
\newcommand{\PAIRMU}{\ensuremath{(\PAIRMUFORCC)}\xspace}
\newcommand{\ZM}{\ensuremath{Z}}
\newcommand{\ZMZERO}{\ensuremath{\ZM_0}}
\newcommand{\ZMONE}{\ensuremath{\ZM_1}}
\newcommand{\PAIRZMFORCC}{\ZMZERO,\ZMONE}
\newcommand{\PAIRZM}{(\PAIRZMFORCC)\xspace}
\newcommand{\MINDEGREE}{\ensuremath{\delta}\xspace}
\newcommand{\BDIS}[1]{\ensuremath{\textsc{IS}_{#1}}\xspace}
\newcommand{\IS}{\ensuremath{\textsc{IS}}\xspace}
\newcommand{\SCH}[1]{\ensuremath{\textsc{Sch}_{#1}}\xspace}
\newcommand{\SCHEDULING}{\ensuremath{\textsc{Sch}}\xspace}
\newcommand{\CLI}[1]{\ensuremath{\textsc{Cli}_{#1}}\xspace}
\newcommand{\CLIQUE}{\ensuremath{\textsc{Cli}}\xspace}
\newcommand{\SP}[1]{\ensuremath{\textsc{SP}_{#1}}\xspace}
\newcommand{\SETPACKING}{\ensuremath{\textsc{SP}}\xspace}
\newcommand{\MCS}[2]{\ensuremath{\textsc{M}#1\textsc{CS}_{#2}}}
\newcommand{\MAT}[1]{\ensuremath{\textsc{MM}_{#1}}\xspace}
\newcommand{\MATCHING}{\ensuremath{\textsc{MM}}\xspace}
\tikzset{
	vertex/.style = {
		shape = circle,
		draw = black,
		fill = gray!15,
		minimum size = 1cm,
		scale = 0.8
	},
	nnvertex/.style = {
		shape = circle,
		draw = black,
		minimum size = 0.2cm
	},
	vertexwcolor/.style = {
		shape = circle,
		draw = black,
		minimum size = 1cm,
		scale = 0.8
	},
	strictorarrow/.style = {
		-{Latex[line width=0.5pt, fill=white,scale = 1.5]}
	},
	orarrow/.style = {
		-{Latex[scale = 1.5]}
	},
	biorarrow/.style = {
		{Latex[scale = 1.5]}-{Latex[scale = 1.5]}
	},
	polyominofill/.style={
  		shape = rectangle,
		minimum size = 0.6cm,
		draw=black,
		very thick,
		fill=gray!15
  	},
	polyominodarkfill/.style={
		shape = rectangle,
		minimum size = 0.6cm,
		draw=black,
		very thick,
		fill=gray!60
  	},
	polyominofill2/.style={
		shape = rectangle,
		minimum size = 0.75cm,
		draw=black,
		very thick,
		fill=gray!15
  	},
	dummyfill/.style={
		shape = rectangle,
		minimum size = 0.75cm,
		draw=black,
		very thick,
		fill=gray!60
	},
	bfstreefill/.style={
		shape = circle,
		text width = 0.3cm,
		draw=black,
		thick,
		fill=gray!15
	},
	prompt/.style = {
		scale = 0.8,
		anchor = west
	}
}
\begin{document}
\let\WriteBookmarks\relax
\def\floatpagepagefraction{1}
\def\textpagefraction{.001}

\shorttitle{Comparing the Hardness of Online Minimization and Maximization Problems with Predictions}    

\shortauthors{M. Berg}  

\title [mode = title]{Comparing the Hardness of Online Minimization and Maximization Problems with Predictions}  

\tnotemark[1] 

\tnotetext[1]{That author of this paper was supported in part by the Independent Research Fund Denmark, Natural Sciences, grant DFF-4283-00079B and in part by the Innovation Fund Denmark, grant 9142-00001B, Digital Research Centre Denmark, project P40: Online Algorithms with Predictions. An extended abstract of this paper was published in the International Joint Conference on Theoretical Computer Science -- Frontier of Algorithmic Wisdom (IJTCS-FAW), volume 15828 of Lecture Notes in Computer Science, pages 33--48. Springer, 2025.} 

%

\author[1]{Magnus Berg}[orcid=0000-0001-8637-7113]

\cormark[1]


\ead{mbmo@iba.dk}

\ead[url]{https://orcid.org/my-orcid?orcid=0000-0001-8637-7113}


\affiliation[1]{organization={IBA International Business Academy},
            addressline={Havneparken 1}, 
            postcode={6000}, 
            city={Kolding},
            country={Denmark}}

\cortext[1]{Corresponding author}


\begin{abstract} 
We build on the work of Berg, Boyar, Favrholdt, and Larsen, who developed a complexity theory for online problems with and without predictions (IJTCS-FAW, volume 15828 of LNCS, Springer, 2025) where they define a hierarchy of complexity classes that classifies online problems based on the competitiveness of best possible deterministic online algorithms for each problem.
Their work focused on online minimization problems and we continue their work by considering online maximization problems.

We compare the competitiveness of the base online minimization problem from Berg, Boyar, Favrholdt, and Larsen, Asymmetric String Guessing, to the competitiveness of Online Bounded Degree Independent Set.
Formally, we show that there exists algorithms of any given competitiveness for Asymmetric String Guessing if and only if there exists algorithms of the same competitiveness for Online Bounded Degree Independent Set, while respecting that the competitiveness of algorithms is measured differently for minimization and maximization problems.
Beyond this, we give several hardness preserving reductions between different online maximization problems, which imply new membership, hardness, and completeness results for the complexity classes.
Finally, we show new positive and negative algorithmic results for (among others) Online Bounded Degree Independent Set, Online Interval Scheduling, Online Set Packing, and Online Bounded Degree Clique.
\end{abstract}

\begin{keywords}
Complexity of Online Problems \sep Online Algorithms with Preditions \sep Minimization vs. Maximization \sep Independent Set and Vertex Cover \sep Graph Problems
\end{keywords}

\maketitle


\section{Introduction} 

Recently, Berg, Boyar, Favrholdt, and Larsen introduced a complexity theory for online problems with (and without) predictions~\cite{BBFL25}, where they classify several hard online minimization problems based on the competitiveness of best possible online algorithms for each problem. 
In~\cite{BBFL25}, the authors only considers online minimization problems.
In this work, we continue their work by comparing the hardness of various online maximization problems to each other, and by comparing the hardness of online minimization problems to the hardness of online maximization problems.
Our main results concern proving membership, hardness and completeness results for online maximization problems for their complexity classes.
The main challenge, and thus also contribution, of this paper is a direct comparison between the competitiveness of algorithms for an online variant of Independent Set and an online variant Vertex Cover.
Contrary to the offline setting, where it is known that approximating Vertex Cover is much easier than approximating Independent Set, we show that in the online setting, and in the online setting with binary predictions, the existence of algorithms of any competitiveness for Independent Set implies the existence of equally good algorithms for Vertex Cover and vice versa.

An \emph{online problem} is an optimization problem, where the input is divided into a sequence of \emph{requests} that are given one-by-one.
When an online algorithm receives a request, it must make an irrevocable decision about the request before receiving the next request.
We analyze the performance of algorithms using competitive analysis~\cite{ST85,BE98,K16}; the most standard framework for online algorithms.
Competitive analysis is a framework for worst-case guarantees, where we measure the quality of an algorithm by comparing its performance over all possible input sequences to the performance of an optimal offline algorithm.

In recent years, there has been an increasing interest in algorithms with predictions~\cite{ALPS}, not least in the context of online algorithms~\cite{LV21,KPS18}.
Here, algorithms are endowed with a predictor to help them create their solution to a given instance.
In the perfect world, online algorithms perform optimally when given perfect predictions and no worse than the best known purely online algorithm when the predictions are erroneous or adversarial.
This ideal case is often not realizable and instead we determine an error measure through which we evaluate the quality of the predictions, and then we express the competitiveness of the algorithms as a function of the error.

The recently developed complexity theory for online problems with predictions~\cite{BBFL25} allows for classifying online problems based on their \emph{hardness}, which is given by the competitiveness of best possible deterministic online algorithms for each problem.
Formally, for any $t \in \ZZ^+ \cup \{\infty\}$ and any pair of error measures $\PAIRETA$, the complexity class $\CCWM{\PAIRETAFORCC}{t}$ is defined as the closure of \emph{$(1,t)$-Asymmetric String Guessing with Unknown History and Predictions} ($\ASG{t}$), under the \emph{as hard as} relation (see Definition~\ref{def:as_hard_as}) with respect to $\PAIRETA$~\cite{BBFL25}.
We give a brief review of the central components of the complexity classes in Section~\ref{sec:complexity_recap}.

\subsection{Our Contribution}

Our contribution revolves around proving several membership, hardness, and completeness results for the complexity classes from~\cite{BBFL25}, focusing on  online maximization problems.
We start by considering the dual problem of the $\CCWM{\PAIRETAFORCC}{t}$-complete problem Online $t$-Bounded Degree Vertex Cover with Predictions ($\BDVC{t}$), namely Online $t$-Bounded Degree Independent Set with Predictions ($\BDIS{t}$).
From this problem, we show a number of reductions proving hardness relations between $\BDIS{t}$ and other relevant online maximization problems such as online variants of \emph{Interval Scheduling}, \emph{Set Packing} etc.

After comparing the hardness of relevant online maximization problems, we relate ourx results to the existing complexity classes from~\cite{BBFL25}, by proving that $\BDIS{t}$ is equivalent to $\ASG{t}$, and thus also to $\BDVC{t}$, in terms of hardness with respect to the canonical pair of error measures $\PAIRMU$.
We do this by showing that there exists an algorithm for $\BDIS{t}$ of any competitiveness if and only if there exists an algorithm of the exact same competitiveness for $\ASG{t}$, while respecting that the competitiveness of algorithms for online maximization problems is measured differently from the competitiveness of algorithms for online minimization problems (see Definition~\ref{def:competitiveness}).

Having established this relation between $\ASG{t}$ and $\BDIS{t}$, we conclude that $\BDIS{t}$ is $\CCWM{\PAIRMUFORCC}{t}$-complete.
This result in turn implies that the online maximization problems we had earlier compared to $\BDIS{t}$ are now members, hard, or complete for $\CCWM{\PAIRMUFORCC}{t}$, depending on their relation to $\BDIS{t}$.
Finally, we extend several positive algorithmic results from~\cite{BBFL25} to members and complete problems, and strong negative algorithmic results from~\cite{ABEFHLPS23} to all complete and hard problems.


\begin{figure}
\centering
\begin{tikzpicture}
\def\ydist{1.5}
\def\xdist{1}

\def\ymargin{0.5}
\def\xmargin{1.25}

\def\ylevone{0}
\def\ylevzero{\ylevone - \ydist}
\def\ylevtwo{\ylevone + \ydist}
\def\ylevthree{\ylevtwo + \ydist}
\def\ylevfour{\ylevthree + \ydist}
\def\ylevfive{\ylevfour + \ydist}
\def\ylevsix{\ylevfive + \ydist}

\def\xminlevone{0}
\def\xminlevtwo{\xminlevone + \xdist}
\def\xminlevthree{\xminlevtwo + \xdist}
\def\xminlevfour{\xminlevthree + \xdist}
\def\xminlevfive{\xminlevfour + \xdist}

\def\xmaxlevone{\xminlevfive + 2.5*\xmargin}
\def\xmaxlevtwo{\xmaxlevone + \xdist}
\def\xmaxlevthree{\xmaxlevtwo + \xdist}
\def\xmaxlevfour{\xmaxlevthree}
\def\xmaxlevfive{\xmaxlevfour + \xdist}

%
%
%

\node at (\xminlevthree,\ylevsix + 0.5*\ydist) {\textbf{Minimization Problems}};

\node[gray!50] (asgt) at (\xminlevfour,\ylevtwo) {$\ASG{t}$};
\node[gray!50] (asg) at (\xminlevthree,\ylevsix) {$\ASGINFTY$};

\node[gray!50] (irt) at (\xminlevtwo ,\ylevtwo) {$\INTER{t}$};
\node[gray!50] (ir) at (\xminlevone,\ylevfour) {$\IR$};

\node[gray!50] (vct) at (\xminlevthree,\ylevthree) {$\BDVC{t}$};
\node[gray!50] (vc) at (\xminlevthree,\ylevfour) {$\VC$};

\node[gray!50] (pagt) at (\xminlevfour,\ylevone) {$\PAG{t}$};

\node[gray!50] (kspill) at (\xminlevfive,\ylevfour) {$\COL{k}_{t+k+1}$};

\node[gray!50] (2satD) at (\xminlevone,\ylevfive) {$\SAT{2}$};

\node[gray!50] (dom) at (\xminlevthree,\ylevfive) {$\DOM$};


\draw[biorarrow,gray!50] (irt) -- (vct);
\draw[biorarrow,gray!50] (asgt) -- (vct);
\draw[biorarrow,gray!50] (irt) -- (asgt);
\draw[strictorarrow,gray!50] (vct) -- (vc);
\draw[orarrow,gray!50,dashed] (vc) -- (dom);
\draw[strictorarrow,gray!50] (irt) -- (ir);
\draw[orarrow,gray!50] (ir) -- (vc);
\draw[orarrow,gray!50] (ir) -- (2satD);
\draw[strictorarrow,gray!50] (vc) to[out = 130, in = -130] (asg);
\draw[strictorarrow,gray!50] (dom) -- (asg);
\draw[orarrow,gray!50] (asgt) -- (kspill);
\draw[strictorarrow,gray!50] (pagt) -- (asgt);

%

\draw[dashed] (\xminlevfive + \xmargin,\ylevone-\ymargin) -- (\xminlevfive + \xmargin,\ylevsix + \ymargin);

\node at (\xmaxlevtwo + 0.25*\xdist,\ylevsix + 0.5*\ydist) {\textbf{Maximization Problems}};

\node (ist) at (\xmaxlevone,\ylevtwo) {$\BDIS{t}$};
\node (is) at (\xmaxlevone,\ylevfive) {$\IS$};

\node (clit) at (\xmaxlevfour,\ylevthree) {$\CLI{t}$};
\node (cli) at (\xmaxlevfour,\ylevfive) {$\CLIQUE$};

\node (scht) at (\xmaxlevone,\ylevthree) {$\SCH{t}$};
\node (sch) at (\xmaxlevone, \ylevfour) {$\SCHEDULING$};

\node (spt) at (\xmaxlevfour,\ylevtwo) {$\SP{t}$};
\node (sp) at (\xmaxlevfour,\ylevfour) {$\SETPACKING$};

\node (matt) at (\xmaxlevfour,\ylevone) {$\MAT{\left\lfloor\frac{t}{2}\right\rfloor + 1}$};

\node(mkcskt) at (\xmaxlevone,\ylevone) {$\MCS{k}{kt}$};

\draw[biorarrow] (spt) -- (ist);
\draw[biorarrow] (ist) -- (asgt);
\draw[strictorarrow] (ist) to[out = 125, in = -125] (is);
\draw[strictorarrow] (scht) -- (sch);
\draw[orarrow] (sch) -- (is);
\draw[biorarrow] (is) -- (cli);
\draw[biorarrow] (scht) -- (spt);
\draw[biorarrow] (scht) -- (clit);
\draw[strictorarrow] (spt) to[out = 55, in = -55] (sp);
\draw[orarrow] (sp) -- (is);
\draw[orarrow] (matt) -- (spt);
\draw[biorarrow] (scht) -- (ist);
\draw[biorarrow] (clit) -- (ist);
\draw[biorarrow] (clit) -- (spt);
\draw[strictorarrow] (clit) to[out = 55, in = -55] (cli);
\draw[orarrow] (mkcskt) -- (ist);

\end{tikzpicture}
\caption[An representation of a hardness graph containing both minimization and maximization problems]{
An (incomplete) representation of the hardness graph for $t \in \ZZ^+$ with respect to $\PAIRMU$. 
The vertices are online problems, and there is an arc $P \rightarrow Q$ when $Q$ is as hard as $P$ (see Definition~\ref{def:as_hard_as}).
If the arrowhead of an arc $P \rightarrow Q$ is only outlined (e.g.\ $\BDIS{t} \rightarrow \IS$) then $Q \rightarrow P$ cannot exist.
All problems and arcs in \textcolor{gray!50}{gray} are from~\cite{BBFL25}.
The remaining problems are defined in Definitions~\ref{def:bdis_t},~\ref{def:sp_t},~\ref{def:sch_t},~\ref{def:cli_t},~\ref{def:mcs_kt}, and~\ref{def:mat_t}, and the existence of the remaining the arcs are proven in Sections~\ref{sec:independent_set_vs_asg_t} and~\ref{sec:a_collection_of_problems_from_CCWM_max}.
All arcs in the transitive closure of the hardness graph exists but are omitted for simplicity.
Hence, if there is a $(P,Q)$-path in the hardness graph, then $P \rightarrow Q$ also exists (see Lemma~\ref{lem:transitivity}).
}
\label{fig:hardness_graph}
\end{figure}


We extend the \emph{hardness graph} from~\cite{BBFL25} to illustrate our results (see Figure~\ref{fig:hardness_graph}).

\section{Preliminaries}\label{sec:preliminaries}

We consider online problems, $P$, where algorithms must output a bit as its answer to each request.
Following~\cite{BBFL25}, we think of an instance of $P$ as a triple $I = (x,\hat{x},r)$, where $x,\hat{x} \in \{0,1\}^n$ are bitstrings and $r = \langle r_1,r_2,\ldots,r_n\rangle$ is a sequence of requests for $P$.
We let $\INSTANCES{P}$ be the set of all instances for $P$, $\ALGS{P}$ be the collection of all deterministic algorithms for $P$, and $\OPT$ be a fixed optimal offline algorithm for $P$.
Given an instance, $I = (x,\hat{x},r)$, $x$ must be an encoding of $\OPT$'s solution and $\hat{x}$ is a prediction of $x$.
When an algorithm for $P$, $\ALG$, receives a request, $r_i$, it also receives the prediction, $\hat{x}_i$, that predicts the optimal output to $r_i$, $x_i$, which $\ALG$ may use to aid its decision about $r_i$, denoted by $y_i$. 
The contents of the requests is defined for each problem separately.

For most graph problems, we consider the \emph{vertex-arrival} model, the most standard for online graph problems~\cite{ABM24,HIMT02,BBBKP18}.
In this model, given an input graph $G = (V,E)$, each request, $r_i$, contains a vertex, $v_i \in V$, and a collection of edges $E_i = \{(v_i,v_j) \mid j < i\} \subseteq E$ to previously revealed vertices.
Throughout, we let $d_G(v)$ be the degree of $v$ in $G$ and we let $\MAXDEGREE(G) = \on{max}_{v \in V}\{d_G(v)\}$ and $\MINDEGREE(G) = \on{min}_{v \in V}\{d_G(v)\}$.


We analyze the performance of algorithms using competitive analysis~\cite{ST85,BE98,K16}. 
An online algorithm, $\ALG$, for an online problem, $P$, is called \emph{$c$-competitive} if there exists a constant $\AT \in \RR$, called the \emph{additive constant},
such that for all instances $I \in \INSTANCES{P}$
\begin{align*}
\OPT(I) \leqslant c \cdot \ALG(I) + \AT
\end{align*}
if $P$ is a maximization problem, or 
\begin{align*}
\ALG(I) \leqslant c \cdot \OPT(I) + \AT
\end{align*}
if $P$ is a minimization problem. 
We extend the definition of competitiveness to take into account the quality of the predictions.
Following previous work on online algorithms with binary predictions~\cite{ABEFHLPS23,BBFL25}, 
we measure the error through a pair of error measures $\PAIRETA$, where $\ETA_b$, for $b \in \{0,1\}$, is a function of the incorrectly predicted requests, where the predictions is $b$. 
For any instance, $I$, we require that $0 \leqslant \ETA_b(I) < \infty$, so infinite and negative error is disallowed.

\begin{definition}\label{def:competitiveness}
Let $P$ be an online problem with binary predictions, let $\PAIRETA$ be a pair of error measures, and let $\alpha,\beta,\gamma \colon \INSTANCES{P} \rightarrow [0,\infty)$ be any maps.
If $P$ is a maximization problem, then $\ALG \in \ALGS{P}$ is called \emph{$\ABC$-competitive with respect to $\PAIRETA$} if there exists $\AT \in \RR$ such that for all $I \in \INSTANCES{P}$,
\begin{align}\label{eq:competitiveness_max}
\OPT(I) \leqslant \alpha \cdot \ALG(I) + \beta \cdot \ETAZERO(I) + \gamma \cdot \ETAONE(I) + \AT.
\end{align}
Similarly, if $P$ is a minimization problem, then $\ALG \in \ALGS{P}$ is called \emph{$\ABC$-competitive with respect to $\PAIRETA$} if there exists $\AT \in \RR$ such that for all $I \in \INSTANCES{P}$,
\begin{align}\label{eq:competitiveness_min}
\ALG(I) \leqslant \alpha \cdot \OPT(I) + \beta \cdot \ETAZERO(I) + \gamma \cdot \ETAONE(I) + \AT,
\end{align}
When $\PAIRETA$ is clear from the context, we say that $\ALG$ is \emph{$\ABC$-competitive}.
\end{definition}

Following~\cite{BBFL25}, we suppress $\alpha$'s, $\beta$'s, and $\gamma$'s dependency on $I$.
Moreover, observe that any $\alpha$-competitive algorithm without predictions for $P$ gives rise to an $(\alpha,0,0)$-competitive algorithm with predictions for $P$ with respect to any pair of error measures, and vice versa.
All results in this paper hold with respect to the canonical pair of error measures~\cite{ABEFHLPS23,BBFL25}, $\PAIRMU$, given by
\begin{align}\label{eq:pairmu}
\MUZERO(I) = \sum_{i=1}^n x_i \cdot (1-\hat{x}_i) \hspace{0.5cm} \text{and} \hspace{0.5cm} \MUONE(I) = \sum_{i=1}^n (1-x_i) \cdot \hat{x}_i,
\end{align}
and the pair of error measures $\PAIRZM$ given by $\ZMZERO(I) = \ZMONE(I) = 0$.
In words, for $b \in \{0,1\}$, $\MU_b$ counts the number of incorrect predictions in $I$, where the prediction is $b$, and $Z_b$ is the zero measure.
The reason that $\PAIRZM$ is interesting is that it models the purely online case, in the sense that an $\ABC$-competitive algorithm with respect to $\PAIRZM$ is $\alpha$-competitive as a purely online algorithm.
Some results also hold with respect to a wider range of error measures (see Theorem~\ref{thm:section_3}).

\subsection[Measuring Hardness and the Complexity Classes $\CCWM{\PAIRETAFORCC}{t}$]{Measuring Hardness and the Complexity Classes $\boldsymbol{\CCWM{\PAIRETAFORCC}{t}}$}\label{sec:complexity_recap}

We recall the central components of the complexity framework from~\cite{BBFL25}.
The aim is to classify online problems based on the relative hardness of online problems, defined as follows:

\begin{definition}[\cite{BBFL25}]\label{def:as_hard_as}
Let $P$ and $Q$ be online problems with binary predictions and error measures $\PAIRETA$ and $\PAIRPHI$.
We say that $Q$ is \emph{as hard as $P$ with respect to $\PAIRPHI$ and $\PAIRETA$}, if the existence of an $\ABC$-competitive algorithm for $Q$ with respect to $\PAIRPHI$ implies the existence of an $\ABC$-competitive algorithm for $P$ with respect to $\PAIRETA$. 
If the error measures are clear from the context, we simply say that $Q$ is \emph{as hard as} $P$.
\end{definition}

In this work, we extend the nomenclature by saying that $Q$ is \emph{exactly as hard as} $P$ if $Q$ is as hard as $P$ and $P$ is as hard as $Q$.

\vspace{0.2cm}
\begin{lemma}[\cite{BBFL25}]\label{lem:transitivity}
The as hard as relation is reflexive and transitive.
\end{lemma}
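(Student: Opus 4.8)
The plan is as follows. \emph{Reflexivity} is immediate from unwinding Definition~\ref{def:as_hard_as}: a Pareto-optimal \ABC-competitive algorithm for $P$ with respect to \PAIRETA\ is, in particular, an \ABC-competitive algorithm for $P$ with respect to \PAIRETA, so the hypothesis of the relation already implies its conclusion and $P$ is as hard as $P$ (with \PAIRETA\ on both sides).

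For \emph{transitivity}, let $P$, $Q$, $R$ be online problems with binary predictions and error measures \PAIRETA, \PAIRPHI, \PAIRXI, respectively, and suppose $Q$ is as hard as $P$ with respect to \PAIRPHI\ and \PAIRETA, and $R$ is as hard as $Q$ with respect to \PAIRXI\ and \PAIRPHI. I would fix functions $\alpha,\beta,\gamma$ and assume there is an \ABC-competitive Pareto-optimal algorithm $A_R$ for $R$ with respect to \PAIRXI; the goal is an \ABC-competitive algorithm for $P$ with respect to \PAIRETA. Applying ``$R$ is as hard as $Q$'' to $A_R$ yields an \ABC-competitive algorithm $A_Q$ for $Q$ with respect to \PAIRPHI. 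To now invoke ``$Q$ is as hard as $P$'' I need a \emph{Pareto-optimal} \ABC-competitive algorithm for $Q$, which $A_Q$ need not be. I would bridge this gap with two ingredients: first, a monotonicity observation — if an algorithm is $(\alpha',\beta',\gamma')$-competitive with $\alpha'\leqslant\alpha$, $\beta'\leqslant\beta$, $\gamma'\leqslant\gamma$ pointwise, then the same algorithm is \ABC-competitive, because the right-hand side of~\eqref{eq:competitiveness_max} (resp.~\eqref{eq:competitiveness_min}) is nondecreasing in the coefficients given that $\ALG^p$, $\OPT^c$, \ETAZERO, \ETAONE\ are nonnegative; second, the fact that the mere existence of an \ABC-competitive algorithm for $Q$ forces the existence of a Pareto-optimal one $A_Q^\star$ whose parameter tuple $(\alpha^\star,\beta^\star,\gamma^\star)$ is pointwise below \ABC\ (pick a minimal achievable tuple inside the bounded region of tuples dominated by \ABC). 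Feeding $A_Q^\star$ into ``$Q$ is as hard as $P$'' gives an $(\alpha^\star,\beta^\star,\gamma^\star)$-competitive algorithm for $P$ with respect to \PAIRETA, and monotonicity upgrades it to \ABC-competitiveness, completing the chain.

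The routine parts are the bookkeeping of which pair of error measures sits on which side of the relation across the chain, and the monotonicity remark, which I would isolate as a one-line standalone observation. The one point genuinely needing care — and the main obstacle — is the second ingredient above: passing from ``there exists \emph{an} \ABC-competitive algorithm for $Q$'' to ``there exists a \emph{Pareto-optimal}, parameter-dominated algorithm for $Q$''. When $\alpha,\beta,\gamma$ are constants this is just minimality of a bounded set of triples in $\RR^3$ under the product order; in the fully instance-dependent setting one should state the corresponding structural property of the parameter poset explicitly (or restrict to the setting used throughout the paper, in which it holds). I would prove transitivity keeping $\alpha,\beta,\gamma$ fixed along the whole chain so that this minimality argument is used exactly once.
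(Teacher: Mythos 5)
The paper cites this lemma from~\cite{BBFL24} and gives no proof of its own, so there is nothing in the paper to compare against; I evaluate your argument on its own terms. Reflexivity is right and exactly as trivial as you say. For transitivity you have the right shape: you correctly locate the mismatch (applying ``$R$ is as hard as $Q$'' yields \emph{some} $\ABC$-competitive algorithm for $Q$, whereas ``$Q$ is as hard as $P$'' wants a \emph{Pareto-optimal} one), and you propose bridging it with a monotonicity observation plus the existence of a dominated Pareto-optimal algorithm. The monotonicity observation itself is correct given $\ETAZERO,\ETAONE\geqslant 0$, $\OPT^c\geqslant 0$, and the fact that a competitive algorithm for a maximization problem never produces $-\infty$ profit (Remark~\ref{rem:competitive_algorithms_do_not_make_infeasible_solutions}), so the right-hand sides of~\eqref{eq:competitiveness_max} and~\eqref{eq:competitiveness_min} are indeed nondecreasing in the coefficients.

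The gap is the dominated-Pareto-optimal step, and it is more serious than you allow. Your justification, ``minimality of a bounded set of triples in $\RR^3$ under the product order,'' is false as stated: a bounded subset of $\RR^3$ need not contain any minimal element --- consider an achievable region like $\{(1+\tfrac{1}{n},\beta,\gamma):n\geqslant 1\}$ whose infimum $(1,\beta,\gamma)$ is not achievable (entirely plausible when the additive constants $\AT_n$ grow with $n$). If the achievable region for $Q$ has no Pareto-optimal triple dominated by the given $\ABC$, your argument breaks; and if $Q$ has no Pareto-optimal triple at all, ``$Q$ is as hard as $P$'' becomes vacuously true for every $P$, from which one can construct abstract $P,Q,R$ satisfying the two hypotheses of transitivity while violating its conclusion. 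So the structural property you wave toward in your final sentence (``restrict to the setting used throughout the paper, in which it holds'') is not bookkeeping: it is the entire content of the transitivity claim, and it must be stated and proved --- for instance, that the achievable parameter region is closed, or that the dominance order on achievable triples is well-founded. As written, this step is the missing idea, and the proof is incomplete without it.
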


A main ingredient in the complexity theory from~\cite{BBFL25} is the notion of \emph{strict online reductions}.
For any two online problems $P$ and $Q$, a strict online reduction from $P$ to $Q$ is comprised of two maps; one mapping algorithms for $Q$ to algorithms of $P$, and one mapping instances of $P$ into instances of $Q$.
The mapping of algorithms should be competitiveness preserving, in the sense that an $\ABC$-competitive algorithm for $Q$ maps to an $\ABC$-competitive algorithm for $P$.
In most cases, the algorithm for $P$, say $\ALG$, uses the algorithm for $Q$, say $\ALG'$, as a subroutine, meaning that $\ALG$ gives requests to $\ALG'$ the answers to which $\ALG$ uses to produce its own solution.
The mapping instances of $P$ to instances of $Q$ controls which requests $\ALG$ has given to $\ALG'$, and is mainly used for the analysis.
Observe that the existence of a reduction from $P$ to $Q$ implies that $Q$ is as hard as $P$.
In~\cite{BBFL25}, the authors 
introduce a \emph{reduction template} that provides a method for creating a special type of strict online reductions.

For any $t \in \ZZ^+ \cup \{\infty\}$, the basis of the complexity classes, $\CCWM{\PAIRETAFORCC}{t}$, from~\cite{BBFL25} is \emph{$(1,t)$-Asymmetric String Guessing with Unknown History and Predictions} ($\ASG{t}$).
Given a bitstring, $x$, the task of an algorithm for $\ASG{t}$ is to guess the contents of $x$.
A request for $\ASG{t}$, $r_i$, is a prompt for guessing the next bit, $x_i$.
Together with $r_i$, algorithms for $\ASG{t}$ receive a prediction, $\hat{x}_i$, that predicts $x_i$.
When the algorithm has given a guess, $y_i$, for each bit, $x_i$, in $x$, the full contents of $x$ is revealed to the algorithm.
For any instance $I = (x,\hat{x},r) \in \INSTANCES{\ASG{t}}$, the cost of an algorithm, $\ALG \in \ALGS{\ASG{t}}$, is given by
\begin{align}\label{obj:asg_t}
\ALG(I) = \sum_{i=1}^n \left( y_i + t \cdot x_i \cdot (1 - y_i) \right),
\end{align}
and the goal is to minimize the cost.
Observe that for any $I \in \INSTANCES{\ASG{t}}$, $\OPT(I) = \sum_{i=1}^n x_i$. 

For any $t \in \ZZ^+ \cup \{\infty\}$ and any pair of error measures $\PAIRETA$, the complexity class $\CCWM{\PAIRETAFORCC}{t}$ is defined as the closure of $\ASG{t}$ under the as-hard-as relation. 
Hence, for an online problem, $P$,
\begin{itemize}[label = {-}]
\item $P \in \CCWM{\PAIRETAFORCC}{t}$ if $\ASG{t}$ is as hard as $P$, 
\item $P$ is $\CCWM{\PAIRETAFORCC}{t}$-hard if $P$ is as hard as $\ASG{t}$, and
\item $P$ is $\CCWM{\PAIRETAFORCC}{t}$-complete if $P \in \CCWM{\PAIRETAFORCC}{t}$ and $P$ is $\CCWM{\PAIRETAFORCC}{t}$-hard. 
\end{itemize}

We remark an important detail concerning this complexity setup:




\section{Relative Hardness of Maximization Problems}\label{sec:a_collection_of_problems_from_CCWM_max}

In this section, we consider different online maximization problems, and compare their hardness to the hardness of Online $t$-Bounded Degree Independent Set with Predictions ($\BDIS{t}$).
We start by giving a formal definition of $\BDIS{t}$, then we define our main tool for comparing hardness, called \emph{strict online max-reductions}, and then we prove the existence of a number of strict online max-reductions that compare the hardness of various online maximization problems to the hardness of $\BDIS{t}$.
%
%

\subsection{Independent Set}

Independent Set is one of Karp's original 21 NP-complete problems~\cite{K72}, and it has been studied extensively and in many variants through the years~\cite{HIMT02,DKK12,BDSW24,HR97,BFKM17}.
Given a graph, $G=(V,E)$, the goal of an algorithm for Independent Set is to select a subset of vertices $V_A \subseteq V$ such that for all $u,v \in V_A$, the edge $(u,v)$ is not in $E$.
The goal is to maximize the size of $V_A$.

\begin{definition}\label{def:bdis_t}
\emph{Online $t$-Bounded Degree Independent Set with Predictions ($\BDIS{t}$)} is a vertex-arrival problem.
Given a request, $r_i$, and the prediction $\hat{x}_i$, an algorithm outputs $y_i = 0$ to add the vertex $v_i$ to its independent set, and
$\{v_i \mid x_i = 0\}$ is an optimal independent set. 
For an algorithm's solution to be feasible, it must output an independent set, meaning that
\begin{align*}
y_i = 1 \vee y_j = 1, \mbox{ for each edge } (v_i,v_j) \in E.
\end{align*}
The profit of $\ALG \in \ALGS{\BDIS{t}}$ on instance $I \in \INSTANCES{\BDIS{t}}$ is 
\begin{align}\label{obj:bdis_t}
\ALG(I) = \sum_{i=1}^n (1-y_i),
\end{align}
and the goal is to maximize the profit.
For $t = \infty$, we abbreviate $\BDIS{\infty}$ by $\IS$.
\end{definition}

\subsection{Strict Online Max-Reductions}

We define \emph{strict online max-reductions}, which serves the same purpose as the strict online reductions from~\cite{BBFL25}, except that they are designed for comparing the hardness of online maximization problems instead of online minimization problems.
The purpose of strict online max-reductions is to compare the competitiveness of algorithms for an online problem, $P$, to the competitiveness of algorithms for a different problem, $Q$. 
Intuitively, strict online max-reductions are mappings from, say, $P$ to $Q$, and the existence of one such tells us that $Q$ is as hard as $P$.
To lighten notation, we will usually mark algorithms, instances, and optimal algorithms for the target problem (in this setting $Q$) with a apostrophe, e.g., $\ALG'$, $I'$, and $\OPT'$.

\begin{definition}\label{def:online_max-reduction}
Let $P$ and $Q$ be online maximization problems with predictions, and let $\PAIRETA$ and $\PAIRPHI$ be error measures for the predictions in $P$ and $Q$, respectively.
Let $\rho = (\ORALG{\rho},\ORTRANS{\rho})$ be a tuple consisting of two maps, $\ORALG{\rho}\colon\ALGS{Q}\rightarrow\ALGS{P}$ and $\ORTRANS{\rho}\colon\ALGS{Q} \times \INSTANCES{P} \rightarrow \INSTANCES{Q}$.

If there exists a constant, $a$, called the \emph{reduction term} of $\rho$, such that for each instances $I \in \INSTANCES{P}$ and each algorithm $\ALG' \in \ALGS{Q}$, letting $\ALG = \ORALG{\rho}(\ALG')$ and $I' = \ORTRANS{\rho}(\ALG', I)$, 
%
\begin{enumerate}[label = {(R\arabic*)}]
\item $\ALG'(I') \leqslant \ALG(I)$, \label{item:OR_condition_ALG}
\item $\OPT(I) \leqslant \OPT'(I') + a$, \label{item:OR_condition_OPT_1}
\item $\varphi_0(I') \leqslant \eta_0(I)$, and $\varphi_1(I') \leqslant \eta_1(I)$, \label{item:OR_condition_eta}
\end{enumerate}
then $\rho$ is called a \emph{strict online max-reduction from $P$ to $Q$ with respect to $\PAIRETA$ and $\PAIRPHI$.}
If $\PAIRETA$ and $\PAIRPHI$ are clear from the context, then we simply say that $\rho$ is a \emph{strict online max-reduction from $P$ to $Q$}, and write $\rho \colon P \orarrow Q$.
\end{definition}

For brevity, we often refer to strict online max-reductions simply as max-reductions.
We show that max-reductions can be used as a tool to prove as hard as relations.

\vspace{0.2cm}
\begin{lemma}\label{lem:why_the_naming}
Let $\rho \colon P \orarrow Q$ be a max-reduction as in Definition~\ref{def:online_max-reduction}, let $\ALG' \in \ALGS{Q}$ be an $\ABC$-competitive algorithm for $Q$ with respect to $\PAIRPHI$, and let $\ALG = \ORALG{\rho}(\ALG')$.
Then, $\ALG$ is an $\ABC$-competitive algorithm for $P$ with respect to $\PAIRETA$.
\end{lemma}
\begin{proof}
Since $\ALG'$ is $(\alpha,\beta,\gamma)$-competitive for $Q$ with respect to $\PAIRETA$, there exists $\AT \in \RR$ such that for any $I \in \INSTANCES{P}$, with $I' = \ORTRANS{\rho}(\ALG',I)$, we have that
\begin{align*}
\OPT(I) \overset{\ref{item:OR_condition_OPT_1}}{\leqslant} &\OPT'(I') + a \\
\overset{\textcolor{white}{(R0)} }{\leqslant} &\alpha \cdot \ALG'(I') + \beta \cdot \PHIZERO(I') + \gamma \cdot \PHIONE(I') + \AT + a \\
\overset{\ref{item:OR_condition_ALG}}{\leqslant} &\alpha \cdot \ALG(I) + \beta \cdot \PHIZERO(I') + \gamma \cdot \PHIONE(I') + \AT + a  \\
\overset{\ref{item:OR_condition_eta}}{\leqslant} &\alpha \cdot \ALG(I) + \beta \cdot \ETAZERO(I) + \gamma \cdot \ETAONE(I) + \AT + a,
\end{align*}
where the second inequality uses the $\ABC$-competitiveness of $\ALG'$ (see Definition~\ref{def:competitiveness}).
Hence, $\ALG$ is an $\ABC$-competitive algorithm for $P$ with respect to $\PAIRETA$. 
\end{proof}

To get some intuition for this complexity framework, we start by giving a relatively straight forward reduction, which proves that $\BDIS{t}$ is as hard as an online variant of Set Packing.
For those interested, there are several instructional strict online reductions in~\cite{BBFL25}. 

For the remainder of this section, we prove the existence of several max-reductions.
The full details of two proofs cannot be given before Section~\ref{sec:missing_proofs}, but, for completeness, we have included the statements of the results in this section.
The algorithmic and complexity theoretical consequences of the existence of these max-reductions will be discussed in Section~\ref{sec:consequences}.

\subsection{Set Packing} 

Given a collection of finite sets, $\mathcal{S}$, an algorithm for Set Packing must select a subset $\mathcal{S}_\textsc{a} \subseteq \mathcal{S}$ such that all sets in $\mathcal{S}_\textsc{a}$ are mutually disjoint.
The profit of the solution is the size of $\mathcal{S}_\textsc{a}$ and the goal is to maximize the profit. 

Similarly to Independent Set, Set Packing is one of Karp's original 21 NP-complete problems~\cite{K72}.
Moreover, Set Packing has previously been considered in different variations both in the context of approximation algorithms~\cite{H00,S02} and online algorithms~\cite{EHMPRR12}.

We consider an online variant of Set Packing with predictions:

\begin{definition}\label{def:sp_t}
A request, $r_i$, for \emph{Online $t$-Bounded Set-Arrival Set Packing with Predictions ($\SP{t}$)} is a finite set $S_i$.
Instances for $\SP{t}$ satisfy that any requested set intersects at most $t$ other requested sets.
An algorithm, $\ALG \in \ALGS{\SP{t}}$, outputs $y_i = 0$ to include $S_i$ into its solution, and $\{S_i \mid x_i = 0\}$ is an optimal solution.
For an algorithm's solution to be feasible, all accepted sets must be disjoint, i.e.,
\begin{align*}
y_i = 1 \vee y_j = 1, \mbox{ for each pair } (S_i,S_j) \mbox{ of distinct subsets such that } S_i \cap S_j \neq \emptyset.
\end{align*}
Given an instance $I \in \INSTANCES{\SP{t}}$, the profit of $\ALG$ is
\begin{align*} 
\ALG(I) = \sum_{i=1}^n (1-y_i).
\end{align*}
We abbreviate $\SP{\infty}$ by $\SETPACKING$. 
\end{definition}


The following max-reduction uses a known connection between Set Packing and Independent Set~\cite{K72,GJ90}.
In brief, given a collection of finite sets $\mathcal{S}$ for Set Packing, we create a vertex, $v_S$, for each $S \in \mathcal{S}$, and we include an edge between $v_{S_1}$ and $v_{S_2}$ if $S_1 \cap S_2 \neq \emptyset$. 
Then, a maximum independent set in the graph corresponds to a maximum collection of mutually disjoint sets from $\mathcal{S}$. 

%
%
%
%

\vspace{0.2cm}
\begin{lemma}\label{lem:sp_t_membership}
$\BDIS{t}$ is as hard as $\SP{t}$ for all $t \in \ZZ^+ \cup \{\infty\}$ with respect to all pairs of error measures $\PAIRETA$. 
\end{lemma}
\begin{proof}
We prove the existence of a max-reduction $\rho \colon \SP{t} \orarrow \BDIS{t}$.
To provide a consistent and structured definition and analysis of the max-reductions presented in this paper, we split relevant proof into two parts:
\begin{enumerate}[label = {(\roman*)}]
\item Defining $\ORALG{\rho}$ and $\ORTRANS{\rho}$. \label{item:sp_t_membership_definition_of_the_reduction}
\item Verifying that $\rho = \ONLRED{\rho}$ is a max-reduction. \label{item:sp_t_membership_analysis}
\end{enumerate}

\textbf{Towards~\ref{item:sp_t_membership_definition_of_the_reduction}: Defining $\boldsymbol{\ORALG{\rho}}$ and $\boldsymbol{\ORTRANS{\rho}}$}.
Let $\ALG' \in \ALGS{\BDIS{t}}$ be any algorithm for $\BDIS{t}$ and let $I = (x,\hat{x},r) \in \INSTANCES{\SP{t}}$ be any instance for $\SP{t}$. 
We define $\ALG = \ORALG{\rho}(\ALG')$ and $I' = (x',\hat{x}',r') = \ORTRANS{\rho}(\ALG',I)$ as follows.

When $\ALG$ receives a request containing a set $S_i$ with true and predicted bits $x_i$ and $\hat{x}_i$, respectively, it gives a request to $\ALG'$ containing a vertex $v_i'$ and the edges $E_i' = \{(v_j',v_i') \mid \mbox{$j < i$ and $S_i \cap S_j \neq \emptyset$}\}$ with true and predicted bits $x_i' = x_i$ and $\hat{x}_i' = \hat{x}_i$, respectively. 
If $\ALG'$ accepts $v_i'$, then $\ALG$ accepts $S_i$. 

\textbf{Towards~\ref{item:sp_t_membership_analysis}: Verifying that $\boldsymbol{\rho = \ONLRED{\rho}}$ is a max-reduction.}
Let $G' = (V',E')$ be the underlying graph of $I'$.
Since any set from $I$ intersects at most $t$ other sets from $I$, then $\MAXDEGREE(G') \leqslant t$.
Moreover, by construction, $\mathcal{S}_\textsc{a} = \{S_{i_1},S_{i_2},\ldots,S_{i_k}\}$ is a set of mutually disjoint subsets from $I$ if and only if $V = \{v_{i_1}',v_{i_2}',\ldots,v_{i_k}'\} \subseteq V'$ is an independent set in $G'$.
%
%
%
Hence, $I'$ is a valid instance since $x' = x$ encodes an optimal independent set of $G'$. 

Hence, $\OPT'(I') = \OPT(I)$, for all $I \in \INSTANCES{\SP{t}}$, such that~\ref{item:OR_condition_OPT_1} from Definition~\ref{def:online_max-reduction} is satisfied with $a=0$.
Moreover, since $x' = x$ and $\hat{x}' = \hat{x}$,~\ref{item:OR_condition_eta} is also satisfied.
Finally, by construction, $\ALG'(I') = \ALG(I)$, for all $I \in \INSTANCES{\SP{t}}$,
implying that~\ref{item:OR_condition_ALG} is also satisfied, and so $\rho$ is a max-reduction. 
\end{proof}

The connection between $\BDIS{t}$ and $\SP{t}$ used in the above proof
%
cannot be used directly to create a max-reduction from $\BDIS{t}$ to $\SP{t}$.
In parituclar, when a vertex $v$ arrives, we cannot know which future vertices will be incident to $v$, and therefore we cannot create the set corresponding to $v$ online.
We do, however, use a strategy that is inspired by this idea to create a max-reduction from $\BDIS{t}$ to $\SP{t}$ for all $t \in \ZZ^+$. 

\vspace{0.2cm}
\begin{lemma}\label{lem:sp_t_hardness}
$\SP{t}$ is as hard as $\BDIS{t}$ for all $t \in \ZZ^+$ with respect to all pairs of error measures $\PAIRETA$. 
\end{lemma}
\begin{proof}
We prove the existence of a max-reduction $\rho \colon \BDIS{t} \orarrow \SP{t}$. 
Similarly to the proof of Lemma~\ref{lem:sp_t_membership}, we split the proof into two parts:
\begin{enumerate}[label = {(\roman*)}]
\item Defining $\ORALG{\rho}$ and $\ORTRANS{\rho}$. \label{item:sp_t_hardness_definition_of_the_reduction}
\item Verifying that $\rho = \ONLRED{\rho}$ is a max-reduction. \label{item:sp_t_hardness_analysis}
\end{enumerate}

\textbf{Towards~\ref{item:sp_t_hardness_definition_of_the_reduction}: Defining $\boldsymbol{\ORALG{\rho}}$ and $\boldsymbol{\ORTRANS{\rho}}$.}
Let $\ALG' \in \ALGS{\SP{t}}$ and let $I = (x,\hat{x},r) \in \INSTANCES{\BDIS{t}}$. 
We define $\ALG = \ORALG{\rho}(\ALG')$ and $I' = (x',\hat{x}',r') = \ORTRANS{\rho}(\ALG',I)$ as follows. 

Let $G = (V,E)$ be the underlying graph of $I$.
Recall that a request, $r_i$, for $\BDIS{t}$ contains a vertex, $v_i$, and a collection of edges to previously revealed vertices, $E_i$.
For the rest of this proof, we let $G_i = (\{v_1,v_2,\ldots,v_i\},E_1 \cup E_2 \cup \cdots \cup E_i)$ be the part of $G$ that is known to $\ALG'$ after the first $i$ requests have been given. 
Finally, for $j < i$, we abbreviate $d_{G_i}(v_j)$ by $d_i(v_j)$.

When $\ALG$ receives a request containing the vertex $v_i$ and the set of edges $E_i = \{(v_{j_1},v_i),(v_{j_2},v_i),\ldots,(v_{j_l},v_i)\}$, for some $l \leqslant t$, with true and predicted bits $x_i$ and $\hat{x}_i$, respectively, it gives a request for $\ALG'$ containing the set
\begin{align*}
S_i = \{F_i^{l+1},F_i^{l + 2},\ldots,F_i^t\} \cup \left( \bigcup_{k=1}^{l} \left\{F_{j_k}^{d_i(v_{j_k})}\right\} \right),
\end{align*}
with true and predicted bit $x_i' = x_i$ and $\hat{x}_i' = \hat{x}_i$, respectively.
Then, $\ALG$ accepts $v_i$ if and only if $\ALG'$ accepts $S_i$. 
We give an example of this reduction in Figure~\ref{fig:example_reduction_from_bdist_to_spt}.

\begin{figure}
\centering
\begin{tikzpicture}

\node[vertex, xshift = 3cm] (v1) at (0,0) {$v_1$};
\node[vertex, xshift = 3cm] (v2) at (0,-1.5) {$v_2$};
\node[vertex, xshift = 3cm] (v3) at (3,-1.5) {$v_3$};
\node[vertex, xshift = 3cm] (v4) at (-1.5,-0.75) {$v_4$};
\node[vertex, xshift = 3cm] (v5) at (1.5,0.75) {$v_5$};
\node[vertex, xshift = 3cm] (v6) at (3,0) {$v_6$};

\draw (v1) -- (v2);
\draw (v1) -- (v4);
\draw (v1) -- (v5);
\draw (v1) -- (v6);
\draw (v2) -- (v3);
\draw (v2) -- (v4);
\draw (v2) -- (v6);
\draw (v3) -- (v6);

\node[yshift = -4cm, xshift = -1.25cm, anchor = west] (r1) at (0,1) {$r_1 = (v_1,E_1 = \emptyset)$};
\node[yshift = -4cm, xshift = -1.25cm, anchor = west] (r2) at (0,0.5) {$r_2 = (v_2,E_2 = \{(v_1,v_2)\})$};
\node[yshift = -4cm, xshift = -1.25cm, anchor = west] (r3) at (0,0) {$r_3 = (v_3,E_3 = \{(v_2,v_3)\})$};
\node[yshift = -4cm, xshift = -1.25cm, anchor = west] (r4) at (0,-0.5) {$r_4 = (v_4,E_4 = \{(v_1,v_4),(v_2,v_4)\})$};
\node[yshift = -4cm, xshift = -1.25cm, anchor = west] (r5) at (0,-1) {$r_5 = (v_5,E_5 = \{(v_1,v_5)\})$};
\node[yshift = -4cm, xshift = -1.25cm, anchor = west] (r6) at (0,-1.5) {$r_6 = (v_6,E_6 = \{(v_1,v_6),(v_2,v_6),(v_3,v_6)\})$};

\node[yshift = -4cm, xshift = 5cm, anchor = west] (s1) at (0,1) {$S_1 = \{F_1^1,F_1^2,F_1^3,F_1^4\}$};
\node[yshift = -4cm, xshift = 5cm, anchor = west] (s2) at (0,0.5) {$S_2 = \{F_2^2,F_2^3,F_2^4,F_1^1\}$};
\node[yshift = -4cm, xshift = 5cm, anchor = west] (s3) at (0,0) {$S_3 = \{F_3^2,F_3^3,F_3^4,F_2^2\}$};
\node[yshift = -4cm, xshift = 5cm, anchor = west] (s4) at (0,-0.5) {$S_4 = \{F_4^3,F_4^4,F_1^2,F_2^3\}$};
\node[yshift = -4cm, xshift = 5cm, anchor = west] (s5) at (0,-1) {$S_5 = \{F_5^2,F_5^3,F_5^4,F_1^3\}$};
\node[yshift = -4cm, xshift = 5cm, anchor = west] (s6) at (0,-1.5) {$S_6 = \{F_6^4,F_1^4,F_2^4,F_3^2\}$};

\draw[->] (r1) -- (s1);
\draw[->] (r2) -- (s2);
\draw[->] (r3) -- (s3);
\draw[->] (r4) -- (s4);
\draw[->] (r5) -- (s5);
\draw[->] (r6) -- (s6);

\end{tikzpicture}
\caption{An example of the reduction from $\BDIS{t}$ to $\SP{t}$ with $t = 4$ from Lemma~\ref{lem:sp_t_hardness}.
On the top we have the underlying graph of a $\BDIS{t}$ instance, and on the bottom we translate the requests of the $\BDIS{t}$ instance into the requested sets for the corresponding $\SP{t}$ instance.}
\label{fig:example_reduction_from_bdist_to_spt}
\end{figure}

\textbf{Towards~\ref{item:sp_t_hardness_analysis}: Verifying that $\boldsymbol{\rho = \ONLRED{\rho}}$ is a max-reduction.}
Before proving that~\ref{item:OR_condition_ALG}--\ref{item:OR_condition_eta} from Definition~\ref{def:online_max-reduction} are satisfied, we give some intuition behind the definition of $S_i$. 
By construction, $S_i$ is the union of the two sets:
\begin{align*}
S_i^{\textit{future}} = \{F_i^{l+1},F_i^{l + 2},\ldots,F_i^t\} \hspace{1cm} \text{and} \hspace{1cm} S_i^{\textit{past}} = \bigcup_{k=1}^l \left\{F_{j_k}^{d_i(v_{j_k})}\right\}.
\end{align*}
The first set, $S_i^{\textit{future}}$, contains a collection of flags for future neighbours of $v_i$. 
Observe that the number of flags for $v_i$ is exactly $t - l$; the number of possible future neighbours.
The second set, $S_i^{\textit{past}}$, contains encodings of the edges in $E_i$.
In particular, when the edge $(v_{j_k},v_i) \in E_i$ is revealed, for some $j_k < i$, we want to ensure that $S_i \cap S_{j_k} \neq \emptyset$. 
To this end, we compute the degree of $v_{j_k}$ in $G_i$, $d_i(v_{j_k})$, and then we include the flag $F_{j_k}^{d_i(v_{j_k})}$ into $S_i^{\textit{past}}$.
This represents that $v_i$ is the $d_i(v_{j_k})$'th neighbour of $v_{j_k}$ in $G$.
Observe that the flag $F_{j_k}^{d_i(v_{j_k})}$ exists and is included in $S_{j_k}^{\textit{future}}$ since the degree of $v_{j_k}$ in $G_{j_k}$ is smaller than the degree of $v_{j_k}$ in $G_i$ due to $(v_{j_k},v_i)$ being revealed together with $v_i$.

The reason we cannot simplify this and only have one flag for each vertex, say $S_{j_k}^{\textit{future}} = \{F_{j_k}\}$, and then add $F_{j_k}$ to $S_i^{\textit{past}}$ when the edge $(v_{j_k},v_i)$ is revealed is as follows.
Suppose that two non-adjacent vertices, say $v_i$ and $v_p$, are both adjacent to $v_{j_k}$, with $j_k < i < p$.
Then $F_{i_j}$ would be contained in both $S_i^{\textit{past}}$ and $S_p^{\textit{past}}$.
In this way, $S_i \cap S_p \neq \emptyset$, even though $v_i$ and $v_p$ are non-adjacent in $G$, and possibly both contained in an optimal independent set.
This is not desirable since we cannot directly compare feasible solutions for $\SP{t}$ to feasible solutions for $\BDIS{t}$.
Therefore, we include a number of flags that correspond to the number of possible future neighbours to $S_i^{\textit{future}}$.
In this way, given two non-adjacent vertices that are both adjacent to $v_{j_k}$, $v_i$ and $v_p$, we place different flags for $v_{j_k}$ in $S_i^{\textit{past}}$ and $S_p^{\textit{past}}$, since the degree of $v_{j_k}$ increases as more vertices become adjacent to it.
In this way, we ensure that $S_i \cap S_p = \emptyset$, while $S_i \cap S_{j_k} \neq \emptyset$ and $S_p \cap S_{j_k} \neq \emptyset$. 

Next, we verify that $I'$ is a valid instance of $\SP{t}$.
To this end, we must check two things.
\begin{enumerate}[label = {(\alph*)}]
\item No set from $\mathcal{S}$ intersects more than $t$ other sets, and \label{item:no_set_intersects_more_than_t_other_sets}
\item $x' = x$ encodes an optimal solution to $I'$. \label{item:x_is_optimal_bdist_to_spt}
\end{enumerate}

\textbf{Towards~\ref{item:no_set_intersects_more_than_t_other_sets}:}
By construction, any item from any set from $\mathcal{S}$ is a flag from $\bigcup_{i=1}^n S_i^{\textit{future}}$.
We claim that for any $k \in \{1,2,\ldots,n\}$ and any $p \in \{1,2,\ldots,t\}$, the flag $F_k^p$ (if it exists) is contained in at most two sets in $\mathcal{S}$. 
Indeed, if $F_k^p$ exists, then $F_k^p \in S_k^{\textit{future}}$ by definition. 
Further $F_k^p$ is contained in $S_{r_1}^{\textit{past}}$, for some $r_1 > k$, if and only if $(v_k,v_{r_1}) \in E_r$ and $d_{r_1}(v_k) = p$.
Also, observe that $F_k^p$ cannot be contained in both $S_{r_1}^{\textit{past}}$ and $S_{r_2}^{\textit{past}}$, for some $k < r_1 < r_2$.
Indeed, if $F_k^p \in S_{r_1}^{\textit{past}}$ and $(v_k,v_{r_2}) \in E_{r_2}$, then $d_{r_2}(v_k) > d_{r_1}(v_k)$, since at least one more edge, namely $(v_k,v_{r_2})$, is connected to $v_k$ in $G_{r_2}$.
Hence, the flag for $v_k$ contained in $S_{r_2}^{\textit{past}}$ is marked with a larger degree of $v_k$ than $p$. 

To see that any $S_i \in \mathcal{S}$ only intersect $t$ other sets from $\mathcal{S}$, we notice that $\abs{S_i} = t$, for all $S_i \in \mathcal{S}$, and since any $F_k^l \in S_i$ is contained in at most one other set from $\mathcal{S}$, then $S_i$ cannot intersect more than $t$ other sets from $\mathcal{S}$. 

\textbf{Towards~\ref{item:x_is_optimal_bdist_to_spt}:}
We show this by arguing that $V_A = \{v_{i_1},v_{i_2},\ldots,v_{i_k}\} \subseteq V$ is an independent set if and only of $S_A = \{S_{i_1},S_{i_2},\ldots,S_{i_k}\}$ is a collection of mutually disjoint sets.

Observe that $S_A$ is not a collection of mutually disjoint sets if and only if there exists $a,b \in \{i_1,i_2,\ldots,i_k\}$ with $a < b$ such that $S_a \cap S_b \neq \emptyset$.
Now, $S_a \cap S_b \neq \emptyset$ if and only if $S_a \cap S_b = \{F_a^{d_b(v_a)}\}$, which again is true if and only if $(v_a,v_b) \in E$.
However, $(v_a,v_b) \in E$ if and only if $V_A$ is not and independent set.

The above does not only show that $I' \in \INSTANCES{\SP{t}}$, it also shows that $\OPT(I) = \OPT'(I')$, which implies that Condition~\ref{item:OR_condition_OPT_1} is satisfied with $a=0$.
Also, since $x = x'$ and $\hat{x} = \hat{x}'$, Condition~\ref{item:OR_condition_eta} is satisfied.

Therefore, it only remains to show that~\ref{item:OR_condition_ALG} is satisfied.
To see this, recall that $\ALG$ accepts a vertex $v_i$ if and only if $\ALG'$ accepts $S_i$.
This means that $\ALG(I) = \ALG'(I')$, and so Condition~\ref{item:OR_condition_ALG} is satisfied, which finishes the proof.
\end{proof}


%

\subsection{Clique} 

Given a graph, $G = (V,E)$, an algorithm for Clique must find a subset $V_\textsc{a} \subseteq V$ of vertices such that for any two distinct $v_1,v_2 \in V_\textsc{a}$, $(v_1,v_2) \in E$. 
The profit of the solution is given by the size of $V_\textsc{a}$, and the goal is to maximize the profit.

Similarly to Independent set, Clique is one of Karp's original 21 NP-complete problems~\cite{K72}, and has since been studied in different variations in the context of approximation algorithm~\cite{H99,HR97}, online algorithms~\cite{CDFN20}, and online algorithms with advice~\cite{BFKM17}.

We consider an online variant of Clique with predictions: 

\begin{definition}\label{def:cli_t}
\emph{Online $t$-Bounded Clique with Predictions ($\CLI{t}$)} is a vertex-arrival problem, where all input graphs $G=(V,E)$ satisfy that $\abs{V} - t \leqslant \MINDEGREE(G)$.
An algorithm, $\ALG \in \ALGS{\CLI{t}}$, outputs $y_i = 0$ to include $v_i$ into its clique, and $\{v_i \mid x_i = 0\}$ is an maximum clique. 
For an algorithm's solution to be feasible, it must output a clique, i.e.,
\begin{align*}
\mbox{for each pair } (v_i,v_j) \subset V_A \mbox{ of distinct vertices, } y_i = 0 \wedge y_j = 0.
\end{align*}
The profit of $\ALG \in \ALGS{\CLI{t}}$ on instance $I \in \INSTANCES{\CLI{t}}$ is
\begin{align}\label{obj:cli_t}
\ALG(I) = \sum_{i=1}^n (1-y_i),
\end{align}
and the goal is to maximize the profit.x
We abbreviate $\CLI{\infty}$ by $\CLIQUE$. 
\end{definition}

There has been previous work studying Clique on graphs with bounded minimum degree~\cite{L12}.
Parametrizing Online Clique by bounding the minimum degree on the input graphs allows us to prove that $\CLI{t}$ is exactly as hard as $\BDIS{t}$ for all $t \in \ZZ^+ \cup \{\infty\}$ and all pairs of error measures $\PAIRETA$, rather than only proving this result for $t = \infty$.

\vspace{0.2cm}
\begin{lemma}\label{lem:clique_completeness}
$\CLI{t}$ is exactly as hard as $\BDIS{t}$ for all $t \in \ZZ^+ \cup \{\infty\}$ with respect to all pairs of error measures $\PAIRETA$.
\end{lemma}
\begin{proof}
We show that $\BDIS{t}$ is as hard as $\CLI{t}$ and that $\CLI{t}$ is as hard as $\BDIS{t}$ by establishing two max-reductions $\rho \colon \CLI{t} \orarrow \BDIS{t}$ and $\tau \colon \BDIS{t} \orarrow \CLI{t}$ that holds with respect to all pairs of error measures.
The structures of $\rho$ and $\tau$ are basically identical, and therefore we only prove the existence of $\rho$.

Following the structure from previous lemmas, we split this proof into two parts:
\begin{enumerate}[label = {(\roman*)}]
\item Defining $\ORALG{\rho}$ and $\ORTRANS{\rho}$. \label{item:clique_completeness_definition_of_the_reduction}
\item Verifying that $\rho = \ONLRED{\rho}$ is a max-reduction. \label{item:clique_completeness_analysis}
\end{enumerate}

\textbf{Towards~\ref{item:clique_completeness_definition_of_the_reduction}: Defining $\boldsymbol{\ORALG{\rho}}$ and $\boldsymbol{\ORTRANS{\rho}}$.}
Let $\ALG' \in \ALGS{\BDIS{t}}$, let $I = (x,\hat{x},r) \in \INSTANCES{\CLI{t}}$, and let $G = (V,E)$ be the underlying graph of $I$.
We define $\ALG = \ORALG{\rho}(\ALG')$ and $I' = (x',\hat{x}',r') = \ORTRANS{\rho}(\ALG',I)$ as follows.

When $\ALG$ receives a request containing the vertex, $v_i$, and a collection of edges, $E_i$, together with a prediction, $\hat{x}_i$, it gives a request to $\ALG'$ containing a new vertex, $v_i'$, and the set of edges $E_i' = \{(v_j',v_i') \mid \mbox{$j < i$ and $(v_j,v_i) \not\in E_i$}\}$ with true and predicted bits $x_i' = x_i$ and $\hat{x}_i' = \hat{x}_i$, respectively. 
Then, $\ALG$ outputs the same for $v_i$ as $\ALG'$ outputs for $v_i'$. 

\textbf{Towards~\ref{item:clique_completeness_analysis}: Verifying that $\boldsymbol{\rho = \ONLRED{\rho}}$ is a max-reduction.}
By definition, the underlying graph of $I'$, denoted $G' = (V',E')$, is the compliment graph of $G$.
Since $\abs{V} - t \leqslant \MINDEGREE(G)$, then $d_{G}(v) \geqslant n - t$ for all $v \in V$, and so $d_{G'}(v') \leqslant n - (n-t) = t$ for all $v' \in V'$.
Hence, $\MAXDEGREE(G') \leqslant t$ which is required to be a valid instance of $\BDIS{t}$.
Further, since a maximum clique in $G$ encodes a maximum independent set in $G'$, then $I' \in \INSTANCES{\BDIS{t}}$, as the sequence of true bits, $x'$, encodes an optimal solution to $G'$.
Hence, $\OPT'(I') = \OPT(I)$ for all $I \in \INSTANCES{\CLI{t}}$, and so, Condition~\ref{item:OR_condition_OPT_1} from Definition~\ref{def:online_max-reduction} is satisfied with $a = 0$.
Further, since $x' = x$ and $\hat{x}' = \hat{x}$, Condition~\ref{item:OR_condition_eta} is also satisfied.
Finally, by definition of $\ALG$, it follows that, $\ALG(I) = \ALG'(I')$ for all $I \in \INSTANCES{\CLI{t}}$, and so Condition~\ref{item:OR_condition_ALG} is satisfied, which proves that $\rho$ is a max-reduction.
\end{proof}

\subsection{Interval Scheduling}\label{sec:online_interval_scheduling}

Given a collection of open intervals on the real line, $\mathcal{I}$, an algorithm for Interval Scheduling must select a subset $\mathcal{I}_\textsc{a} \subseteq \mathcal{I}$ of intervals such that no two intervals in $\mathcal{I}_\textsc{a}$ overlap.
The profit of the solution is the size of $\mathcal{I}_\textsc{a}$ and the goal is to maximize the profit.

Interval Scheduling is a polynomially solvable problem in the offline setting~\cite{KT05}, which turns out to be very hard in the online setting~\cite{LT94}.
Other work on Interval Scheduling includes online algorithms with advice~\cite{GKKKS15} and online algorithms with predictions~\cite{BFKL23,K25}.
Observe that the prediction scheme from~\cite{BFKL23} is much richer than the one considered here, and the prediction scheme from~\cite{K25} is the same as the one considered here. 

We consider an online variant of Interval Scheduling with predictions:

\begin{definition}\label{def:sch_t}
A request, $r_i$, for \emph{Online $t$-Bounded Overlap Interval Scheduling with Predictions ($\SCH{t}$)} is an open interval on the real line. 
Instances for $\SCH{t}$ satisfy that any requested interval, $r_i$, overlap at most $t$ other requested intervals.
An algorithm, $\ALG \in \ALGS{\SCH{t}}$, outputs $y_i = 0$ to accept $r_i$ into its solution and $\{r_i \mid x_i = 0\}$ is an optimal schedule.
For an algorithm's solution to be feasible, it must output a valid schedule, i.e.,
\begin{align*}
y_i = 1 \vee y_j = 1 \mbox{ for all distinct } r_i,r_j \in S_A.
\end{align*}
The profit of $\ALG \in \ALGS{\SCH{t}}$ on instance $I \in \INSTANCES{\SCH{t}}$ is
\begin{align*} 
\ALG(I) = \sum_{i=1}^n (1-y_i),
\end{align*}
and the goal is to maximuze the profit.
For $t = \infty$, we abbreviate $\SCH{t}$ by $\SCHEDULING$. 
\end{definition}


\vspace{0.2cm}
\begin{lemma}\label{lem:sch_t_to_bdis_t}
$\BDIS{t}$ is as hard as $\SCH{t}$ for all $t \in \ZZ^+ \cup \{\infty\}$ with respect to all pairs of error measures $\PAIRETA$.
\end{lemma}
\begin{proof}
We give a max-reduction $\rho \colon \SCH{t} \orarrow \BDIS{t}$. 
For consistency, we split this proof into two parts:
\begin{enumerate}[label = {(\roman*)}]
\item Defining $\ORALG{\rho}$ and $\ORTRANS{\rho}$. \label{item:sch_t_to_bdis_t_definition_of_the_reduction}
\item Verifying that $\rho = \ONLRED{\rho}$ is a max-reduction. \label{item:sch_t_to_bdis_t_completeness_analysis}
\end{enumerate}

\textbf{Towards~\ref{item:clique_completeness_definition_of_the_reduction}: Defining $\boldsymbol{\ORALG{\rho}}$ and $\boldsymbol{\ORTRANS{\rho}}$.}
Let $\ALG' \in \ALGS{\BDIS{t}}$ and let $I = (x,\hat{x},r) \in \INSTANCES{\SCH{t}}$.
We define $\ALG = \ORALG{\rho}(\ALG')$ and  $I' = (x',\hat{x}',r') = \ORTRANS{\rho}(\ALG',I) \in \INSTANCES{\BDIS{t}}$ as follows. 
When $\ALG$ receives an interval $r_i$ with true and predicted bits $x_i$ and $\hat{x}_i$, we give a request to $\ALG'$ for a new vertex, $v_i'$, with the edges $E_i' = \{ (v_j',v_i') \mid \mbox{$j < i$ and $r_j \cap r_i \neq \emptyset$}\}$.
The true and predicted bits of $v_i'$ are $x_i' = x_i$ and $\hat{x}_i' = \hat{x}_i$, respectively.
Then, $\ALG$ accepts $r_i$ if and only if $\ALG'$ accepts $v_i'$.

\textbf{Towards~\ref{item:clique_completeness_analysis}: Verifying that $\boldsymbol{\rho = \ONLRED{\rho}}$ is a max-reduction.}
We start by verifying that $I' \in \INSTANCES{\BDIS{t}}$.
To this end, let $G' = (V',E')$ be the underlying graph of $I'$, and observe that since all intervals in $I$ overlap at most $t$ other intervals we have that $\MAXDEGREE(G') \leqslant t$.
Next, observe that $\mathcal{I}_\textsc{a} = \{r_{i_1},r_{i_2},\ldots,r_{i_l}\}$ is a collection of non-overlapping intervals if and only if $V = \{v_{i_1}',v_{i_2}',\ldots,v_{i_l}'\} \subseteq V'$ is an independent set in $G$.
Hence, $x'$ encodes an optimal solution to $I'$, and so $I' \in \INSTANCES{\BDIS{t}}$.

This, in turn, implies that $\OPT(I) = \OPT'(I')$ for all $I \in \INSTANCES{\SCH{t}}$, which means that Condition~\ref{item:OR_condition_OPT_1} is satisfied with $a = 0$.
Further, since $x' = x$ and $\hat{x}' = \hat{x}$, Condition~\ref{item:OR_condition_eta} is also satisfied.
To see that Condition~\ref{item:OR_condition_ALG} is also satisfied, we observe that $\ALG(I) = \ALG'(I')$ for all $I \in \INSTANCES{\SCH{t}}$ by construction.
\end{proof}

Our next goal is to prove that $\SCH{t}$ is also as hard as $\BDIS{t}$.
Intuitively, this should not be true since $\SCH{t}$ is equivalent Online $t$-Bounded Degree Independent Set with Predictions on interval graphs; a subproblem of $\BDIS{t}$.
However, as we will see in Lemma~\ref{lem:bdis_t_to_sch_t}, in the online setting with predictions, $\SCH{t}$ is just as hard to solve as $\BDIS{t}$ with respect to the canonical pair of error measures $\PAIRMU$.
To prove this result, however, we need a technical lemma that cannot be proven until after Section~\ref{sec:independent_set_vs_asg_t}. 
For completeness, we state the result here and postpone the proof until Section~\ref{sec:proof_of_lemma_bdis_t_to_sch_t}.



\vspace{0.2cm}
\begin{restatable}{lemma}{bdisttoscht}\label{lem:bdis_t_to_sch_t}
$\SCH{t}$ is as hard as $\BDIS{t}$ for all $t \in \ZZ^+$ with respect to $\PAIRMU$.
\end{restatable}


\vspace{0.2cm}
\begin{theorem}\label{thm:sch_t_completeness}
$\SCH{t}$ is exactly as hard as $\BDIS{t}$ for all $t \in \ZZ^+$ with respect to $\PAIRMU$.
\end{theorem}

\subsection[Maximum $k$-Colorable Subgraph]{Maximum $\boldsymbol{k}$-Colorable Subgraph}

Given a graph, $G = (V,E)$, an algorithm for Maximum $k$-Colorable Subgraph must determine a subset $V_\textsc{a} \subseteq V$ of vertices such that the subgraph of $G$ induced by $V_\textsc{a}$ is $k$-colorable.
The profit of the solution is given by the size of $V_\textsc{a}$, and the goal is to maximize the profit.

Previous work has concluded that Maximum $k$-Colorable Subgraph is NP-complete~\cite{GJ90}, and different variations of the problem has since been studied in the context of approximation algorithms~\cite{N89} and online algorithms~\cite{GHM18}. 

We consider an online variant of Maximum $k$-Colorable Subgraph with predictions:

\begin{definition}\label{def:mcs_kt}
\emph{Online $t$-Bounded Maximum $k$-Colorable Subgraph with Predictions ($\MCS{k}{t}$)} is a vertex-arrival problem, where all input graphs $G = (V,E)$ satisfy that $\MAXDEGREE(G) \leqslant t$.
An algorithm, $\ALG \in \ALGS{\MCS{k}{t}}$, outputs $y_i = 0$ to include $v_i$ into its solution, and the subgraph of $G$ induced by $\{v_i \mid x_i = 0\}$ is a maximum $k$-colorable subgraph.
For an algorithm's solution to be feasible, it must induce a $k$-colorable subgraph of $G$.

The profit of $\ALG \in \ALGS{\MCS{k}{t}}$ on instance $I \in \INSTANCES{\MCS{k}{t}}$ is
\begin{align*} 
\ALG(I) = \sum_{i=1}^n (1-y_i),
\end{align*}
and the goal is to maximize the profit.
For $t = \infty$, we abbreviate $\MCS{k}{t}$ by $\MCS{k}{}$.
\end{definition}


One may observe that $\MCS{1}{t}$ is equivalent to $\BDIS{t}$.

For proving our main result on $\MCS{k}{t}$, we use a strategy that is similar to that for proving hardness of $\SCH{t}$. 
Therefore, similarly to Lemma~\ref{lem:bdis_t_to_sch_t}, we state the result here, and postpone the proof until Section~\ref{sec:proof_of_lemma_more_colors_is_easier}.
%
%

\vspace{0.2cm}
\begin{restatable}{lemma}{morecolorsiseasier}\label{lem:more_colors_is_easier}
$\BDIS{t}$ is as hard as $\MCS{k}{kt}$ for all $k,t \in \ZZ^+$ with respect to $\PAIRMU$.
\end{restatable}

\subsection{Maximum Matching} 

Given a graph, $G = (V,E)$, an algorithm for Maximum Matching must determine a subset $E_\textsc{a} \subseteq E$ of edges such that no two edges in $E_\textsc{a}$ share an endpoint.
The profit of a solution is given by the size of $E_\textsc{a}$ and the goal is to maximize the profit.

Maximum Matching is, similarly to Interval Scheduling, polynomially solvable in the offline case~\cite{E65} and has also been studied in different variations in the context of approximation algorithms~\cite{DKPU18} and online algorithms~\cite{GKMSW19,HKTWZZ20}.

We consider an online variant of Maximum Matching with predictions:

\begin{definition}\label{def:mat_t}
A request, $r_i$, for \emph{Online $t$-Bounded Maximum Matching with Predictions ($\MAT{t}$)} is an edge, $e_i$, together with its endpoints.
All input graphs for $\MAT{t}$ satisfy that $\MAXDEGREE(G) \leqslant t$.
An algorithm, $\ALG \in \ALGS{\MAT{t}}$, outputs $y_i = 0$ to include $e_i$ into its solution, and $\{e_i \mid x_i = 0\}$ is a maximum matching in $G$.
For an algorithm's solution to be feasible, it must output a maximum matching, i.e.,
\begin{align*}
y_i = 1 \vee y_j = 0 \mbox{ if $e_i$ and $e_j$ share an endpoint.}
\end{align*}

The profit of $\ALG \in \ALGS{\MAT{t}}$ on instance $I \in \INSTANCES{\MAT{t}}$ is
\begin{align*} 
\ALG(I) = \sum_{i=1}^n (1-y_i),
\end{align*}
and the goal is to maximize the profit.
For $t = \infty$, we abbreviate $\MAT{t}$ by $\MATCHING$. 
\end{definition}

\vspace{0.2cm}
\begin{lemma}\label{lem:membership_matching}
$\SP{t}$ is as hard as $\MAT{\left\lfloor t/2\right\rfloor+1}$ for all $t \in \ZZ^+ \cup \{\infty\}$ with respect to all pairs of error measures $\PAIRETA$.
\end{lemma}
\begin{proof}
For ease of notation, let $\tilde{t} = \left\lfloor \frac{t}{2} \right\rfloor + 1$.
We give a max-reduction $\rho \colon \MAT{\tilde{t}} \orarrow \SP{t}$.
For consistency, we split this proof into two parts:
\begin{enumerate}[label = {(\roman*)}]
\item Defining $\ORALG{\rho}$ and $\ORTRANS{\rho}$. \label{item:sch_t_to_bdis_t_definition_of_the_reduction}
\item Verifying that $\rho = \ONLRED{\rho}$ is a max-reduction. \label{item:sch_t_to_bdis_t_completeness_analysis}
\end{enumerate}

\textbf{Towards~\ref{item:clique_completeness_definition_of_the_reduction}: Defining $\boldsymbol{\ORALG{\rho}}$ and $\boldsymbol{\ORTRANS{\rho}}$.}
Let $\ALG' \in \ALGS{\SP{t}}$ and let $I \in \INSTANCES{\MAT{\tilde{t}}}$. 
We define $\ALG = \ORALG{\rho}(\ALG')$ and $I' = (x',\hat{x}',r') = \ORTRANS{\rho}(\ALG',I)$ as follows. 
When $\ALG$ receives an edge $e_i = (u,v)$ with true and predicted bits $x_i$ and $\hat{x}_i$, respectively, it gives a request to $\ALG'$ containing a new set $S_i = \{u,v\}$ with true and predicted bits $x_i' = x_i$ and $\hat{x}_i' = \hat{x}_i$, respectively. 
Then, $\ALG$ accepts $e_i$ if and only if $\ALG'$ accepts $S_i$.

\textbf{Towards~\ref{item:clique_completeness_analysis}: Verifying that $\boldsymbol{\rho = \ONLRED{\rho}}$ is a max-reduction.}
Let $G = (V,E)$ be the underlying graph of $I$ and let $\mathcal{S}$ be the collection of sets requested in $I'$.
Then, for any $S_i \in \mathcal{S}$, $S_i$ contain the endpoints of $e_i = (u,v)$. 
Since $\MAXDEGREE(G) \leqslant \tilde{t}$, then $d_G(u) \leqslant \tilde{t}$ and $d_G(v) \leqslant \tilde{t}$.
Since $(u,v) \in E$, then $u$ is adjacent to at most $\tilde{t}-1 = \left\lfloor\frac{t}{2}\right\rfloor$ vertices in $V \setminus \{v\}$.
Similarly, $v$ is adjacent to at most $\left\lfloor\frac{t}{2}\right\rfloor$ vertices in $V \setminus \{u\}$.
Hence, $u$ is contained in at most $\left\lfloor\frac{t}{2}\right\rfloor$ other sets than $S_i$ and $v$ is contained in at most $\left\lfloor\frac{t}{2}\right\rfloor$ other sets than $S_i$. 
Therefore $S_i$ intersects at most $t$ other sets in $\mathcal{S}$. 

By construction, it is clear that the set $\{e_{i_1},e_{i_2},\ldots,e_{i_k}\}$ is a matching if, and only if, $\{S_{i_1},S_{i_2},\ldots,S_{i_k}\}$ is a collection of mutually disjoint sets.
Thus, $I' \in \INSTANCES{\SP{t}}$ is a valid instance, and we have that $\OPT(I) = \OPT'(I')$, meaning that Condition~\ref{item:OR_condition_OPT_1} is satisfied with $a=0$.
Moreover, since $x' = x$ and $\hat{x}' = \hat{x}$, Condition~\ref{item:OR_condition_eta} is also satisfied. 
Finally, $\ALG_{\MAT{\tilde{t}}}(I) = \ALG_{\SP{t}}(\ORTRANS{\rho}(\ALG_{\SP{t}},I))$, by definition of $\ALG_{\MAT{\tilde{t}}}$, meaning that Condition~\ref{item:OR_condition_ALG} is satisfied, which concludes the proof.
\end{proof}

\subsection{Summary}

We give a summary of (a selection of) the results we have proven in Section~\ref{sec:a_collection_of_problems_from_CCWM_max}.
Since the as hard as relation is transitive, there are of course many more relations than the ones stated here.
We highlight the results that correspond to the arrows depicted in the hardness graph (Figure~\ref{fig:hardness_graph}).

\vspace{0.2cm}
\begin{theorem}\label{thm:section_3}
For all $t \in \ZZ^+$ and with respect to any pair of error measures, we have that

\begin{minipage}{\textwidth}
\begin{multicols*}{2}
\begin{enumerate}[label = {(\alph*)}]
\item $\BDIS{t}$ is exactly as hard as $\SP{t}$, \label{item:a}
\item $\BDIS{t}$ is exactly as hard as $\CLI{t}$, \label{item:b}
\item $\BDIS{t}$ is as hard as $\SCH{t}$, \label{item:c}
\item $\BDIS{t}$ is as hard as $\MAT{\left\lfloor t/2 \right\rfloor + 1}$, \label{item:d}
\item $\IS$ is as hard as $\BDIS{t}$, \label{item:e}
\item $\SETPACKING$ is as hard as $\SP{t}$, \label{item:f}
\item $\IS$ is as hard as $\SETPACKING$, \label{item:g}
\item $\CLIQUE$ is as hard as $\CLI{t}$, \label{item:h}
\item $\CLIQUE$ is exactly as hard as $\IS$, \label{item:i}
\item $\SCHEDULING$ is as hard as $\SCH{t}$, and \label{item:j}
\item $\IS$ is as hard as $\SCHEDULING$. \label{item:k}
\end{enumerate}
\end{multicols*}
\end{minipage}

Furthermore, for all $t,k \in \ZZ^+$ and with respect to $\PAIRMU$, we have that

\begin{minipage}{\textwidth}
\begin{multicols*}{2}
\begin{enumerate}[label = {(\alph*)}, start = 12]
\item $\BDIS{t}$ is exactly as hard as $\SCH{t}$, \label{item:l}
\item $\BDIS{t}$ is as hard as $\MCS{k}{kt}$ \label{item:m}
\end{enumerate}
\end{multicols*}
\end{minipage}

\end{theorem}
\begin{proof}
Items~\ref{item:a} and~\ref{item:g} follows from Lemmas~\ref{lem:sp_t_membership} and~\ref{lem:sp_t_hardness}.
Items~\ref{item:b} and~\ref{item:i} follows from Lemma~\ref{lem:clique_completeness}.
Items~\ref{item:c} and~\ref{item:k} follows from Lemma~\ref{lem:sch_t_to_bdis_t}.
Item~\ref{item:d} follows from Lemma~\ref{lem:membership_matching}.
Towards Item~\ref{item:e}, observe that $\INSTANCES{\BDIS{t}} \subset \INSTANCES{\IS}$, meaning that $\rho = \ONLRED{\rho}$ given by $\ALG = \ORALG{\rho}(\ALG)$ and $I = \ORTRANS{\rho}(\ALG,I)$, for any $\ALG \in \ALGS{\IS}$ and any $I \in \INSTANCES{\BDIS{t}}$ is a valid max-reduction. Therefore $\IS$ is as hard as $\BDIS{t}$.
Items~\ref{item:f},~\ref{item:h}, and~\ref{item:j} are proven similarly to Item~\ref{item:e}.
Item~\ref{item:l} follows from Lemmas~\ref{lem:sch_t_to_bdis_t} and~\ref{lem:bdis_t_to_sch_t}.
Finally, Item~\ref{item:m} follows from Lemma~\ref{lem:more_colors_is_easier}.
\end{proof}

\section{Comparing the Hardness of $\BDIS{t}$ and $\ASG{t}$}\label{sec:independent_set_vs_asg_t}

In this section, we give our main result which is a comparison between the hardness of $\BDIS{t}$ and the hardness of $\ASG{t}$, the base problem of $\CCWM{\PAIRMUFORCC}{t}$, while respecting that hardness is measured differently for minimization problems and maximization problems (see Definition~\ref{def:competitiveness}).

In this section we compare the profit algorithms for $\BDIS{t}$ to the cost of algorithms for $\ASG{t}$.
To avoid confusion, we denote algorithms for $\BDIS{t}$ by $\ALG_{\BDIS{t}}$ and algorithms for $\ASG{t}$ by $\ALG_{\ASG{t}}$.



\subsection{$\ASG{t}$ is as hard as $\BDIS{t}$}

First, we show that $\ASG{t}$ is as hard as $\BDIS{t}$ with respect to $\PAIRMU$, by building a reduction, $\rho$, from $\BDIS{t}$ to $\ASG{t}$.
For this, we need to recall a definition and a result from~\cite{BBFL25}:

\begin{definition}\label{def:pareto_optimal}
An $\ABC$-competitive algorithm is called \emph{Pareto-optimal} for a problem, $P$, if, for any $\eps > 0$, there cannot exist an $(\alpha-\eps,\beta.\gamma)$-, an $(\alpha,\beta-\eps,\gamma)$-, or an $(\alpha,\beta,\gamma-\eps)$-competitive algorithm for $P$. 
\end{definition}


%
%

\vspace{0.2cm}
\begin{theorem}[{\cite[Theorem 3]{BBFL25}}]\label{thm:pareto_optimal_algorithms_for_asg_t}
Any $\ABC$-competitive Pareto-optimal algorithm for $\ASG{t}$ with respect to $\PAIRMU$ has $\gamma \leqslant 1$.
\end{theorem}

The following reduction from $\BDIS{t}$ to $\ASG{t}$ will be built similarly to a max-reduction, in the sense that we define two maps $\ORALG{\rho}$ and $\ORTRANS{\rho}$, that translate algorithms for $\ASG{t}$ into algorithms for $\BDIS{t}$ and instances of $\BDIS{t}$ into instances of $\ASG{t}$.
However, since we are not reducing from a maximization (reps.\ minimization) problem to a maximization (resp.\ minimization) problem, we cannot simply define a strict online max-reduction (resp.\ strict online reduction) between the two, ass the competitiveness of algorithms for maximization problems and minimization problems are measured differently (see Definition~\ref{def:competitiveness}).
For this reason, we use a different from of analysis compared to those used in Section~\ref{sec:a_collection_of_problems_from_CCWM_max}.

\vspace{0.2cm}
\begin{lemma}\label{lem:reduction_bdis_t_to_asg_t}
$\ASG{t}$ is as hard as $\BDIS{t}$ for any $t \in \ZZ^+$ with respect to $\PAIRMU$.
\end{lemma}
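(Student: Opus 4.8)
The goal is to show $\ASG{t}$ is as hard as $\BDIS{t}$, i.e., that a Pareto-optimal $\ABC$-competitive algorithm for $\ASG{t}$ yields an $\ABC$-competitive algorithm for $\BDIS{t}$ (with respect to $\PAIRMU$ on both sides). I would build a reduction in the sense described in the excerpt: a map sending an $\BDIS{t}$-instance $I = (x,\hat x, r)$ to an $\ASG{t}$-instance, and a map sending an $\ASG{t}$-algorithm back to a $\BDIS{t}$-algorithm, and then verify the competitiveness transfers. The natural instance map reads the $\BDIS{t}$ requests online and feeds the $\ASG{t}$ algorithm the same bit-guessing prompts: when vertex $v_i$ arrives with prediction $\hat x_i$, present request $r_i$ of $\ASG{t}$ with the same prediction $\hat x_i$; the hidden string is exactly the $\BDIS{t}$ string $x$ (recall $\{v_i : x_i = 0\}$ is an optimal independent set, so the optimal $\ASG{t}$ answer to $r_i$ is $x_i$). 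The algorithm map: whenever the $\ASG{t}$ algorithm guesses $y_i$, output $y_i$ as the $\BDIS{t}$ answer — but with one correction to guarantee feasibility, namely if $y_i = 0$ (accept $v_i$) would create a conflict with a previously accepted vertex, override to $y_i = 1$ (reject). This override can only increase the $\ASG{t}$ cost contribution ($y_i$ goes $0 \to 1$ adds $1$, and drops the $t x_i(1-y_i)$ term), so it is safe for the analysis.

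The core of the proof is the cost/profit bookkeeping. First, since the $\ASG{t}$ and $\BDIS{t}$ instances share the same $x$ and $\hat x$, the error terms match exactly: $\MUZERO, \MUONE$ are the same on both instances. Next I would relate $\BDIS{t}$ profit to $\ASG{t}$ cost. With $n$ requests, $\ALG^p_{\BDIS{t}}(I) = \sum_i (1 - y_i) = n - \sum_i y_i$ (using the corrected $y_i$, which still form an independent set, so the profit is not $-\infty$), while $\OPT^p_{\BDIS{t}}(I) = \sum_i (1-x_i) = n - \sum_i x_i = n - \OPT^c_{\ASG{t}}(I)$. On the $\ASG{t}$ side, $\ALG^c(I) = \sum_i y_i + t \sum_i x_i(1-y_i)$. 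The $\ASG{t}$ guarantee gives $\sum_i x_i \leqslant \alpha\bigl(\sum_i y_i + t\sum_i x_i(1-y_i)\bigr) + \beta \MUZERO + \gamma \MUONE + \AT$. I want to deduce $n - \sum_i x_i \leqslant \alpha(n - \sum_i y_i) + \beta\MUZERO + \gamma\MUONE + \AT'$. Rearranging the target, it suffices that $\alpha \sum_i y_i - \sum_i x_i \leqslant \alpha\bigl(\sum_i y_i + t\sum_i x_i(1-y_i)\bigr) + \cdots$, i.e.\ that $-\sum_i x_i \leqslant \alpha t \sum_i x_i(1-y_i) + \cdots$, which is immediate since the left side is nonpositive and the right side is nonnegative, absorbing the $(\alpha - 1)n$ slack into a new additive constant — wait, one must be careful: $(\alpha-1)n$ is not constant. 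So instead I would chain it as $n - \sum_i x_i \leqslant n - \sum_i y_i + \text{(correction)}$ only when $\alpha \geqslant 1$; more carefully, write $\OPT^p_{\BDIS{t}} - \alpha\,\ALG^p_{\BDIS{t}} = (1-\alpha)n - \sum_i x_i + \alpha\sum_i y_i$ and bound $-\sum_i x_i + \alpha \sum_i y_i \leqslant \alpha\,\ALG^c_{\ASG{t}}(I) - \sum_i x_i \cdot(\text{something})$ using the $\ASG{t}$ inequality, ensuring the uncontrolled $n$-terms cancel because $\ALG^c_{\ASG{t}} \geqslant \sum_i y_i$ term-by-term, and handling the sign of $1-\alpha$ via the trivial bound $\ALG^p_{\BDIS{t}} \leqslant n$.

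The main obstacle I anticipate is precisely this last manipulation: making the linear-in-$n$ terms cancel cleanly so that only a genuine additive constant $\AT'$ remains, while simultaneously respecting that $\alpha$ may be either $\geqslant 1$ or (via the error terms) effectively larger, and that the feasibility override must not be charged against us. I expect the clean route is to use the $\ASG{t}$ inequality in the form $\sum_i x_i \leqslant \alpha \ALG^c_{\ASG{t}}(I) + \beta\MUZERO + \gamma\MUONE + \AT$ together with the term-wise facts $\ALG^c_{\ASG{t}}(I) \geqslant \sum_i y_i$ and $\ALG^c_{\ASG{t}}(I) \geqslant t\sum_i x_i(1-y_i) \geqslant 0$, then observe that $n - \sum_i x_i \leqslant \alpha n - \alpha\sum_i y_i$ would need $\alpha\sum_i y_i - \sum_i x_i \le (\alpha -1)n$ which holds since $\sum_i y_i \le n$ and $\sum_i x_i \ge 0$ when $\alpha \ge 1$; combined with $\beta\MUZERO + \gamma\MUONE \ge 0$ this closes the argument, and $\alpha \geqslant 1$ is guaranteed for Pareto-optimal $\ASG{t}$ algorithms by Theorem~\ref{thm:pareto_optimal_algs_for_asg_t}. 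I would present this as an explicit instance of the strict online reduction template from~\cite{BBFL24}, since the two maps are exactly of the required online form.
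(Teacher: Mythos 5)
Your high-level plan is the right one (reduce $\BDIS{t}$ to $\ASG{t}$ via an algorithm map and an instance map), but there are three concrete problems, the last of which is a genuine mathematical gap.

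First, you apply the $\ASG{t}$ competitive guarantee in the wrong direction. For a minimization problem, $\ABC$-competitiveness means $\ALG^c \leqslant \alpha\,\OPT^c + \beta\MUZERO + \gamma\MUONE + \AT$, not $\OPT^c \leqslant \alpha\,\ALG^c + \cdots$, yet you repeatedly write the latter (``$\sum_i x_i \leqslant \alpha\,\ALG^c_{\ASG{t}}(I) + \cdots$''). The whole analysis has to be redone with the correct orientation.

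Second, the step that ``closes the argument'' is circular. You assert that $\alpha\sum_i y_i - \sum_i x_i \leqslant (\alpha-1)n$ ``holds since $\sum_i y_i \leqslant n$ and $\sum_i x_i \geqslant 0$ when $\alpha\geqslant 1$.'' But $\sum_i y_i \leqslant n$ and $\sum_i x_i \geqslant 0$ only give $\alpha\sum_i y_i - \sum_i x_i \leqslant \alpha n$, which is $n$ too large. In fact, $\alpha\sum_i y_i - \sum_i x_i \leqslant (\alpha-1)n$ rearranges to $n - \sum_i x_i \leqslant \alpha\,(n - \sum_i y_i)$, i.e.\ $\OPT^p_{\BDIS{t}}(I) \leqslant \alpha\,\ALG^p_{\BDIS{t}}(I)$ — exactly the inequality you are trying to prove (without even getting help from the error terms). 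Nothing has been shown.

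Third, and most fundamentally, the underlying construction — feed \emph{every} request to the $\ASG{t}$ algorithm with the \emph{unmodified} string $x$ and just override infeasible accepts — does not give the required per-request accounting. Even with the inequality orientation corrected, the quantity one must bound is $\bigl(\OPT^p_{\BDIS{t}}(I) - \alpha\,\ALG^p_{\BDIS{t}}(I)\bigr) - \bigl(\ALG^c_{\ASG{t}}(I') - \alpha\,\OPT^c_{\ASG{t}}(I')\bigr)$, and a level-1 vertex $v_i$ that the $\ASG{t}$ algorithm correctly rejects (i.e.\ $x_i = y_i = 1$) contributes $0$ on the $\BDIS{t}$ side but $-(1-\alpha)$ on the $\ASG{t}$ side; for $\alpha > 1$ this eats into the slack, and there is no compensating term in your scheme. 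The paper's proof handles this by (a) not issuing $\ASG{t}$ requests for level-2 vertices at all (so those positions never enter the $\ASG{t}$ cost/OPT), and (b) modifying the true string $x'$ so that a correctly rejected level-1 vertex ($x_{i_j}=y_{i_j}=1$) gets $x'_{i_j}=0$; this kills the bad $1-\alpha$ contribution, at the price of possibly raising $\MUONE(I')$, which is then absorbed using the fact that any Pareto-optimal $\ASG{t}$ algorithm has $\gamma \leqslant 1$ (Remark~\ref{rem:pareto_optimal_for_asg_has_small_gamma}). Both (a) and (b), and the appeal to $\gamma \leqslant 1$, are missing from your proposal, and they are not cosmetic — without them the argument does not go through for Pareto-optimal $\ASG{t}$ algorithms with $\alpha > 1$.
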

\begin{proof}
Throughout the proof, we only measure prediction error using the canonical error measures $\PAIRMU$ defined in Equation~\eqref{eq:pairmu}. 
Therefore, we suppress notation and nomenclature related to error measures whenever possible.

We define a tuple $\rho = \ONLRED{\rho}$ that closely resembles a max-reduction from $\BDIS{t}$ to $\ASG{t}$, without actually being one.
Therefore, similarly to previous proofs, we split this proof into several parts:
\begin{enumerate}[label = {(\roman*)}]
\item \label{item:define_maps} Define two maps $\ORALG{\rho} \colon \ALGS{\ASG{t}} \rightarrow \ALGS{\BDIS{t}}$ and $\ORTRANS{\rho} \colon \ALGS{\ASG{t}} \times \INSTANCES{\BDIS{t}} \rightarrow \INSTANCES{\ASG{t}}$.
\item \label{item:pareto_optimal_reduction} We show that if $\ALG_{\ASG{t}} \in \ALGS{\ASG{t}}$ is an $\ABC$-competitive Pareto-optimal algorithm for $\ASG{t}$, then $\ALG_{\BDIS{t}} = \ORALG{\rho}(\ALG_{\ASG{t}})$ is an $\ABC$-competitive for $\BDIS{t}$.
We ensure that our reduction respects that when an algorithm is $\ABC$-competitive for $\BDIS{t}$, it satisfies Equation~\eqref{eq:competitiveness_max}, and when an algorithm is $\ABC$-competitive for $\ASG{t}$, it satisfies Equation~\eqref{eq:competitiveness_min}.
\item \label{item:extend_beyond_pareto_optimality} We show that $\ASG{t}$ is as hard as $\BDIS{t}$.
For this, we show that the existence of a (possibly non-Pareto-optimal) $\ABC$-competitive algorithm for $\ASG{t}$, implies the existence of an $\ABC$-competitive algorithm for $\BDIS{t}$ as well.
\end{enumerate}

%

\textbf{Towards~\ref{item:define_maps}: Defining $\boldsymbol{{\ORALG{\rho}}}$ and $\boldsymbol{{\ORTRANS{\rho}}}$.}
Let $\ALG_{\ASG{t}} \in \ALGS{\ASG{t}}$, let $I = (x,\hat{x},r) \in \INSTANCES{\BDIS{t}}$, and let $G$ be the underlying graph of $I$.
Then, we define $\ALG_{\BDIS{t}} = \ORALG{\rho}(\ALG_{\ASG{t}})$ and $I' = (x',\hat{x}',r') = \ORTRANS{\rho}(\ALG_{\ASG{t}},I)$ as follows.


Whenever $\ALG_{\BDIS{t}}$ receives a vertex $v_i$ with prediction $\hat{x}_i$ as part of processing $I$, it determines the \emph{level of $v_i$}, denoted $\ell(v_i)$.
If there exists a vertex $v_j$, for some $j < i$, with $y_j = 0$ and $(v_i,v_j) \in E_i$, then $\ALG_{\BDIS{t}}$ places $v_i$ on level $2$, and rejects $v_i$ (i.e.\ outputs $y_i = 1$).
Otherwise, $\ALG_{\BDIS{t}}$ places $v_i$ on level $1$, and asks $\ALG_{\ASG{t}}$ to guess the next the next bit given $\hat{x}_i$.
Then, $\ALG_{\BDIS{t}}$ outputs the same bit as $\ALG_{\ASG{t}}$.
We give an example of this reduction in Figure~\ref{fig:bdis_t_to_asg_t}. 

\begin{figure}
\centering
\begin{tikzpicture}

\def\sx{7.2} 
\def\sy{1.5} 
\def\sw{4} 
\def\sh{3} 
\def\offset{0.2}

\def\dist{3}

\def\bbx{\sx + \sw + \dist} 
\def\bby{1.75} 
\def\bbz{0} 
\def\bbsl{1.2} 

\def\vdistx{1.5}
\def\vdisty{1}

\draw[thick, fill = black!50] (\sx + \sw*0.5 - \sw*0.1, \sy-\offset) -- (\sx + \sw*0.5 - \sw*0.1,\sy - \offset - 0.3*\sh) -- (\sx + \sw*0.5 + \sw*0.1,\sy - \offset - 0.3*\sh) --(\sx + \sw*0.5 + \sw*0.1,\sy - \offset) ;
\draw[thick, fill = black!50] (\sx + \sw*0.25,\sy - \offset - 0.3*\sh) rectangle (\sx + \sw*0.5 + \sw*0.25,\sy - \offset - 0.3*\sh - 0.2);
\draw[thick, rounded corners,fill = black!75] (\sx-\offset,\sy-\offset) rectangle (\sx + \sw + \offset,\sy + \sh + \offset);
\draw[thick,fill = white] (\sx,\sy) rectangle (\sx + \sw,\sy + \sh);
\node[prompt] at (\sx + 0.1, \sy + \sh - 0.25) {$>$};

\node at (\sx + 0.5*\sw,\sy + \sh + 0.5) {$\ALG_{\ASG{t}}$};

\node at (2.5,6.25) {$\ALG_{\BDIS{t}}$};
\draw[rounded corners] (0,0) -- (0,6) -- (5,6) -- (5,0) -- cycle;
\draw[dashed] (0.2,3) -- (4.8,3);
\node[anchor = west] at (0,5.75) {Level $2$};
\node[anchor = west] at (0,0.25) {Level $1$};

\draw[dashed, ->] (5 + 0.5*\offset,3.5) -- (\sx - 1.5*\offset , 3.5);
\draw[dashed, ->] (\sx - 1.5*\offset , 2.5) -- (5 + 0.5*\offset,2.5);

\node[vertex,very thick] (v1) at (0.75,2) {$v_1$};
\node at (0.75,1.4) {$\hat{x}_1 = 0$};
\node[vertex] (v2) at (1.75,4) {$v_2$};
\node at (1.75,4.6) {$\hat{x}_2 = 1$};
\node[vertex] (v3) at (2.75,2) {$v_3$};
\node at (2.75,1.4) {$\hat{x}_3 = 1$};
\node[vertex] (v4) at (4.25,2) {$v_4$};
\node at (4.25,1.4) {$\hat{x}_4 = 1$};

\node[prompt] at (\sx + 0.1, \sy + \sh - 0.25) {$> \hat{x}_1 = 0$; Guess the next bit: {$0$}};
\node[prompt] at (\sx + 0.1, \sy + \sh - 0.75) {$> \hat{x}_3 = 1$; Guess the next bit: {$1$}};
\node[prompt] at (\sx + 0.1, \sy + \sh - 1.25) {$> \hat{x}_4 = 1$; Guess the next bit: {$1$}};


\draw (v1) -- (v2);
\draw (v2) -- (v3);
\draw (v3) -- (v4);
\draw (v2) -- (v4);

%

\end{tikzpicture}
\caption{An example of the reduction from $\BDIS{t}$ to $\ASG{t}$ from Lemma~\ref{lem:reduction_bdis_t_to_asg_t}. On the left hand side, we see the algorithm for $\BDIS{t}$ that places vertices on either level $1$ or level $2$, depending on whether it is allowed to accept the newly revealed vertex or not. On the right hand side, we see the algorithm for $\ASG{t}$ that the algorithm for $\BDIS{t}$ queries for help.}
\label{fig:bdis_t_to_asg_t}
\end{figure}

Let $\{v_{i_1},v_{i_2},\ldots,v_{i_k}\}$ be the vertices on level $1$ when $\ALG_{\BDIS{t}}$ does not receive any more requests.
Observe that for each $j=1,2,\ldots,k$, $\ALG_{\BDIS{t}}$ had the option of accepting $v_{i_j}$, and therefore it made $\ALG_{\ASG{t}}$ to guess a bit, the answer to which determined whether or not $\ALG_{\BDIS{t}}$ accepted $v_{i_j}$.

Having defined $\ALG_{\BDIS{t}}$ it only remains to define $x'$ and $\hat{x}'$ for $I'$.
We let $x_{i_j}'$ and $\hat{x}_{i_j}'$ be the true and predicted bit of the request given to $\ALG_{\ASG{t}}$ when $\ALG_{\BDIS{t}}$ received the vertex $v_{i_j}$.
For all $j=1,2,\ldots,k$ we let $\hat{x}_{i_j}' = \hat{x}_{i_j}$ and
\begin{align}\label{eq:def_of_x^f_bdis_t_to_asg_t}
x_{i_j}' = \begin{cases}
1 - x_{i_j}, &\mbox{if $x_{i_j}=y_{i_j}=1$} \\
x_{i_j}, &\mbox{otherwise.}
\end{cases}
\end{align}

Since we sometimes have that $x_{i_j} \neq x_{i_j}'$, this reduction may introduce some error. 
For later analysis, we need to bound how much error we introduce with the above reduction.

\textbf{Bounding the error of $\boldsymbol{{I'}}$:}
By definition of $I'$, the number of requests for $\ASG{t}$ may be fewer than the number of requests for $\BDIS{t}$, and so $\MU_b(I') \leqslant \MU_b(I)$, for $b \in \{0,1\}$, before taking into account new prediction errors that may be introduced in~\eqref{eq:def_of_x^f_bdis_t_to_asg_t}. 
In particular, if $x_{i_j} = y_{i_j} = 1$ then we set $x_{i_j}' = 0$, and so if $\hat{x}_{i_j} = 0$ then we change an incorrect prediction into a correct prediction, and if $\hat{x}_{i_j} = 1$ we change a correct prediction into an incorrect prediction. 
Therefore, 
\begin{align}
\MUZERO(I') &\leqslant \MUZERO(I) - \sum_{j=1}^k x_{i_j} \cdot y_{i_j} \cdot (1-\hat{x}_{i_j}) \leqslant \MUZERO(I), \label{eq:change_muzero_bdis_t_to_asg_t} \\
\MUONE(I') &\leqslant \MUONE(I) + \sum_{j=1}^k  x_{i_j} \cdot y_{i_j} \cdot \hat{x}_{i_j}. \label{eq:change_muone_bdis_t_to_asg_t}
\end{align}

This finishes the definition of the two maps $\ORALG{\rho}$ and $\ORTRANS{\rho}$ that constitute the reduction.

\textbf{Towards~\ref{item:pareto_optimal_reduction}: The existence of $\boldsymbol{\ABC}$-competitive Pareto-optimal algorithms for $\boldsymbol{\ASG{t}}$ implies the existence of $\boldsymbol{\ABC}$-competitive algorithms for $\boldsymbol{\BDIS{t}}$.}
Let $\ALG_{\ASG{t}} \in \ALGS{\ASG{t}}$ be an $\ABC$-competitive Pareto-optimal algorithm for $\ASG{t}$. 
In the following, we show that $\ORALG{\rho}(\ALG_{\ASG{t}})$ is an $\ABC$-competitive algorithm for $\BDIS{t}$, where $\ORALG{\rho}$ is the map defined above.
%
To this end, recall from Theorem~\ref{thm:pareto_optimal_algorithms_for_asg_t} that any $\ABC$-competitive Pareto-optimal algorithm for $\ASG{t}$ satisfies that $\gamma \leqslant 1$.

Given $\ALG_{\ASG{t}} \in \ALGS{\ASG{t}}$, and any instance $I \in \INSTANCES{\BDIS{t}}$, we let $\ALG_{\BDIS{t}} = \ORALG{\rho}(\ALG_{\ASG{t}})$ and $I' = \ORTRANS{\rho}(\ALG_{\ASG{t}},I)$.
We show that
\begin{align}
\alpha \cdot \OPT_{\ASG{t}}(I') + \OPT_{\BDIS{t}}(I) \leqslant\; \alpha \cdot \ALG_{\BDIS{t}}(I) + \ALG_{\ASG{t}}(I') - \sum_{j=1}^k x_{i_j} \cdot y_{i_j} \cdot \hat{x}_{i_j}. \label{eq:condition_bdis_t_to_asg_t}
\end{align}
Before verifying Equation~\eqref{eq:condition_bdis_t_to_asg_t}, we argue that the definition of $\ORALG{\rho}$ and $\ORTRANS{\rho}$ together with Equation~\eqref{eq:condition_bdis_t_to_asg_t} implies that $\ALG_{\BDIS{t}}$ is $\ABC$-competitive.

Since $\ALG_{\ASG{t}}$ is $\ABC$-competitive with additive constant $\AT$ (see Definition~\ref{def:competitiveness}), then, for any $I' \in \INSTANCES{\ASG{t}}$, we have that
\begin{align*}
-\alpha \cdot \OPT_{\ASG{t}}(I') \leqslant - \ALG_{\ASG{t}}(I') + \beta \cdot \MUZERO(I') + \gamma \cdot \MUONE(I') + \AT.
\end{align*}
Hence, by Equation~\eqref{eq:condition_bdis_t_to_asg_t} and the $\ABC$-competitiveness of $\ALG_{\ASG{t}}$, 
\begin{align*}
\OPT_{\BDIS{t}}(I) \leqslant \; &\alpha \cdot \ALG_{\BDIS{t}}(I) + \beta \cdot \MUZERO(I') + \gamma \cdot \MUONE(I') + \AT - \sum_{j=1}^k x_{i_j} \cdot y_{i_j} \cdot \hat{x}_{i_j}.
\end{align*}
Now, by Equations~\eqref{eq:change_muzero_bdis_t_to_asg_t} and~\eqref{eq:change_muone_bdis_t_to_asg_t}, and since $\ALG$ is Pareto-optimal so $\gamma \leqslant 1$,
\begin{align*}
\OPT_{\BDIS{t}}(I) \leqslant \alpha \cdot \ALG_{\BDIS{t}}(I) + \beta \cdot \MUZERO(I) + \gamma \cdot \MUONE(I) + \AT,
\end{align*}
implying that $\ALG_{\BDIS{t}}$ is indeed $\ABC$-competitive.
Therefore, to finish Part~\ref{item:pareto_optimal_reduction}, it only remains to verify that Equation~\eqref{eq:condition_bdis_t_to_asg_t} is true for any Pareto-optimal algorithm for $\ASG{t}$ and any instance of $\BDIS{t}$.

\textbf{Verifying Equation~\eqref{eq:condition_bdis_t_to_asg_t}:}
First, observe that $\ALG_{\BDIS{t}}$ can only gain profit from the vertices on level $1$. 
Therefore, 
$\ALG_{\BDIS{t}}(I) = \sum_{j=1}^k (1-y_{i_j})$.
Hence, using Equations~\eqref{obj:asg_t} and~\eqref{obj:bdis_t}, we rewrite Equation~\eqref{eq:condition_bdis_t_to_asg_t} as
\begin{align}\label{eq:new_condition_bdis_t_to_asg_t}
\underbrace{\alpha \cdot \sum_{j=1}^k x_{i_j}' + \sum_{i=1}^n (1-x_i)}_{\textsc{LHS}} \leqslant \underbrace{\alpha \cdot \sum_{j=1}^k (1-y_{i_j}) + \sum_{j=1}^k \left( y_{i_j} + t \cdot x_{i_j}' \cdot (1-y_{i_j}) \right) - \sum_{j=1}^k x_{i_j} \cdot y_{i_j} \cdot \hat{x}_{i_j}}_{\textsc{RHS}}.
\end{align}

We show that Equation~\eqref{eq:new_condition_bdis_t_to_asg_t} is satisfied at each step during the processing of $\ALG_{\BDIS{t}}$. 
Clearly, when no requests have been given to $\ALG_{\BDIS{t}}$, Equation~\eqref{eq:new_condition_bdis_t_to_asg_t} is satisfied as both LHS and RHS evaluate to $0$. 

Next, assume that Equation~\eqref{eq:new_condition_bdis_t_to_asg_t} is satisfied after $i-1$ vertices has been revealed to $\ALG_{\BDIS{t}}$, and suppose that the $i$'th vertex, $v_i$, has just been revealed.
We split the rest of this argument into six subcases:

\begin{enumerate}[label = {(\alph*)}]
\item \textit{Case $\mathit{\ell(v_i) = 1}$ and $\mathit{x_i = y_i = 1}$:} 
Since $\ell(v_i) = 1$, there exists $j \in \{1,2,\ldots,k\}$ such that $i = i_j$.
Further, as $x_{i_j} = y_{i_j} = 1$ then $x_{i_j}' = 1 - x_{i_j} = 0$. 
Therefore, LHS does not increase, and since $y_{i_j} = 1$ RHS increases by $1$ when $\hat{x}_{i_j} = 0$ and by $0$ otherwise.
Hence, the inequality in Equation~\eqref{eq:new_condition_bdis_t_to_asg_t} remains satisfied.

\item \textit{Case $\mathit{\ell(v_i) = 1}$, $\mathit{x_i = 1}$ and $\mathit{y_i = 0}$:} \label{item:cases_bdis_t_to_asg_t_110}
Since $\ell(v_i) = 1$, there is some $j \in \{1,2,\ldots,k\}$ such that $i = i_j$.
Further, since $x_{i_j} \neq y_{i_j}$, $x_{i_j}' = x_{i_j} = 1$.
Hence, LHS increases by $\alpha$.
Further, since $y_{i_j} = 0$ and $x_{i_j}' = 1$, RHS increases by $\alpha + t$ and so the inequality in Equation~\eqref{eq:new_condition_bdis_t_to_asg_t} remains satisfied.
Observe that RHS increase by $t$ more than LHS.
We will use this observation later.

\item \textit{Case $\mathit{\ell(v_i) = 1}$ and $\mathit{x_i = y_i = 0}$:} 
Since $\ell(v_i) = 1$, there exist $j \in \{1,2,\ldots,k\}$ such that $i = i_j$.
Since $x_{i_j} = 0$, then $x_{i_j}' = 0$.
Therefore, LHS increases by $1$.
Since $y_{i_j} = 0$ and $x_{i_j}' = 0$, RHS increases by $\alpha$.
Hence, the inequality in Equation~\eqref{eq:new_condition_bdis_t_to_asg_t} remains satisfied.

\item \textit{Case $\mathit{\ell(v_i) = 1}$, $\mathit{x_i = 0}$ and $\mathit{y_i = 1}$:} 
Since $\ell(v_i) = 1$, there is some $j \in \{1,2,\ldots,k\}$ such that $i = i_j$.
Since $x_{i_j} = 0$ then $x_{i_j}' = 0$.
Therefore, LHS increases by $1$.
Since $y_{i_j} = 1$, RHS increases by $1$, and so the inequality in Equation~\eqref{eq:new_condition_bdis_t_to_asg_t} remains satisfied.

\item \textit{Case $\mathit{\ell(v_i) = 2}$ and $\mathit{x_i = 1}$:} 
Since $\ell(v_i) = 2$ then $y_i = 1$, and for all $j \in \{1,2,\ldots,k\}$, $i_j \neq i$.
Since $x_i = 1$, neither LHS nor RHS increases by anything and therefore the inequality in Equation~\eqref{eq:new_condition_bdis_t_to_asg_t} remains satisfied.

\item \textit{Case $\mathit{\ell(v_i) = 2}$ and $\mathit{x_i = 0}$:} 
Since $\ell(v_i) = 2$ then $y_i = 1$, and for all $j \in \{1,2,\ldots,k\}$, $i_j \neq i$.
Since $x_i = 0$, then LHS increase by $1$, and since there does not exist $j$ such that $i_j = i$, then RHS is unchanged.

In this case, there exist a smallest $j \in \{1,2,\ldots,k\}$ with $i_j < i$ such that $(v_{i_j},v_i) \in E$ and $y_{i_j}= 0$, as otherwise $\ell(v_i)$ would have been $1$.
Since $x_i = 0$ and $(v_{i_j},v_i) \in E$, then $x_{i_j} = 1$ as otherwise $x$ encodes an infeasible solution.
Since $y_{i_j} = 0$ and $x_{i_j} = 1$, we were in case~\ref{item:cases_bdis_t_to_asg_t_110} when $v_{i_j}$ was revealed, which left RHS $t$ larger than LHS.
Therefore, we can assign the profit of $1$ gained by $\OPT_{\BDIS{t}}$ on the $i$'th request to the cost incurred by $\ALG_{\ASG{t}}$ on the $i_j$'th request.
Since $d_G(v_{i_j}) \leqslant t$, $v_{i_j}$ has at most $t$ neighbours in $G$ and so we do this accounting at most $t$ times for $v_{i_j}$, corresponding to the amount that the RHS is larger than the LHS after accounting for $v_{i_j}$.
Hence, the inequality in Equation~\eqref{eq:new_condition_bdis_t_to_asg_t} remains satisfied.
\end{enumerate}
This verifies Equation~\eqref{eq:new_condition_bdis_t_to_asg_t} and thus Equation~\eqref{eq:condition_bdis_t_to_asg_t}, which finishes Part~\ref{item:pareto_optimal_reduction}.

\textbf{Towards~\ref{item:extend_beyond_pareto_optimality}: Proving that $\boldsymbol{\ASG{t}}$ is as hard as $\boldsymbol{\BDIS{t}}$.}
Let $\ALG_{\ASG{t}} \in \ALGS{\ASG{t}}$ be any $\ABC$-competitive algorithm for $\ASG{t}$. 
If $\ALG_{\ASG{t}}$ is Pareto-optimal, Part~\ref{item:pareto_optimal_reduction} implies that $\ORALG{\rho}(\ALG_{\ASG{t}})$ is an $\ABC$-competitive algorithm for $\BDIS{t}$, and we are done.

If, on the other hand, $\ALG_{\ASG{t}}$ is not Pareto-optimal, then, by the definition of Pareto-optimality, there exists an $(\alpha',\beta',\gamma')$-competitive Pareto optimal algorithm for $\ASG{t}$, $\PAR_{\ASG{t}}$, such that $\alpha' \leqslant \alpha$, $\beta' \leqslant \beta$, and $\gamma' \leqslant \gamma$, with at least one of the inequalities being strict.
Then, by Item~\ref{item:pareto_optimal_reduction}, $\ORALG{\rho}(\PAR_{\ASG{t}})$ is an $(\alpha',\beta',\gamma')$-competitive algorithm for $\BDIS{t}$.
To see that $\ORALG{\rho}(\PAR_{\ASG{t}})$ is also $\ABC$-competitive, we observe that for any instance $I \in \INSTANCES{\BDIS{t}}$, we have that
\begin{align*}
\OPT_{\BDIS{t}}(I) &\leqslant \alpha' \cdot \ORALG{\rho}(\PAR_{\ASG{t}})(I) + \beta' \cdot \MUZERO(I) + \gamma' \cdot \MUONE(I) + \AT \\
&\leqslant \alpha \cdot \ORALG{\rho}(\PAR_{\ASG{t}})(I) + \beta \cdot \MUZERO(I) + \gamma \cdot \MUONE(I) + \AT.
\end{align*}
Hence, the existence of any $\ABC$-competitive algorithm for $\ASG{t}$ implies the existence of an $\ABC$-competitive algorithm for $\BDIS{t}$ implying that $\ASG{t}$ is as hard as $\BDIS{t}$, which finishes the proof.
\end{proof}

\subsection{$\BDIS{t}$ is as hard as $\ASG{t}$}

Next, we build a reduction in the other direction.
This reduction is inspired by the reduction template from~\cite{BBFL25}.
For this reduction, we need two technical lemmas, that imply a positive result on the competitiveness of Pareto-optimal algorithms for $\BDIS{t}$.
The first of these is inspired by a similar result for $\ASG{t}$ from~\cite{BBFL25}.

\vspace{0.2cm}
\begin{lemma}\label{lem:small_alpha_imply_large_gamma}
Let $\ALG$ be an $\ABC$-competitive algorithm for $\BDIS{t}$ with respect to $\PAIRMU$.
If $\alpha < t$, then $\gamma \geqslant 1$.
\end{lemma}
\begin{proof}
Assume towards contradiction that $\ALG$ is $\ABC$-competitive where $\alpha = t-\eps$ and $\gamma < 1$, for some $\eps > 0$.

We define two gadgets that are both small instances:
\begin{itemize}
\item $G^0_i$ contains one vertex $v_i$ with $\hat{x}_i = 1$ and $x_i = 0$.
\item $G^1_i$ contains $t+1$ vertices $v_i,w_{i,1},w_{i,2},\ldots,w_{i,t}$ and have $\hat{x}_i = x_i = 1$ and $\hat{x}_{i,j} = x_{i,j} = 0$, for all $j=1,2,\ldots,t$. 
The vertex $v_i$ is revealed first, and then the vertices $w_{i,j}$ are revealed one by one in increasing order by $j$.
\end{itemize}
Observe that $\OPT(G^0_i) = 1$ and $\OPT(G^1_i) = t$.
Further, observe that all $w_{i,j}$'s are predicted correctly, so any possible prediction error can be measured by comparing the $x_i$'s to the $\hat{x}_i$'s.
Finally, for ease of notation, we let $y_i$ be $\ALG$'s output to $v_i$. 
Observe that 
\begin{align*}
\ALG(G^0_i) = \begin{cases}
1, &\mbox{if $y_i = 0$} \\
0, &\mbox{otherwise,}
\end{cases} \hspace{0.5cm} \text{and} \hspace{0.5cm} \ALG(G^1_i) \leqslant \begin{cases}
1, &\mbox{if $y_i = 0$} \\
t, &\mbox{otherwise.}
\end{cases}
\end{align*}
We define the instance $I$, such that the $i$'th gadget is $G^0_i$, if $y_i = 1$ and $G^1_i$ otherwise. 
Observe that $I$ is well-defined since $\ALG$ is deterministic.

For the analysis, we let $Y_0$ and $Y_1$ be the number of times that $\ALG$ sets $y_i = 0$ and $y_i = 1$, respectively.
By construction of $I$, $\OPT(I) = Y_1 + t \cdot Y_0$ and $\ALG(I) = Y_0$.
Now, since $\ALG$ is $(t-\eps,\beta,\gamma)$-competitive, there exists $\AT \in \RR$ such that
\begin{align*}
Y_1 + t \cdot Y_0 &\leqslant (t-\eps) \cdot \ALG(I) + \beta \cdot \sum_{i=1}^n x_i \cdot (1-\hat{x}_i) + \gamma \cdot \sum_{i=1}^n (1-x_i) \cdot \hat{x}_i + \AT \\
&= (t-\eps) \cdot Y_0 + \gamma \cdot Y_1 + \AT.
\end{align*}
The above inequality holds if and only if
\begin{align*}
(1-\gamma) \cdot Y_1 + \eps \cdot Y_0 - \AT \leqslant 0.
\end{align*}
Finally, since $1-\gamma > 0$ as $\gamma < 1$, and $\eps > 0$, we find that
\begin{align*}
\min\{1-\gamma,\eps\} \cdot n - \AT &= \min\{1-\gamma,\eps\} \cdot (Y_1 + Y_0) - \AT \\
&\leqslant (1-\gamma) \cdot Y_1 + \eps \cdot Y_0 - \AT,
\end{align*}
and since $\lim_{n\to\infty} \min\{1-\gamma,\eps\} \cdot n - \AT = \infty$, we have a contradiction. 
\end{proof}

\vspace{0.2cm}
\begin{lemma}\label{lem:pareto-optimal-for-bdist-then-a+bleqt}
Let $\ALG$ be an $\ABC$-competitive Pareto-optimal algorithm for $\BDIS{t}$ with respect to $\PAIRMU$.
Then, $\alpha + \beta \leqslant t$.
\end{lemma}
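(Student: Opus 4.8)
The plan is to prove the contrapositive: if $\ALG$ is $\ABC$-competitive for $\BDIS{t}$ with respect to $\PAIRMU$ and $\alpha+\beta>t$, then $\ALG$ is not Pareto-optimal. The point is that there are algorithms for $\BDIS{t}$ whose guarantees already lie on the line $\alpha+\beta=t$, and an algorithm sitting strictly above this line is dominated by one of them in a single coordinate. I will use two families. First, the plain greedy algorithm $G$ for $\BDIS{t}$, which accepts $v_i$ exactly when none of its already-revealed neighbours has been accepted: $G$ outputs a maximal independent set, and since every optimal vertex either lies in $G$'s solution $S$ or has a neighbour in $S$ of degree at most $t$, the textbook $\Delta$-approximation counting argument gives $\OPT^p_{\BDIS{t}}(I)\leqslant t\cdot G^p(I)$ for every $I$; hence $G$ is $(t,0,0)$-competitive with additive constant $0$. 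Second, for every real $\alpha'\in[1,t]$, an $(\alpha',t-\alpha',1)$-competitive algorithm $B_{\alpha'}$ for $\BDIS{t}$, obtained by feeding the $(\alpha',t-\alpha',1)$-competitive algorithm for $\ASG{t}$ of Theorem~\ref{thm:pareto_optimal_algs_for_asg_t} into the algorithm map $\ORALG{\rho}$ of the reduction in Lemma~\ref{lem:reduction_bdis_t_to_asg_t} (its proof shows that $\ORALG{\rho}$ turns any $\ABC$-competitive algorithm for $\ASG{t}$ with $\gamma\leqslant1$ into an $\ABC$-competitive algorithm for $\BDIS{t}$, and here $\gamma=1$; equivalently, those $\ASG{t}$ algorithms are Pareto-optimal by~\cite{BBFL24}).

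Two routine facts are used throughout. Competitiveness is monotone in its coefficients: an $(\alpha_1,\beta_1,\gamma_1)$-competitive algorithm is $(\alpha_2,\beta_2,\gamma_2)$-competitive with the same additive constant whenever $\alpha_2\geqslant\alpha_1$, $\beta_2\geqslant\beta_1$ and $\gamma_2\geqslant\gamma_1$, which is immediate from Equation~\eqref{eq:competitiveness_max} since $\ALG^p(I)\geqslant 0$ for a competitive algorithm (it never outputs a non-independent set) and the error terms are nonnegative. And $\alpha\geqslant1$ for any $\ABC$-competitive algorithm for $\BDIS{t}$, since on instances consisting of many pairwise non-adjacent vertices with $x_i=\hat x_i=0$ the error vanishes while $\OPT^p_{\BDIS{t}}$ is unbounded.

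The case analysis then runs as follows. Suppose $\alpha+\beta>t$. If $\alpha\geqslant t$, then either $\alpha>t$, or $\alpha=t$ and $\beta>0$; by monotonicity $G$ is $(\alpha',\beta',\gamma)$-competitive for all $\alpha'\geqslant t$ and $\beta'\geqslant 0$, which in the first subcase gives an $(\alpha-\eps,\beta,\gamma)$-competitive algorithm for any $\eps\in(0,\alpha-t]$ and in the second an $(\alpha,\beta-\eps,\gamma)$-competitive algorithm for any $\eps\in(0,\beta]$ — so $\ALG$ is not Pareto-optimal. If $1\leqslant\alpha<t$, then Lemma~\ref{lem:small_alpha_imply_large_gamma} gives $\gamma\geqslant1$, so by monotonicity $B_\alpha$ (which is $(\alpha,t-\alpha,1)$-competitive) is also $(\alpha,t-\alpha,\gamma)$-competitive; and $\alpha+\beta>t$ gives $0\leqslant t-\alpha<\beta$, so with $\eps:=\beta-(t-\alpha)>0$ the algorithm $B_\alpha$ is $(\alpha,\beta-\eps,\gamma)$-competitive, and again $\ALG$ is not Pareto-optimal. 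This establishes the contrapositive, hence the lemma.

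The only non-routine step is choosing the building blocks: for intermediate $\alpha\in(1,t)$ one needs an algorithm whose leading coefficient is \emph{exactly} $\alpha$, so no single extremal algorithm (greedy, or ``follow the prediction'') suffices, and the full parametrized family of Theorem~\ref{thm:pareto_optimal_algs_for_asg_t}, transported through Lemma~\ref{lem:reduction_bdis_t_to_asg_t}, is required. The remaining subtlety is formal only — Lemma~\ref{lem:reduction_bdis_t_to_asg_t} is stated for Pareto-optimal inputs, but its proof uses nothing beyond $\gamma\leqslant1$, which the algorithms of Theorem~\ref{thm:pareto_optimal_algs_for_asg_t} satisfy; the rest is the monotonicity bookkeeping above and the standard $\Delta$-approximation bound for the greedy independent set algorithm.
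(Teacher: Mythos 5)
Your proof is correct and follows essentially the same route as the paper: the same case split on $\alpha \geqslant t$ versus $\alpha < t$, the same appeal to Lemma~\ref{lem:small_alpha_imply_large_gamma} in the second case, and the same family of $(\alpha',t-\alpha',1)$-competitive algorithms transported through Lemma~\ref{lem:reduction_bdis_t_to_asg_t}. The only (minor) divergence is that you establish the $(t,0,0)$-competitive algorithm for $\BDIS{t}$ directly via the greedy $\Delta$-approximation bound, whereas the paper obtains it by pushing the $(t,0,0)$-competitive algorithm for $\ASG{t}$ through the same reduction; you also make the monotonicity bookkeeping (that raising a coefficient preserves competitiveness when $\ALG^p \geqslant 0$ and errors are nonnegative) explicit, which the paper leaves implicit.
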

\begin{proof}
Since $\ASG{t}$ is as hard as $\BDIS{t}$ with respect to $\PAIRMU$ by Lemma~\ref{lem:reduction_bdis_t_to_asg_t}, then the existence of a $(t,0,0)$-competitive and an $(\alpha,\beta,1)$-competitive algorithm for $\ASG{t}$ (see~\cite[Theorems 1 and 2]{BBFL25}), for all $\alpha,\beta \in \RR^+$ with $\alpha \geqslant 1$ and $\alpha + \beta = t$, implies the existence of a $(t,0,0)$-competitive and an $(\alpha,\beta,1)$-competitive algorithm for $\BDIS{t}$ with respect to $\PAIRMU$.

Now, assume towards contradiction that $\ALG$ is an $\ABC$-competitive Pareto-optimal algorithm for $\BDIS{t}$ with $\alpha + \beta > t$.

\textbf{Case~$\alpha \geqslant t$:}
Since $\alpha + \beta > t$ and $\alpha \geqslant t$, then either $\alpha > t$ or $\beta > 0$.
Then, $\ALG$ cannot be Pareto-optimal, since at least one of the three parameters can be improved without increasing any other parameter, due to the existence of the $(t,0,0)$-competitive algorithm for $\BDIS{t}$.

\textbf{Case~$\alpha < t$:}
Since $\alpha < t$ then $\gamma \geqslant 1$ by Lemma~\ref{lem:small_alpha_imply_large_gamma}.
Hence, since $\alpha + \beta > t$, the existence of an $(\alpha,t-\alpha,1)$-competitive algorithm implies that $\beta$ can be improved without increasing either $\alpha$ or $\gamma$, and so $\ALG$ cannot be Pareto-optimal.
\end{proof}


\vspace{0.2cm}
\begin{lemma}\label{lem:asg_t_to_bdis_t}
$\BDIS{t}$ is as hard as $\ASG{t}$ for all $t \in \ZZ^+$ with respect to $\PAIRMU$. 
\end{lemma}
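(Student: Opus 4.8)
The goal is to build a reduction in the reverse direction of Lemma~\ref{lem:reduction_bdis_t_to_asg_t}: given a Pareto-optimal $\ABC$-competitive algorithm $\ALG_{\BDIS{t}}$ for $\BDIS{t}$, construct an $\ABC$-competitive algorithm $\ALG_{\ASG{t}}$ for $\ASG{t}$, together with an instance map turning an instance $I = (x,\hat x, r) \in \INSTANCES{\ASG{t}}$ into an instance $I' \in \INSTANCES{\BDIS{t}}$. Since the excerpt says this reduction is "inspired by the reduction template from~\cite{BBFL24}," I would follow that template: for each $\ASG{t}$-request $r_i$ with true bit $x_i$ and prediction $\hat x_i$, the instance map emits one $\BDIS{t}$-vertex $v_i$; the key modelling trick is to choose the edge set so that accepting $v_i$ (output $y_i = 0$) corresponds to guessing $0$ in $\ASG{t}$ and rejecting corresponds to guessing $1$, while arranging the adjacency structure so that a vertex with $x_i = 1$ (i.e.\ the one that "should" be guessed $1$, hence excluded from the optimal independent set) is made adjacent to enough previously-accepted or future vertices to encode the $t$-penalty. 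The algorithm map runs $\ALG_{\BDIS{t}}$ on the transformed instance and outputs $y_i^{\ASG{t}} = y_i^{\BDIS{t}}$.

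**Key steps in order.** First I would fix the instance map precisely, in particular deciding how the degree bound $t$ is respected: when $x_i = 1$, the gadget should attach $v_i$ to the relevant already-revealed vertex (or vertices) so that an independent-set algorithm that accepts $v_i$ blocks itself from up to $t$ units of future profit, mirroring the term $t \cdot x_i \cdot (1-y_i)$ in~\eqref{obj:asg_t}; when $x_i = 0$, $v_i$ carries prediction $\hat x_i' = \hat x_i$ and the transformation should not create new edges that would change optimality. I would verify $\OPT^p_{\BDIS{t}}(I') = \sum_i (1-x_i) = n - \OPT^c_{\ASG{t}}(I)$ and that $\{v_i : x_i = 0\}$ is indeed a maximum independent set in the constructed graph, so that $I' \in \INSTANCES{\BDIS{t}}$ is a legal instance. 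Next I would track the error: since the instance map is essentially bit-for-bit, $\MU_b(I') = \MU_b(I)$ for $b \in \{0,1\}$ (or at worst $\leqslant$), so no error blow-up. Then, assuming $\ALG_{\BDIS{t}}$ is Pareto-optimal, I invoke Lemma~\ref{lem:pareto-optimal-for-bdist-then-a+bleqt} to get $\alpha + \beta \leqslant t$ and Lemma~\ref{lem:small_alpha_imply_large_gamma} (if $\alpha < t$ then $\gamma \geqslant 1$); these are exactly the structural facts that make the $t$-penalty accounting go through, just as Remark~\ref{rem:pareto_optimal_for_asg_has_small_gamma} ($\gamma \leqslant 1$) was used in the forward direction. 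Finally I would carry out the step-by-step (or global) inequality chase: starting from the guarantee $\OPT^p_{\BDIS{t}}(I') \leqslant \alpha \cdot \ALG^p_{\BDIS{t}}(I') + \beta \MUZERO(I') + \gamma \MUONE(I') + \AT$, substitute $\ALG^p_{\BDIS{t}}(I') = n - \ALG^c_{\ASG{t}}(I)$ up to the penalty terms and $\OPT^p_{\BDIS{t}}(I') = n - \OPT^c_{\ASG{t}}(I)$, and rearrange to reach $\ALG^c_{\ASG{t}}(I) \leqslant \alpha \OPT^c_{\ASG{t}}(I) + \beta \MUZERO(I) + \gamma \MUONE(I) + \AT$, which is~\eqref{eq:competitiveness_min}.

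**Main obstacle.** The delicate point is the bookkeeping that converts the independent-set profit loss into the $\ASG{t}$ cost, and in particular respecting the degree bound $t$ while doing so. In the forward direction the argument exploited that a rejected-but-optimal vertex has at most $t$ neighbours, so the "RHS is $t$ larger than LHS" slack could be charged at most $t$ times; here I need the mirror image — a vertex $v_i$ with $x_i = 1$ that $\ALG_{\BDIS{t}}$ wrongly accepts must cost $\ALG_{\ASG{t}}$ exactly $t \cdot x_i$, and I must design the gadget so the independent-set algorithm genuinely forfeits $t$ units of achievable profit (not fewer), which forces the $t$ "victim" vertices to actually be acceptable in $\ALG_{\BDIS{t}}$'s run — i.e.\ they must themselves be non-adjacent to everything $\ALG_{\BDIS{t}}$ has accepted. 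Getting this simultaneously consistent with feasibility of $I'$, with the degree bound, and with the online ordering (the victims are revealed \emph{after} $v_i$) is the crux; I expect to handle it by the same local invariant technique as in Lemma~\ref{lem:reduction_bdis_t_to_asg_t}, maintaining an inequality between a weighted $\OPT/\ALG$ expression for $\ASG{t}$ and for $\BDIS{t}$ and checking it survives each of a small number of request types, with the $\alpha + \beta \leqslant t$ and $\gamma \geqslant 1$ bounds absorbing the slack.
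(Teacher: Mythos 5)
Your high-level plan is in the right spirit (reverse reduction, use of Lemma~\ref{lem:pareto-optimal-for-bdist-then-a+bleqt} so that $\alpha + \beta \leqslant t$, a local per-request invariant), but several of the concrete claims you make along the way are false and hide exactly the difficulties the paper has to overcome, so the proposed inequality chase would not close.

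First, the identities $\OPT^p_{\BDIS{t}}(I') = n - \OPT^c_{\ASG{t}}(I)$ and $\ALG^p_{\BDIS{t}}(I') = n - \ALG^c_{\ASG{t}}(I)$ cannot both hold for an instance that encodes the $t$-penalty. If $I'$ had exactly $n$ vertices and optimal size $n - \sum x_i$, then substituting into the guarantee $\OPT^p \leqslant \alpha \ALG^p + \dots$ and rearranging puts the factor $\alpha$ on $\ALG^c_{\ASG{t}}$ rather than on $\OPT^c_{\ASG{t}}$ — the wrong side for Equation~\eqref{eq:competitiveness_min}. To land the $\alpha$ in the right place you need a pivot quantity that itself depends on $\alpha$; in the paper this is the function $d(x_i,y_i) = \abs{H_i'} - (1-x_i)(1-y_i) + (\alpha-1)x_i$ with conditions (C1)/(C2), and the $(\alpha-1)x_i$ term is precisely what cancels the extra $\alpha$. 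Your plan has no analogous object and a direct "global rearrangement" of the competitiveness inequality simply does not produce the $\ASG{t}$ guarantee.

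Second, "the instance map is essentially bit-for-bit, $\MU_b(I') = \MU_b(I)$" is also false under any construction that works. The obstructing case is $x_i = y_i = 0$: in $\ASG{t}$ the algorithm pays $0$, yet its $\BDIS{t}$ counterpart would gain profit $1$ on an isolated challenge vertex, forcing the pivot to satisfy $\alpha \leqslant d \leqslant 1$ — impossible once $\alpha > 1$. The paper's fix is to \emph{flip} the true bit, setting $x_i' = 1 - x_i$ whenever $x_i = y_i = 0$, and then hang a block of $t$ new vertices off the challenge vertex after the $\ASG{t}$ input ends. This is what makes $\OPT^p_{\BDIS{t}}(H_i') = t$ while $\ALG^p_{\BDIS{t}}(H_i') = 1$, resolving the case — but it also changes the prediction error: when $\hat{x}_i = 0$ a correct prediction becomes incorrect and $\MUZERO$ strictly increases, exactly as recorded in Equation~\eqref{eq:muzero_of_is_instance}. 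This is the place where the hypothesis $\alpha + \beta \leqslant t$ is genuinely spent: the $\beta$-term in (C1) must absorb that extra $\MUZERO$ contribution. Your proposal invokes the $\alpha + \beta \leqslant t$ bound only as generic "slack," without a step that actually creates the error it is needed to pay for, so the accounting would not balance. Concretely: you would need to (i) introduce the bit flip for the $x_i = y_i = 0$ case, (ii) append the block gadgets (isolated challenge vertices plus up to $t$ pendant vertices per challenge, laid down after the $\ASG{t}$ input finishes), and (iii) carry the $\alpha$-dependent pivot through the four $(x_i,y_i)$ cases; without these, the argument does not go through.
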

\begin{proof}
Throughout this proof, we only measure prediction error using the canonical error measures $\PAIRMU$ defined in Equation~\eqref{eq:pairmu}. Therefore, we suppress notation and nomenclature related to error measures whenever possible.

Similarly to the proof of Lemma~\ref{lem:reduction_bdis_t_to_asg_t}, we split this proof into three parts:
\begin{enumerate}[label = {(\roman*)}]
\item \label{item:definition_of_reduction} We define two maps $\ORALG{\rho} \colon \ALGS{\BDIS{t}} \rightarrow \ALGS{\ASG{t}}$ and $\ORTRANS{\rho} \colon \ALGS{\BDIS{t}} \times \INSTANCES{\ASG{t}} \rightarrow \INSTANCES{\BDIS{t}}$, that allows us to translate algorithms for $\BDIS{t}$ into algorithms for $\ASG{t}$.
\item \label{item:correctness_of_reduction_pareto} We show that the existence of an $\ABC$-competitive Pareto-optimal algorithm for $\BDIS{t}$ implies the existence of an $\ABC$-competitive algorithm for $\ASG{t}$.
\item \label{item:extend_to_all_asg_t_algorithms} We prove that $\BDIS{t}$ is as hard as $\ASG{t}$.
\end{enumerate}

\textbf{Towards~\ref{item:definition_of_reduction}: Definition of $\boldsymbol{{\ORALG{\rho}}}$ and $\boldsymbol{{\ORTRANS{\rho}}}$.}
Let $\ALG_{\BDIS{t}} \in \ALGS{\BDIS{t}}$ and $I = (x,\hat{x},r) \in \INSTANCES{\ASG{t}}$.
Then, we define $\ALG_{\ASG{t}} = \ORALG{\rho}(\ALG_{\BDIS{t}})$ and $I' = (x',\hat{x}',r') = \ORTRANS{\rho}(\ALG_{\BDIS{t}},I)$ as follows.

When $\ALG_{\ASG{t}}$ is asked to guess the bit $x_i$ given the prediction $\hat{x}_i$, it gives a new request containing an isolated vertex, a \emph{challenge request}, $v_i'$, with prediction $\hat{x}_i' = \hat{x}_i$ to $\ALG_{\BDIS{t}}$.
Then, $\ALG_{\ASG{t}}$ outputs the same guess of $x_i$ as $\ALG_{\BDIS{t}}$ outputs on $v_i'$. 

When $\ALG_{\ASG{t}}$ receives no more requests, and therefore learns the true contents of $x$, it determines $x'$, by setting
\begin{align}\label{eq:definition_of_x^f_asgt_to_bdist}
x_i' = \begin{cases}
1-x_i, &\mbox{if $x_i = y_i = 0$,}\\
x_i, &\mbox{otherwise,}
\end{cases}
\end{align} 
where $x_i'$ is the true bit of $v_i'$.
By construction, there is no challenge request with $x_i' = y_i = 0$.

After computing $x'$, $\ALG_{\ASG{t}}$ adds a block of requests for each challenge request given to $\ALG_{\BDIS{t}}$.
We let $B_{v_i'}$ be the block of requests associated to the challenge request, $v_i'$.
The blocks are defined as follows.
\begin{enumerate}[label = {(\alph*)}]
\item \textit{Case $\mathit{x_i' = 1}$ and $\mathit{y_i = 0}$:} \label{block:10}
Request $t$ new vertices, $v_{i,j}'$, for $j=1,2,\ldots,t$, where $v_{i,j}'$ is connected to $v_i'$ with prediction $\hat{x}_{i,j}' = 0$.
Clearly, $\OPT_{\BDIS{t}}$ accepts all $v_{i,j}'$, and $\ALG_{\BDIS{t}}$ must reject all $v_{i,j}'$ to avoid creating an infeasible solution.
Hence, $x_{i,j}' = 0$ and $y_{i,j} = 1$ for all $j=1,2,\ldots,t$. 

\item \textit{Case $\mathit{x_i' = 0}$ and $\mathit{y_i = 1}$:}\label{block:01}
This block is empty. 
Since $x_i' = 0$, and $v_i'$ is isolated, it is optimal to accept $v_i'$ into the independent set.

\item \textit{Case $\mathit{x_i' = y_i = 1}$:} \label{block:11}
Request a new vertex, $v_{i,1}'$, that is connected to $v_i$ with $x_{i,1}' = \hat{x}_{i,1}' = 0$.
Observe that $v_i'$ is necessary to ensure that $x'$ encodes an optimal solution.
$\ALG_{\BDIS{t}}$ may either accept or reject $v_{i,1}'$.
\end{enumerate}

Similarly to the proof of Lemma~\ref{lem:reduction_bdis_t_to_asg_t}, we sometimes have that $x_i' \neq x_i$, meaning that the error of $I'$ may differ from that of $I$.
For later analysis, we need to bound how much error we introduce with the above reduction.

\textbf{Bounding the error of $\boldsymbol{{I'}}$:}
By Equation~\eqref{eq:definition_of_x^f_asgt_to_bdist}, the true bits of some challenge requests for $\BDIS{t}$ are different from the true bits of the corresponding requests for $\ASG{t}$, while the predicted bits of the same requests coincide.
In particular, if $x_i = y_i = 0$, then $x_i' = 1$.
Hence, if $x_i = y_i = 0$ and $\hat{x}_i' = \hat{x}_i = 1$ then we change an incorrect prediction into a correct prediction, and if $x_i = y_i = 0$ and $\hat{x}_i' = \hat{x}_i = 0$ then we change a correct prediction into an incorrect prediction.
In the following, we bound the error of $I'$ as a function of the error of $I$:
\begin{align}
\MUZERO(I') &= \MUZERO(I) + \sum_{i=1}^{n} (1-x_i) \cdot (1-y_i) \cdot (1-\hat{x}_i) \label{eq:muzero_of_is_instance} \\
\MUONE(I') &= \MUONE(I) - \sum_{i=1}^{n} (1-x_i) \cdot (1-y_i) \cdot \hat{x}_i \leqslant \MUONE(I). \label{eq:muone_of_is_instance}
\end{align}

This finishes the definition of the two maps $\ORALG{\rho}$ and $\ORTRANS{\rho}$ that constitute the reduction.

\textbf{Towards~\ref{item:correctness_of_reduction_pareto}: The existence of an $\boldsymbol{\ABC}$-competitive Pareto-optimal algorithm for $\boldsymbol{\BDIS{t}}$ implies the existence of an $\boldsymbol{\ABC}$-competitive algorithm for $\boldsymbol{\ASG{t}}$.}
Let $\ALG_{\BDIS{t}} \in \ALGS{\BDIS{t}}$ be an $\ABC$-competitive Pareto-optimal algorithm for $\BDIS{t}$.
We show that $\ALG_{\ASG{t}} = \ORALG{\rho}(\ALG_{\BDIS{t}})$ is an $\ABC$-competitive algorithm for $\ASG{t}$, where $\ORALG{\rho}$ is the map defined above. 
For this, recall from Lemma~\ref{lem:pareto-optimal-for-bdist-then-a+bleqt} that $\ABC$-competitive Pareto-optimal algorithms for $\BDIS{t}$ satisfy that $\alpha + \beta \geqslant t$.

Observe that each request, $r_i$, in $I$ induces a subgraph $H_i'$ in the underlying graph of $I'$.
By construction, $H_i'$ contains the vertex $v_i'$ and the vertices and edges from $B_{v_i'}$.
We let $\ALG_{\BDIS{t}}(H_i')$ and $\OPT_{\BDIS{t}}(H_i')$ be the profit of $\ALG_{\BDIS{t}}$ and $\OPT_{\BDIS{t}}$ on the requests in $I'$ corresponding to $H_i'$. 
Further, observe that for any $i,j \in \{1,2,\ldots,n\}$ with $i \neq j$, $H_i'$ and $H_j'$ are not connected by any edges, and so $\ALG_{\BDIS{t}}(H_i' \cup H_j') = \ALG_{\BDIS{t}}(H_i') + \ALG_{\BDIS{t}}(H_j')$, for any $\ALG \in \ALGS{\BDIS{t}}$.
The same is true for $\OPT_{\BDIS{t}}$. 

Next, we explain the structure of the rest of the proof.
We show that there exists a function $d \colon \{0,1\} \times \{0,1\} \rightarrow \ZZ^+$ such that for all $x_i,y_i \in \{0,1\}$,
\begin{equation}\label{eq:condition_alg_arbitrary_consistency} 
y_i + t \cdot x_i \cdot (1-y_i) \leqslant d(x_i,y_i) - \alpha \cdot \ALG_{\BDIS{t}}(H_i') - \beta \cdot (1-x_i) \cdot (1-y_i) \cdot (1-\hat{x}_i) \tag{C1} 
\end{equation}
and
\begin{align}\label{eq:condition_opt_arbitrary_consistency}
d(x_i,y_i) - \OPT_{\BDIS{t}}(H_i') \leqslant \alpha \cdot x_i. \tag{C2} 
\end{align}
Before proving the existence of $d$, we argue that 
the existence of $d$ implies that if $\ALG_{\BDIS{t}}$ is an $\ABC$-competitive for $\BDIS{t}$ with respect to $\PAIRMU$, then $\ALG_{\ASG{t}} = \ORALG{\rho}(\ALG_{\BDIS{t}})$ is an $\ABC$-competitive for $\ASG{t}$ with respect to $\PAIRMU$.
To this end, assume that $\ALG_{\BDIS{t}}$ is $\ABC$-competitive, let $\AT$ be the additive constant of $\ALG_{\BDIS{t}}$, let $I \in \INSTANCES{\ASG{t}}$, and let $I' = \ORTRANS{\rho}(\ALG_{\BDIS{t}},I)$.
Then, by Equations~\eqref{obj:asg_t} and~\eqref{eq:condition_alg_arbitrary_consistency}, we have that
\begin{align*}
\ALG_{\ASG{t}}(I) = \; \sum_{i=1}^n \left( y_i + t \cdot x_i \cdot (1-y_i) \right) \leqslant \; \sum_{i=1}^n \left( d(x_i,y_i) - \alpha \cdot \ALG_{\BDIS{t}}(H_i') - \beta \cdot (1-x_i) \cdot (1-y_i) \cdot (1-\hat{x}_i) \right).
\end{align*}
Further, by Equation~\eqref{eq:muzero_of_is_instance} and since $\sum_{i=1}^n \ALG_{\BDIS{t}}(H_i') = \ALG_{\BDIS{t}}(I')$, we get that
\begin{align*}
\ALG_{\ASG{t}}(I) \leqslant \left(\sum_{i=1}^n d(x_i,y_i)\right) - \alpha \cdot \ALG_{\BDIS{t}}(I') - \beta \cdot (\MUZERO(I') - \MUZERO(I)).
\end{align*}
Using the $\ABC$-competitiveness of $\ALG_{\BDIS{t}}$ (see Definition~\ref{def:competitiveness}), we get that
\begin{align*}
\ALG_{\ASG{t}}(I) \leqslant 
\; &\left(\sum_{i=1}^n d(x_i,y_i)\right) - \OPT_{\BDIS{t}}(I') + \beta \cdot \MUZERO(I) + \gamma \cdot \MUONE(I') + \AT.
\end{align*}
Now, by Equation~\eqref{eq:muone_of_is_instance} and since $\OPT_{\BDIS{t}}(I') = \sum_{i=1}^n \OPT_{\BDIS{t}}(H_i')$, we have
\begin{align*}
\ALG_{\ASG{t}}(I) \leqslant \left(\sum_{i=1}^n d(x_i,y_i) - \OPT_{\BDIS{t}}(H_i')\right) + \beta \cdot \MUZERO(I) + \gamma \cdot \MUONE(I) + \AT.
\end{align*}
Finally, by Equations~\eqref{eq:condition_opt_arbitrary_consistency} and~\eqref{obj:asg_t}, we get that
\begin{align*}
\ALG_{\ASG{t}}(I) \leqslant\; &\alpha \cdot \OPT_{\ASG{t}}(I) + \beta \cdot \MUZERO(I) + \gamma \cdot \MUONE(I) + \AT.
\end{align*}
Hence, if $\ALG_{\BDIS{t}}$ is an $\ABC$-competitive algorithm with respect to $\PAIRMU$, then the existence of $d$ is sufficient to show that $\ALG_{\ASG{t}}$ is an $\ABC$-competitive algorithm for $\ASG{t}$ with respect to $\PAIRMU$.

For now, we focus on preserving the competitiveness of Pareto-optimal algorithms for $\BDIS{t}$.
In this setting, recall that any $\ABC$-competitive Pareto-optimal algorithm for $\BDIS{t}$ satisfies that $\alpha + \beta \leqslant t$, by Lemma~\ref{lem:pareto-optimal-for-bdist-then-a+bleqt}.
Hence, proving the existence of a function $d$ satisfying~\eqref{eq:condition_alg_arbitrary_consistency} and~\eqref{eq:condition_opt_arbitrary_consistency} whenever $\alpha + \beta \leqslant t$ will be sufficient to finish the proof of Item~\ref{item:correctness_of_reduction_pareto}.

\textbf{The existence of $\boldsymbol{{d}}$:}
We show that the function
\begin{align}\label{eq:proposed_d}
d(x_i,y_i) =  \abs{H_i'} - (1-x_i) \cdot (1-y_i) + (\alpha - 1) \cdot x_i
\end{align}
where $\smallabs{H_i'}$ is the number of vertices in $H_i'$, satisfies Equations~\eqref{eq:condition_alg_arbitrary_consistency} and~\eqref{eq:condition_opt_arbitrary_consistency} whenever $\alpha + \beta \leqslant t$.
We consider four cases:
\begin{description}
\item \textit{Case $\mathit{x_i = 1}$ and $\mathit{y_i = 0}$:}
In this case $x_i' = 1$ and so $H_i$ contains $v_i'$ and the $t$ vertices in the block $B_{v_i'}$ (see~\ref{block:10}).
Therefore, $\smallabs{H_i'} = t+1$, $\OPT_{\BDIS{t}}(H_i') = t$, and $\ALG_{\BDIS{t}}(H_i') = 1$, and so $d(1,0) = t+1 - 0 + \alpha - 1 = t + \alpha$. 
Hence, Equation~\eqref{eq:condition_alg_arbitrary_consistency} is satisfied as
\begin{align*}
t \leqslant t + \alpha - \alpha - 0 = t,
\end{align*}
and Equation~\eqref{eq:condition_opt_arbitrary_consistency} is satisfied as
\begin{align*}
t + \alpha - t \leqslant \alpha.
\end{align*}
\item \textit{Case $\mathit{x_i = 0}$ and $\mathit{y_i = 1}$:}
In this case $x_i' = 0$ and so $H_i'$ only contains $v_i'$ as $B_{v_i'}$ is empty (see~\ref{block:01}).
Therefore, $\smallabs{H_i'} = 1$. $\OPT_{\BDIS{t}}(H_i') = 1$, and $\ALG_{\BDIS{t}}(H_i') = 0$, and so $d(0,1) = 1 - 0 + 0 = 1$.
Hence, Equation~\eqref{eq:condition_alg_arbitrary_consistency} is satisfied as
\begin{align*}
1 \leqslant 1 - 0 - 0,
\end{align*}
and Equation~\eqref{eq:condition_opt_arbitrary_consistency} is satisfied as
\begin{align*}
1 - 1 \leqslant 0.
\end{align*}
\item \textit{Case $\mathit{x_i = y_i = 1}$:}
In this case, $x_i' = 1$ and so $H_i'$ contains the two vertices $v_i'$ and $v_{i,1}'$ (see~\ref{block:11}).
Therefore, $\smallabs{H_i'} = 2$, $\OPT_{\BDIS{t}}(H_i') = 1$, and $\ALG_{\BDIS{t}}(H_i') \in \{0,1\}$.
Observe that $\ALG_{\BDIS{t}}(H_i') \in \{0,1\}$ as $\ALG_{\BDIS{t}}$ may either accept or reject $v_{i,1}'$.
Hence, $d(1,1) = 2 - 0 + \alpha - 1 = \alpha + 1$, and so Equation~\eqref{eq:condition_alg_arbitrary_consistency} is satisfied as
\begin{gather*}
1 \leqslant \alpha + 1 - \alpha - 0 = 1 \hspace{1cm} (\mbox{$\ALG_{\BDIS{t}}$ accepts $v_{i,1}$}) \\
1 \leqslant \alpha + 1 - 0 - 0 = 1 + \alpha \hspace{1cm} (\mbox{$\ALG_{\BDIS{t}}$ rejects $v_{i,1}$})
\end{gather*}
and~\eqref{eq:condition_opt_arbitrary_consistency} is satisfied as:
$\alpha + 1 - 1 \leqslant \alpha$.
\item \textit{Case $\mathit{x_i = y_i = 0}$:}
In this case $x_i' = 1$ (see Equation~\eqref{eq:definition_of_x^f_asgt_to_bdist}), and so $H_i'$ contains the vertex $v_i'$ and the $t$ vertices in the block $B_{v_i'}$ (see~\ref{block:10}).
Therefore, $\smallabs{H_i'} = t+1$, $\OPT_{\BDIS{t}}(H_i') = t$, and $\ALG_{\BDIS{t}}(H_i') = 1$, and so $d(0,0) = t + 1 - 1 + 0 = t$.
Hence Equation~\eqref{eq:condition_alg_arbitrary_consistency} is satisfied as
\begin{gather*}
0 \leqslant t - \alpha - 0 \hspace{1cm} (\mbox{if $\hat{x}_i = 1$})\\
0 \leqslant t - \alpha - \beta \hspace{1cm} (\mbox{if $\hat{x}_i = 0$})
\end{gather*}
Observe that both inequalities are satisfied as $\alpha + \beta \leqslant t$. 
Further,~\eqref{eq:condition_opt_arbitrary_consistency} is satisfied as
$t - t \leqslant 0$.
\end{description}

Hence, the function $d$ proposed in Equation~\eqref{eq:proposed_d} satisfies Equation~\eqref{eq:condition_alg_arbitrary_consistency} and~\eqref{eq:condition_opt_arbitrary_consistency} for all $x_i,y_i \in \{0,1\}$, whenever $\alpha + \beta \leqslant t$.
This finishes the proof of Item~\ref{item:correctness_of_reduction_pareto}.

\textbf{Towards~\ref{item:extend_to_all_asg_t_algorithms}: Proving that $\boldsymbol{\BDIS{t}}$ is as hard as $\boldsymbol{\ASG{t}}$.}
To show that $\BDIS{t}$ is as hard as $\ASG{t}$, we need to preserve the competitiveness of all algorithms for $\BDIS{t}$, and not only the Pareto-optimal ones. 
This can be done using the same strategy as in Item~\ref{item:extend_beyond_pareto_optimality} in the proof of Lemma~\ref{lem:reduction_bdis_t_to_asg_t}.
\end{proof}

We are now ready to state the main result of this section:

\vspace{0.2cm}
\begin{theorem}\label{thm:main_result}
$\BDIS{t}$ is $\CCWM{\PAIRMUFORCC}{t}$-complete for all $t \in \ZZ^+$.
\end{theorem}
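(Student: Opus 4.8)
The plan is to obtain Theorem~\ref{thm:main_result} as an immediate corollary of the two reductions already established in this section. Recall that "$\BDIS{t}$ is exactly as hard as $\ASG{t}$" unfolds, by definition, into the conjunction of the two statements: (i) $\BDIS{t}$ is as hard as $\ASG{t}$, and (ii) $\ASG{t}$ is as hard as $\BDIS{t}$, both with respect to $\PAIRMU$. Statement (i) is exactly Lemma~\ref{lem:asg_t_to_bdis_t}, and statement (ii) is exactly Lemma~\ref{lem:reduction_bdis_t_to_asg_t}, each proven for every $t \in \ZZ^+$ and for the same canonical pair $\PAIRMU$. So the proof is simply: apply Lemma~\ref{lem:reduction_bdis_t_to_asg_t} and Lemma~\ref{lem:asg_t_to_bdis_t}, and conclude by the definition of "exactly as hard as".

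If I wanted to highlight the consequence advertised in the introduction, I would add a sentence tracing through what this means for algorithms: by Lemma~\ref{lem:asg_t_to_bdis_t}, any $\ABC$-competitive Pareto-optimal algorithm for $\ASG{t}$ with respect to $\PAIRMU$ yields an $\ABC$-competitive algorithm for $\BDIS{t}$; conversely, by Lemma~\ref{lem:reduction_bdis_t_to_asg_t}, any $\ABC$-competitive Pareto-optimal algorithm for $\BDIS{t}$ with respect to $\PAIRMU$ yields an $\ABC$-competitive algorithm for $\ASG{t}$. Together with Theorem~\ref{thm:pareto_optimal_algs_for_asg_t} (which pins down a family of Pareto-optimal triples for $\ASG{t}$) and Lemma~\ref{lem:pareto-optimal-for-bdist-then-a+bleqt} (which constrains the Pareto frontier for $\BDIS{t}$), this would let one read off that the two problems admit precisely the same achievable triples $\ABC$, while respecting that competitiveness is measured by Equation~\eqref{eq:competitiveness_max} for $\BDIS{t}$ and by Equation~\eqref{eq:competitiveness_min} for $\ASG{t}$.

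There is essentially no obstacle at this stage: all the technical work — constructing the algorithm maps $\ORALG{\rho}$ and the instance maps $\ORTRANS{\rho}$, bounding the induced change in $\MUZERO$ and $\MUONE$, and the step-by-step (resp.\ block-by-block) verification of the governing inequalities~\eqref{eq:new_condition_bdis_t_to_asg_t} (resp.\ \eqref{eq:condition_alg_arbitrary_consistency} and~\eqref{eq:condition_opt_arbitrary_consistency}) — is already discharged inside Lemmas~\ref{lem:reduction_bdis_t_to_asg_t} and~\ref{lem:asg_t_to_bdis_t}. The only point to be careful about is bookkeeping: making sure both lemmas are invoked with the same $t$ and the same error-measure pair $\PAIRMU$, which is immediate from their statements. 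Hence the proof of Theorem~\ref{thm:main_result} is a one-line combination of the two lemmas.
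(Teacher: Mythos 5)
Your proposal is correct and matches the paper exactly: the paper states Theorem~\ref{thm:main_result} immediately after Lemma~\ref{lem:asg_t_to_bdis_t} with no further argument, precisely because it is the conjunction of Lemmas~\ref{lem:reduction_bdis_t_to_asg_t} and~\ref{lem:asg_t_to_bdis_t} under the definition of ``exactly as hard as.'' The extra discussion in your second paragraph is a reasonable gloss but not needed for the proof.
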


By a slight modification to the proofs of Lemmas~\ref{lem:reduction_bdis_t_to_asg_t} and~\ref{lem:asg_t_to_bdis_t}, we get the following corollary, which extends the result of Theorem~\ref{thm:main_result} to the purely online setting:

\vspace{0.2cm}
\begin{corollary}
$\BDIS{t}$ is $\CCWM{\PAIRZMFORCC}{t}$-complete for all $t \in \ZZ^+$.
\end{corollary}

The main motivation for comparing the hardness of $\BDIS{t}$ to the hardness of $\ASG{t}$ is that it extends to a hardness comparison between $\BDIS{t}$ and \emph{Online $t$-Bounded Degree Vertex Cover with Predictions} ($\BDVC{t}$), the dual problem of $\BDIS{t}$.
In particular, by~\cite{BBFL25}, it is known that $\BDVC{t}$ is $\CCWM{\PAIRMUFORCC}{t}$-complete, and so, by the $\CCWM{\PAIRMUFORCC}{t}$-completeness of $\BDIS{t}$, there exists an $\ABC$-competitive algorithm for $\BDVC{t}$ if and only if there exists an $\ABC$-competitive algorithm for $\BDIS{t}$.
This is rather surprising as, in the context of the related field of approximation algorithms, it is well-known that (Bounded Degree) Vertex Cover is much easier to approximate than (Bounded Degree) Independent Set~\cite{CLRS_v4,H99}.

\section{Missing Proofs from Section~\ref{sec:a_collection_of_problems_from_CCWM_max}}\label{sec:missing_proofs}

In Section~\ref{sec:a_collection_of_problems_from_CCWM_max}, we were not able to give the proofs of Lemmas~\ref{lem:bdis_t_to_sch_t} and~\ref{lem:more_colors_is_easier}, since we needed some knowledge about which algorithms exist for $\BDIS{t}$ and $\SCH{t}$ that was not available before.
However, the result from Theorem~\ref{thm:main_result} combined with the result from Lemma~\ref{lem:sch_t_to_bdis_t} gives a much deeper understanding of this.
Therefore, we are now able to give the proofs of Lemmas~\ref{lem:bdis_t_to_sch_t} and~\ref{lem:more_colors_is_easier} in full detail.

\subsection{Proving Lemma~\ref{lem:bdis_t_to_sch_t}}\label{sec:proof_of_lemma_bdis_t_to_sch_t}

The proof of Lemma~\ref{lem:bdis_t_to_sch_t} is similar to that of Lemma~\ref{lem:reduction_bdis_t_to_asg_t}.
To complete the proof, we need a result similar to that of Theorem~\ref{thm:pareto_optimal_algorithms_for_asg_t} to apply for algorithms for $\SCH{t}$.

\vspace{0.2cm}
\begin{lemma}\label{lem:negative_result_scht}
Let $\ALG$ be an $\ABC$-competitive algorithm for $\SCH{t}$ with respect to $\PAIRMU$.
Then, $\alpha + \beta \geqslant t$.
\end{lemma}
\begin{proof}
Assume towards contradiction that $\ALG \in \ALGS{\SCH{t}}$ is $\ABC$-competitive with $\alpha + \beta < t$, and consider the family of instances $\{I^n = (x^n,\hat{x}^n,r^n)\}_{n \in \ZZ^+} \subset \INSTANCES{\SCH{t}}$ defined as follows.

The first $n$ requests in $I^n$ are non-overlapping intervals of length $t$, i.e., for $1 \leqslant i \leqslant n$ we let $r^n_i = ((i-1) \cdot t, i \cdot t)$.
The prediction for $r^n_i$ is $\hat{x}^n_i = 0$, and the true bit is $x^n_i = 1 - y_i$, where $y_i$ is $\ALG$'s response to $r^n_i$.
After these first $n$ request, we give $t$ more requests for each interval, $r^n_i$, where $y_i = 0$ (i.e., when $x^n_i = 1$).
Specifically, if $y_i = 0$, we request the $t$ intervals $((i-1) \cdot t, (i-1) \cdot t + 1), ((i-1) \cdot t + 1, (i-1) \cdot t + 2), \ldots, ((i-1) \cdot t + t-1, i \cdot t)$.
All of these intervals have true and predicted bits $0$.

By construction, 
\begin{align*}
\ALG(I^n) = \sum_{i=1}^n (1-y_i), \hspace{0.5cm} \mbox{ and } \hspace{0.5cm} \OPT(I^n) = \left(\sum_{i=1}^n y_i \right) +  t \cdot \left( \sum_{i=1}^n (1-y_i) \right).
\end{align*}
Now, since $\ALG$ is $\ABC$-competitive there exists some $\AT \in \RR^+$ such that
\begin{align*}
\OPT(I^n) \leqslant \alpha \cdot \ALG(I^n) + \beta \cdot \MUZERO(I^n) + \gamma \cdot \MUONE(I^n) + \AT = \alpha \cdot \ALG(I^n) + \beta \cdot \MUZERO(I^n) + \AT.
\end{align*}
Since $\MUZERO(I^n) = \sum_{i=1}^n x^n_i \cdot (1-\hat{x}^n_i) = \sum_{i=1}^n (1-y_i)$, we can rewrite this as
\begin{align*}
\left(\sum_{i=1}^n y_i \right) +  t \cdot \left( \sum_{i=1}^n (1-y_i) \right) \leqslant \alpha \cdot \sum_{i=1}^n (1-y_i) + \beta \cdot \sum_{i=1}^n (1-y_i) + \AT,
\end{align*}
which can be rewritten as
\begin{align*}
\left(\sum_{i=1}^n y_i \right) +  (t - (\alpha + \beta)) \cdot \left( \sum_{i=1}^n (1-y_i) \right) \leqslant \AT.
\end{align*}
Now, observe that $(t-(\alpha + \beta)) \cdot n$ is dominated by the left hand side of the above expression.
Moreover, since $\alpha + \beta < t$, then $t-(\alpha + \beta) > 0$, meaning that $(t-(\alpha+\beta)) \cdot n > 0$.
Since $\lim_{n\to\infty} (t-(\alpha+\beta)) \cdot n = \infty$, we cannot have that $(t-(\alpha + \beta)) \cdot n \leqslant \AT$ for all $n \in \ZZ^+$, since $\AT \in \RR^+$ is a constant independent of $n$, a contradiction.
\end{proof}

\vspace{0.2cm}
\begin{lemma}\label{lem:pareto-optimal_for_sch_t}
Any $\ABC$-competitive Pareto-optimal algorithm for $\SCH{t}$ with respect to $\PAIRMU$ has $\gamma \leqslant 1$.
\end{lemma}
\begin{proof}
Assume towards contradiction that $\ALG \in \ALGS{\SCH{t}}$ is an $\ABC$-competitive Pareto-optimal algorithm for $\SCH{t}$ with respect to $\PAIRMU$ with $\gamma > 1$.
By Lemma~\ref{lem:negative_result_scht}, we have that $\alpha + \beta \geqslant t$, meaning that $\beta \geqslant t - \alpha$.

Now, by Lemmas~\ref{lem:reduction_bdis_t_to_asg_t} and~\ref{lem:sch_t_to_bdis_t}, and the transitivity of the as hard as relation, it follows that $\ASG{t}$ is as hard as $\SCH{t}$ with respect to $\PAIRMU$.
By~\cite[Theorem 2]{BBFL25}, there exists an $(\alpha,t-\alpha,1)$-competitive algorithm for $\ASG{t}$ for all $t \in \ZZ^+$ and all $t \geqslant \alpha > 1$.
Since $\ASG{t}$ is as hard as $\SCH{t}$, there also exist an $(\alpha,t-\alpha,1)$-competitive algorithm for $\SCH{t}$.

However, the existence of an $(\alpha , t-\alpha,1)$-competitive algorithm for $\SCH{t}$ contradicts the Pareto-optimality of $\ALG$, since we can lower $\gamma$ to $1$ without making $\alpha$ or $\beta$ larger.
\end{proof}

We are now ready to prove Lemma~\ref{lem:bdis_t_to_sch_t}:

\vspace{0.2cm}
\bdisttoscht*
\begin{proof}
Throughout the proof, we only measure prediction error using the canonical error measures $\PAIRMU$ defined in Equation~\eqref{eq:pairmu}. 
Therefore, we suppress notation and nomenclature related to error measures whenever possible.

We define a tuple $\rho = \ONLRED{\rho}$ that closely resembles a max-reduction from $\BDIS{t}$ to $\SCH{t}$, without actually being one.
Therefore, similarly to previous proofs, we split this proof into several parts:
\begin{enumerate}[label = {(\roman*)}]
\item \label{item:define_maps_new} Define two maps $\ORALG{\rho} \colon \ALGS{\SCH{t}} \rightarrow \ALGS{\BDIS{t}}$ and $\ORTRANS{\rho} \colon \ALGS{\SCH{t}} \times \INSTANCES{\BDIS{t}} \rightarrow \INSTANCES{\SCH{t}}$.
\item \label{item:pareto_optimal_reduction_new} We show that if $\ALG_{\SCH{t}} \in \ALGS{\SCH{t}}$ is an $\ABC$-competitive Pareto-optimal algorithm for $\SCH{t}$, then $\ALG_{\BDIS{t}} = \ORALG{\rho}(\ALG_{\SCH{t}})$ is an $\ABC$-competitive for $\BDIS{t}$.
\item \label{item:extend_beyond_pareto_optimality_new} We show that $\SCH{t}$ is as hard as $\BDIS{t}$.
\end{enumerate}

\textbf{Towards~\ref{item:define_maps_new}: Definition of $\boldsymbol{{\ORALG{\rho}}}$ and $\boldsymbol{{\ORTRANS{\rho}}}$.}
Let $\ALG_{\SCH{t}} \in \ALGS{\SCH{t}}$ be any algorithm for $\SCH{t}$ and $I \in \INSTANCES{\BDIS{t}}$ be any instance of $\BDIS{t}$.
Then, we define $I' = (x',\hat{x}',r') = \ORTRANS{\rho}(\ALG_{\SCH{t}},I)$ and $\ALG_{\BDIS{t}} = \ORALG{\rho}(\ALG_{\SCH{t}})$ as follows.

Let $G = (V,E)$ be the underlying graph of $I$ and $G_i = (\{v_1,v_2,\ldots,v_i\},E_1 \cup E_2 \cup \cdots \cup E_i)$ be the subgraph of $G$ that is available to $\ALG_{\BDIS{t}}$ after $i$ requests have been given.

When $\ALG_{\BDIS{t}}$ receives a request containing a vertex, $v_i$, together with a collection of edges, $E_i$, and a prediction, $\hat{x}_i$, it determines the \emph{level} of $v_i$, denoted $\ell(v_i)$.
If $v_i$ has a neighbour, $v_j$, in $G_i$ such that $y_j = 0$, then $\ALG_{\BDIS{t}}$ sets $\ell(v_i) = 2$ and rejects $v_i$.
If, on the other hand, all neighbours of $v_i$ in $G_i$ has been rejected from $\ALG_{\BDIS{t}}$'s independent set, then it sets $\ell(v_i) = 1$, and gives a challenge request to $\ALG_{\SCH{t}}$.
The challenge request consists of the new open interval $r_i' = (a,a+t)$ with prediction $\hat{x}_i' = \hat{x}_i$, where $a$ is used to ensure that all challenge requests are non-overlapping. 
Initially, $a = 0$ and then we update $a$ to $a + t$ every time a new challenge request has been given to $\ALG_{\SCH{t}}$.
Finally, $\ALG_{\BDIS{t}}$ accepts $v_i$ if and only if $\ALG_{\SCH{t}}$ accepts $r_i'$. 

Let $\{v_{i_1},v_{i_2},\ldots,v_{i_k}\}$ be the vertices on level $1$ when $\ALG_{\BDIS{t}}$ does not receive any further request.
These are the vertices for which we gave a challenge request to $\ALG_{\SCH{t}}$, meaning that the intervals that has been requested for $\SCH{t}$ are $\{r_{i_1}',r_{i_2}',\ldots,r_{i_k}'\}$.
To align notation, we let $x_{i_j}'$ and $\hat{x}_{i_j}'$ be the true and predicted bits of $r_{i_j}'$.
For each $j=1,2,\ldots,k$, we let $\hat{x}_{i_j}' = \hat{x}_{i_j}$ and
\begin{align}\label{eq:true_bits_challenge_requests_bdis_t_to_sch_t}
x_{i_j}' = \begin{cases}
1-x_{i_j}, &\mbox{if $x_{i_j} = y_{i_j} = 1$} \\
x_{i_j}, &\mbox{otherwise.}
\end{cases}
\end{align}
Then, for each interval $r_{i_j}' = (a_{i_j},b_{i_j})$ with $x_{i_j}' = 1$, $\ALG_{\BDIS{t}}$ gives $t$ further requests to $\ALG_{\SCH{t}}$.
In particular, for $l=1,2,\ldots,t$ it gives the request $r_{i_j,l}' = (a_{i_j} + (l-1) ,b_{i_j} + l)$ with true and predicted bit $\hat{x}_{i_j,l}' = x_{i_j,l}' = 0$. 
This finishes the instance for $\SCH{t}$ and thus the definition of $I'$ and $\ALG_{\BDIS{t}}$. 

Before starting the analysis, we prove that $I'$ is a valid instance of $\SCH{t}$, and we bound the error that is introduced with this reduction.

\textbf{Showing that $\boldsymbol{{I' \in \INSTANCES{\SCH{t}}}}$:}
We prove that $I'$ is a valid instance for $\SCH{t}$.
To this end, we show that $x'$ encodes an optimal solution to $I'$, and we show that no requested interval overlaps more than $t$ other intervals.
If $x_{i_j}' = 0$ then the interval $s_{i_j}$ does not intersect any other and so it is optimal to include this interval.
If, on the other hand, $x_{i_j}' = 1$ then $s_{i_j}$ intersects $t$ other intervals, $s_{i_j,l}'$, and so it is optimal to reject $s_{i_j}$ and instead accept all $s_{i_j,l}$, all of whom has true bit $x_{i_j,l}' = 0$. 
Hence, $x'$ encodes an optimal solution to $I'$.
Finally, by construction no interval in $I'$ overlaps more than $t$ other intervals, and so $I' \in \INSTANCES{\SCH{t}}$. 

\textbf{Bounding the error of $\boldsymbol{{I'}}$:}
By definition of $x'$ in Equation~\eqref{eq:true_bits_challenge_requests_bdis_t_to_sch_t}, the prediction error of $I'$ may be different from the prediction error of $I$, since we sometimes change the true bit of a request without changing the prediction.
In particular, if $x_{i_j} = y_{i_j} = 1$ then we set $x_{i_j}' = 0$ such that $x_{i_j} \neq x_{i_j}'$.
Hence, if $\hat{x}_{i_j} = 0$ then we change a wrong prediction to a correct prediction, and if $\hat{x}_{i_j} = 1$ we change a correct prediction into a wrong prediction.
Hence, when $\hat{x}_{i_j} = 0$ we reduce the value of $\MUZERO$ by $1$, and when $\hat{x}_{i_j} = 1$ we increase the value of $\MUONE$ by $1$.
Since we do not introduce any prediction error outside the challenge requests, we get that
\begin{align}
\MUZERO(I') &\leqslant \MUZERO(I) - \sum_{j=1}^k x_{i_j} \cdot y_{i_j} \cdot (1-\hat{x}_{i_j}) \leqslant \MUZERO(I) \label{eq:bdis_t_to_sch_t_muzero_change} \\
\MUONE(I') &\leqslant \MUONE(I) + \sum_{j=1}^k x_{i_j} \cdot y_{i_j} \cdot \hat{x}_{i_j}. \label{eq:bdis_t_to_sch_t_muone_change}
\end{align}

\textbf{Towards~\ref{item:pareto_optimal_reduction_new}: We show that $\boldsymbol{\ABC}$-competitive Pareto-optimal algorithms for $\SCH{t}$ translates into $\boldsymbol{\ABC}$-competitive algorithms for $\BDIS{t}$.}
Let $\ALG_{\SCH{t}} \in \ALGS{\SCH{t}}$ be an $\ABC$-competitive Pareto-optimal algorithm for $\SCH{t}$.
In the following, we show that $\ALG_{\BDIS{t}} = \ORALG{\rho}(\ALG_{\SCH{t}})$ is an $\ABC$-competitive algorithm for $\BDIS{t}$, where $\ORALG{\rho}$ is the map defined in Item~\ref{item:define_maps_new}.
To this end, recall from Lemma~\ref{lem:pareto-optimal_for_sch_t} that $\gamma \leqslant 1$.

\textbf{Structure of the analysis:}
We show that for all instances $I \in \INSTANCES{\BDIS{t}}$ with $I' = \ORTRANS{\rho}(\ALG_{\SCH{t}},I)$, we have that
\begin{enumerate}[label = {(\alph*)}]
\item $\ALG_{\SCH{t}}(I') \leqslant \ALG_{\BDIS{t}}(I)$, and \label{item:part_1_bdis_t_to_sch_t}
\item $\OPT_{\BDIS{t}}(I) \leqslant \OPT_{\SCH{t}}(I') - \sum_{j=1}^k x_{i_j} \cdot y_{i_j} \cdot \hat{x}_{i_j}$. \label{item:part_2_bdis_t_to_sch_t}
\end{enumerate}
We argue that these conditions are sufficient to prove that $\ALG_{\BDIS{t}}$ is $\ABC$-competitive with respect to $\PAIRMU$ before proving that they are satisfied.
To this end, assume that~\ref{item:part_1_bdis_t_to_sch_t} and~\ref{item:part_2_bdis_t_to_sch_t} are satisfied, and assume that $\ALG_{\SCH{t}}$ is an $\ABC$-competitive Pareto-optimal algorithm for $\SCH{t}$ with additive constant $\AT$.
First, by~\ref{item:part_2_bdis_t_to_sch_t}, we get that
\begin{equation*}
\OPT_{\BDIS{t}}(I) \leqslant \OPT_{\SCH{t}}(I') - \sum_{k=1}^k x_{i_j} \cdot y_{i_j} \cdot \hat{x}_{i_j}.
\end{equation*}
Now, using the $\ABC$-competitiveness of $\ALG_{\SCH{t}}$ (see Definition~\ref{def:competitiveness}), then
\begin{equation*}
\OPT_{\BDIS{t}}(I) \leqslant  \alpha \cdot \ALG_{\SCH{t}}(I') + \beta \cdot \MUZERO(I') + \gamma \cdot \MUONE(I') + \AT - \sum_{j=1}^k x_{i_j}\cdot y_{i_j} \cdot \hat{x}_{i_j}.
\end{equation*}
Then, by~\ref{item:part_1_bdis_t_to_sch_t} and Equations~\eqref{eq:bdis_t_to_sch_t_muzero_change} and~\eqref{eq:bdis_t_to_sch_t_muone_change},
\begin{equation*}
\OPT_{\BDIS{t}}(I) \leqslant 
\alpha \cdot \ALG_{\BDIS{t}}(I) + \beta \cdot \MUZERO(I) + \gamma \cdot \MUONE(I) + \AT + (\gamma - 1) \cdot \sum_{j=1}^k x_{i_j} \cdot y_{i_j} \cdot \hat{x}_{i_j}. 
\end{equation*}
Finally, since $\gamma \leqslant 1$, we get that
\begin{equation*}
\OPT_{\BDIS{t}}(I) \leqslant \alpha \cdot \ALG_{\BDIS{t}}(I) + \beta \cdot \MUZERO(I) + \gamma \cdot \MUONE(I) + \AT.
\end{equation*}
Hence, assuming that~\ref{item:part_1_bdis_t_to_sch_t} and~\ref{item:part_2_bdis_t_to_sch_t} are satisfied, the above shows that if $\ALG_{\SCH{t}}$ is an $\ABC$-competitive Pareto-optimal algorithm for $\SCH{t}$ with respect to $\PAIRMU$, then $\ALG_{\BDIS{t}}$ is an $\ABC$-competitive algorithm for $\BDIS{t}$ with respect to $\PAIRMU$.

\textbf{Verifying~\ref{item:part_1_bdis_t_to_sch_t} and~\ref{item:part_2_bdis_t_to_sch_t}:}
Observe that $\ALG_{\BDIS{t}}$ and $\ALG_{\SCH{t}}$ gains the same profit on $I$ and $I'$. 
In particular, for each vertex on level $1$ $\ALG_{\BDIS{t}}$ always answers the same for $v_{i_j}$ as $\ALG_{\SCH{t}}$ does for $r_{i_j}$.
Moreover, $\ALG_{\SCH{t}}$ cannot gain any further profit on the requests that are non-challenge requests without creating an infeasible solution, since the only non-challenge requests are the intervals $r_{i_j,l}$, all of which overlap intervals that $\ALG_{\SCH{t}}$ has previously accepted.
Hence, for any $I \in \INSTANCES{\BDIS{t}}$, we have that $\ALG_{\SCH{t}}(\ORTRANS{\rho}(\ALG_{\SCH{t}},I)) = \ALG_{\BDIS{t}}(I)$ meaning that Inequality~\ref{item:part_1_bdis_t_to_sch_t} is satisfied.

To see that Inequality~\ref{item:part_2_bdis_t_to_sch_t} is satisfied we make the following observation.
For each vertex $v_i$ with $\ell(v_i) = 2$ and $x_i = 0$ there exists a vertex, $v_{i_j}$, on level $1$ such that $(v_i,v_{i_j}) \in E$, $y_{i_j} = 0$, and $x_{i_j} = 1$.
Indeed, a vertex $v_i$ only satisfies that $\ell(v_i) = 2$ if there exists some $j \in \{1,2,\ldots,k\}$ such that $i_j < i$, $y_{i_j} = 0$, and $(v_i,v_{i_j}) \in E$. 
Furthermore, since $x_i = 0$ and $(v_i,v_{i_j}) \in E$ it follows that $x_{i_j} = 1$, as otherwise $x$ encodes an infeasible solution. 

Hence, for any vertex $v_i$ with $\ell(v_i) = 2$ and $x_i = 0$, we can account the profit that $\OPT_{\BDIS{t}}$ gains from $v_i$ to the vertex $v_{i_j}$ with the above properties and smallest index, if there are more such vertices.
Any vertex $v_i$ with $\ell(v_i) = 1$ and $x_i = 0$, we account the profit that $\OPT_{\BDIS{t}}$ gains from $v_i$ to $v_i$ itself.
Hence, we may bound the profit of $\OPT_{\BDIS{t}}(I)$ by
\begin{align*}
\OPT_{\BDIS{t}}(I) \leqslant \sum_{j=1}^k \left(t \cdot x_{i_j} \cdot (1-y_{i_j}) + (1-x_{i_j})\right).
\end{align*}
At the same time, $\OPT_{\SCH{t}}$ gains profit $1$ from all intervals $r_{i_j}$ with $x_{i_j}' = 0$, and, due to the existence of the intervals $r_{i_j,l}$, $\OPT_{\SCH{t}}$ gains profit $t$ from each interval $r_{i_j}$ with $x_{i_j}' = 1$, in which case $y_{i_j} = 0$ by Equation~\eqref{eq:true_bits_challenge_requests_bdis_t_to_sch_t}.
Since we set $x_{i_j}' = 0$ whenever $x_{i_j} = y_{i_j} = 1$ and set $x_{i_j}' = x_{i_j}$ otherwise, we get that
\begin{align*}
\OPT_{\SCH{t}}(I') &= \sum_{j=1}^k \left( t \cdot x_{i_j}' \cdot (1-y_{i_j}) + (1-x_{i_j}') \right) = \sum_{j=1}^k \left( t \cdot x_{i_j} \cdot (1-y_{i_j}) + (1-x_{i_j}) + x_{i_j} \cdot y_{i_j} \right). 
\end{align*}
Putting it all together, we get that
\begin{align*}
\OPT_{\BDIS{t}}(I) \leqslant \; &\sum_{j=1}^k \left(t \cdot x_{i_j} \cdot (1-y_{i_j}) + (1-x_{i_j})\right) \\
\leqslant \; &\sum_{j=1}^k \left(t \cdot x_{i_j} \cdot (1-y_{i_j}) + (1-x_{i_j}) + x_{i_j} \cdot y_{i_j} - x_{i_j}\cdot y_{i_j} \cdot \hat{x}_{i_j} \right) \\
= \; &\sum_{j=1}^k \left(t \cdot x_{i_j} \cdot (1-y_{i_j}) + (1-x_{i_j}) + x_{i_j} \cdot y_{i_j}\right) - \sum_{j=1}^k x_{i_j}\cdot y_{i_j} \cdot \hat{x}_{i_j} \\
= \; &\OPT_{\SCH{t}}(I') - \sum_{j=1}^k x_{i_j} \cdot y_{i_j} \cdot \hat{x}_{i_j},
\end{align*}
implying that Inequality~\ref{item:part_2_bdis_t_to_sch_t} is satisfied, which finishes the proof of Item~\ref{item:pareto_optimal_reduction_new}.

\textbf{Towards~\ref{item:extend_beyond_pareto_optimality_new}: Proving that $\boldsymbol{\SCH{t}}$ is as hard as $\boldsymbol{\BDIS{t}}$.}
Let $\ALG_{\SCH{t}} \in \ALGS{\SCH{t}}$ be an $\ABC$-competitive algorithm for $\SCH{t}$.
We need to show that there exists an $\ABC$-competitive algorithm for $\BDIS{t}$.
If $\ALG_{\SCH{t}}$ is Pareto-optimal, then $\ALG_{\BDIS{t}} = \ORALG{\rho}(\ALG_{\SCH{t}})$ is an $\ABC$-competitive algorithm for $\BDIS{t}$ by Item~\ref{item:pareto_optimal_reduction_new}.
If, on the other hand, $\ALG_{\SCH{t}}$ is not Pareto-optimal, then there exists some $(\alpha',\beta',\gamma')$-competitive Pareto-optimal algorithm for $\SCH{t}$, say $\PAR_{\SCH{t}}$, with $\alpha' \leqslant \alpha$, $\beta' \leqslant \beta$, and $\gamma'\leqslant \gamma$.
Since $\PAR_{\SCH{t}}$ is Pareto-optimal, then $\ORALG{\rho}(\PAR_{\SCH{t}})$ is an $(\alpha',\beta',\gamma')$-competitive algorithm for $\BDIS{t}$, by Item~\ref{item:pareto_optimal_reduction_new}.
Using that $\alpha' \leqslant \alpha$, $\beta' \leqslant \beta$, $\gamma' \leqslant \gamma$, and the $(\alpha',\beta',\gamma')$-competitiveness of $\ORALG{\rho}(\PAR_{\SCH{t}})$, we get that
\begin{align*}
\OPT_{\BDIS{t}}(I) &\leqslant \alpha' \cdot \ORALG{\rho}(\PAR_{\SCH{t}}) + \beta' \cdot \MUZERO(I) + \gamma' \cdot \MUONE(I) + \AT \\
&\leqslant \alpha \cdot \ORALG{\rho}(\PAR_{\SCH{t}}) + \beta \cdot \MUZERO(I) + \gamma \cdot \MUONE(I) + \AT,
\end{align*}
which means that $\ORALG{\rho}(\PAR_{\SCH{t}})$ is an $\ABC$-competitive algorithm for $\BDIS{t}$.
\end{proof}

\subsection{Proving Lemma~\ref{lem:more_colors_is_easier}}\label{sec:proof_of_lemma_more_colors_is_easier}

In this section, we give the proof of Lemma~\ref{lem:more_colors_is_easier}. 
Similarly to Previous results, we need to prove a technical Lemma first.

\vspace{0.2cm}
\begin{lemma}\label{lem:pareto-optimal_for_bdist}
Any $\ABC$-competitive Pareto-optimal algorithm for $\BDIS{t}$ with respect to $\PAIRMU$ has $\gamma \leqslant 1$.
\end{lemma}
\begin{proof}
Assume towards contradiction that $\ALG_{\BDIS{t}}$ is an $\ABC$-competitive Pareto-optimal algorithm for $\BDIS{t}$ with $\gamma > 1$.
Since $\ASG{t}$ is as hard as $\BDIS{t}$ by Lemma~\ref{lem:reduction_bdis_t_to_asg_t}, the existence of $\ALG_{\BDIS{t}}$ implies the existence of an $\ABC$-competitive algorithm for $\ASG{t}$, say $\ALG_{\ASG{t}}$.
Observe by Lemma~\ref{lem:asg_t_to_bdis_t} that $\ALG_{\ASG{t}}$ is necessarily Pareto-optimal for $\ASG{t}$ as well.
Indeed, if $\ALG_{\ASG{t}}$ is not Pareto-optimal for $\ASG{t}$, then there exists some $(\alpha',\beta',\gamma')$-competitive Pareto-optimal algorithm for $\ASG{t}$, say $\PAR_{\ASG{t}}$, with $\alpha' \leqslant \alpha$, $\beta' \leqslant \beta$, and $\gamma' \leqslant \gamma$, where at least one of the three inequalities being strict. 
Without loss of generality, assume that $\alpha < \alpha'$.
Then, by Lemma~\ref{lem:asg_t_to_bdis_t}, there exists an $(\alpha',\beta',\gamma')$-competitive algorithm for $\BDIS{t}$ contradicting the Pareto-optimality of $\ALG_{\BDIS{t}}$.

Now, since $\ALG_{\ASG{t}}$ is an $\ABC$-competitive Pareto-optimal algorithm for $\ASG{t}$ with $\gamma > 1$, we have a contradiction with Theorem~\ref{thm:pareto_optimal_algorithms_for_asg_t}.
Thus, if $\ALG_{\BDIS{t}}$ is Pareto-optimal for $\BDIS{t}$ it cannot have $\gamma > 1$.
\end{proof}

With this technical lemma proven, we can prove Lemma~\ref{lem:more_colors_is_easier}.

\vspace{0.2cm}
\morecolorsiseasier*
\begin{proof}
Throughout the proof, we only measure prediction error using the canonical error measures $\PAIRMU$ defined in Equation~\eqref{eq:pairmu}. 
Therefore, we suppress notation and nomenclature related to error measures whenever possible.

We define a tuple $\rho = \ONLRED{\rho}$ that closely resembles a max-reduction from $\MCS{k}{kt}$ to $\BDIS{t}$, without actually being one.
Therefore, similarly to previous proofs, we split this proof into several parts:
\begin{enumerate}[label = {(\roman*)}]
\item \label{item:define_maps_new_new} Define two maps $\ORALG{\rho} \colon \ALGS{\BDIS{t}} \rightarrow \ALGS{\MCS{k}{kt}}$ and $\ORTRANS{\rho} \colon \ALGS{\BDIS{t}} \times \INSTANCES{\MCS{k}{kt}} \rightarrow \INSTANCES{\BDIS{t}}$.
\item \label{item:pareto_optimal_reduction_new_new} We show that if $\ALG_{\BDIS{t}} \in \ALGS{\BDIS{t}}$ is an $\ABC$-competitive Pareto-optimal algorithm for $\BDIS{t}$, then $\ALG_{\MCS{k}{kt}} = \ORALG{\rho}(\ALG_{\BDIS{t}})$ is an $\ABC$-competitive for $\MCS{k}{kt}$.
\item \label{item:extend_beyond_pareto_optimality_new_new} We show that $\BDIS{t}$ is as hard as $\MCS{k}{kt}$.
\end{enumerate}

\textbf{Towards~\ref{item:define_maps_new_new}: Definition of $\boldsymbol{{\ORALG{\rho}}}$ and $\boldsymbol{{\ORTRANS{\rho}}}$.}
Let $\ALG_{\BDIS{t}} \in \ALGS{\BDIS{t}}$ and $I = (x,\hat{x},r) \in \INSTANCES{\BDIS{t}}$.
We define $I' = (x',\hat{x}',r') = \ORTRANS{\rho}(\ALG_{\BDIS{t}},I)$ and $\ALG_{\MCS{k}{kt}} = \ORALG{\rho}(\ALG_{\BDIS{t}})$ as follows. 

When $\ALG_{\MCS{k}{kt}}$ receives a vertex, $v_i$, and a set of edges, $E_i$, with predicted bit $\hat{x}_i$ it determines the level of $v_i$, denoted $\ell(v_i)$, as follows.
Let $V_i = \{v_1,v_2,\ldots,v_i\}$ be the first $i$ vertices that has been revealed to $\ALG_{\MCS{k}{kt}}$.
If there exists $l \in \{1,2,\ldots,k\}$ such that for all vertices $u \in V_i$ on level $l$, either $u$ was rejected by $\ALG_{\MCS{k}{kt}}$ (i.e.\ $y_u = 1$) or $(u,v_i) \not\in E_i$, we let $\ell(v_i)$ be the smallest such $l$.
Otherwise, we let $\ell(v_i) = k+1$.
Then, if $\ell(v_i) \neq k+1$, $\ALG_{\MCS{k}{kt}}$ gives a challenge request to $\ALG_{\BDIS{t}}$ containing a new isolated vertex, $v_i'$, with predicted bit $\hat{x}_i$, and then $\ALG_{\MCS{k}{kt}}$ outputs the same for $v_i$ as $\ALG_{\BDIS{t}}$ does for $v_i'$, meaning that $\ALG_{\MCS{k}{kt}}$ accepts $v_i$ into its solution if and only if $\ALG_{\BDIS{t}}$ accepts $v_i'$ into its independent set.
If, on the other hand, $\ell(v_i) = k+1$, $\ALG_{\MCS{k}{kt}}$ rejects $v_i$.

Let $V^l = \{v_{l,i_1},v_{l,i_2},\ldots,v_{l,i_{s_l}}\}$ be the vertices on level $l$, for $l \in \{1,2,\ldots,k\}$, when $\ALG_{\MCS{k}{kt}}$ receives no more requests. 
Observe that there has been given a challenge request to $\ALG_{\BDIS{t}}$ for each vertex in $\bigcup_{l=1}^k V^l$.
We let $v_{l,i_j}'$ denote the challenge request given to $\ALG_{\BDIS{t}}$, corresponding to the request $v_{l,i_j}$ given to $\ALG_{\MCS{k}{kt}}$.
Since $\ALG_{\MCS{k}{kt}}$ gives the same prediction for $v_{l,i_j}'$ as it received for $v_{l,i_j}$, we have that $\hat{x}_{l,i_j}' = \hat{x}_{l,i_j}$ for all $l \in \{1,2,\ldots,k\}$ and all $j \in \{1,2,\ldots,s_l\}$. 
Next, we determine the true bits of the challenge requests.
For any $l \in \{1,2,\ldots,k\}$ and any $j \in \{1,2,\ldots,s_l\}$ we let
\begin{align}\label{eq:true_bits_mcs_1_t_to_mcs_k_kt}
x_{l,i_j}' = \begin{cases}
1-x_{l,i_j}, &\mbox{if $x_{l,i_j} = y_{l,i_j} = 1$} \\
x_{l,i_j}, &\mbox{otherwise.}
\end{cases}
\end{align}
We only have that $x_{l,i_j} \neq x_{l,i_j}'$ if $\ALG_{\MCS{k}{kt}}$ correctly identifies a true $1$ on one of the first $k$ levels.

After computing the true bits of the challenge requests we give a number of extra requests to $\ALG_{\BDIS{t}}$.
For each $l = 1,2,\ldots,k$ and each $j=1,2,\ldots,s_l$, if $x_{l,i_j}' = 1$ and $y_{l,i_j} = 0$, we give a \emph{block} of requests.
Each of these blocks contain $t$ new vertices $b_{l,i_j,p}'$, for $p = 1,2,\ldots,t$, each of which revealed together with the edge $(b_{l,i_j,p}',v_{l,i_j}')$ with true and predicted bits $x_{l,i_j,p}' = \hat{x}_{l,i_j,p}' = 0$.
Observe that $\OPT_{\BDIS{t}}$ accepts all $b_{l,i_j,p}'$ as it rejects $v_{l,i_j}'$, but since $\ALG_{\BDIS{t}}$ has accepted $v_{l,i_j}'$ it must reject all $b_{l,i_j,p}'$ to create a feasible solution.

\textbf{Showing that $\boldsymbol{{I' \in \INSTANCES{\BDIS{t}}}}$:}
We argue that $x'$ encodes an optimal solution to $I'$.
To this end, observe that $\OPT_{\BDIS{t}}$ accepts the vertices of all challenge requests with $x_{l,i_j}' = 0$.
The only way a challenge request may have true bit $x_{l,i_j}' = 1$ is if $y_{l,i_j} = 0$, in which case $v_{l,i_j}'$ is adjacent to the $t$ vertices $b_{l,i_j,p}'$, and so it is optimal to reject $v_{l,i_j}'$ and instead accept the $t$ vertices $b_{l,i_j,p}'$.
Lastly, all vertices in $I'$ have degree at most $t$, and so $I' \in \INSTANCES{\BDIS{t}}$.

\textbf{Bounding the error of $\boldsymbol{{I'}}$:}
By Equation~\eqref{eq:true_bits_mcs_1_t_to_mcs_k_kt} we find that the error of $I'$ may be different from the error of $I$.
We bound the error of $I'$ by a function of the error of $I$ as follows.
The only time we have that $x_{l,i_j}' \neq x_{l,i_j}$ is when $x_{l,i_j} = y_{l,i_j} = 1$, in which case we set $x_{l,i_j}' = 0$.
Hence, if $\hat{x}_{l,i_j} = \hat{x}_{l,i_j}' = 0$ then we change an incorrect prediction into a correct prediction which will make the value of $\MUZERO$ decrease by $1$.
On the other hand, if $\hat{x}_{l,i_j} = \hat{x}_{l,i_j}' = 1$ then we change a correct prediction to an incorrect prediction which will make the value of $\MUONE$ increase by one.
Therefore,
\begin{align}
\MUZERO(I') &\leqslant \MUZERO(I) - \sum_{l=1}^k \sum_{j=1}^{s_l} (1-\hat{x}_{l,i_j}) \cdot x_{l,i_j} \cdot y_{l,i_j} \leqslant \MUZERO(I) \label{eq:M0''''} \\
\MUONE(I') &\leqslant \MUONE(I) + \sum_{l=1}^k \sum_{j=1}^{s_l} \hat{x}_{l,i_j} \cdot x_{l,i_j} \cdot y_{l,i_j}\label{eq:M1''''}
\end{align}
These bounds are similar to Equations~\eqref{eq:bdis_t_to_sch_t_muzero_change} and~\eqref{eq:bdis_t_to_sch_t_muone_change} with the only exception that we have more levels this time.

\textbf{Towards~\ref{item:pareto_optimal_reduction_new_new}: We show that $\boldsymbol{\ABC}$-competitive Pareto-optimal algorithms for $\BDIS{t}$ translates into $\boldsymbol{\ABC}$-competitive algorithms for $\MCS{k}{kt}$.}
Let $\ALG_{\BDIS{t}} \in \ALGS{\BDIS{t}}$ be an $\ABC$-competitive Pareto-optimal algorithm for $\BDIS{t}$.
In the following, we show that $\ALG_{\MCS{k}{kt}} = \ORALG{\rho}(\ALG_{\BDIS{t}})$ is an $\ABC$-competitive algorithm for $\MCS{k}{kt}$, where $\ORALG{\rho}$ is the map defined in Item~\ref{item:define_maps_new_new}.
To this end, recall from Lemma~\ref{lem:pareto-optimal_for_bdist} that $\gamma \leqslant 1$.

\textbf{Structure of the analysis:}
With the above setup, we show that
\begin{enumerate}[label = {(\Alph*)}]
\item $\ALG_{\BDIS{t}}(I') \leqslant \ALG_{\MCS{k}{kt}}(I)$, and \label{item:condition_alg_mcs_k_kt_to_mcs_1_t}
\item $\OPT_{\MCS{k}{kt}}(I) \leqslant \OPT_{\BDIS{t}}(I') - \sum_{l=1}^k \sum_{j=1}^{s_l} x_{l,i_j} \cdot y_{l,i_j} \cdot \hat{x}_{l,i_j}$. \label{item:condition_opt_mcs_k_kt_to_mcs_1_t}
\end{enumerate}
We show that these two inequalities imply that $\BDIS{t}$ is as hard as $\MCS{k}{kt}$.
Hence, assume that~\ref{item:condition_alg_mcs_k_kt_to_mcs_1_t} and~\ref{item:condition_opt_mcs_k_kt_to_mcs_1_t} are satisfied and that $\ALG_{\BDIS{t}}$ is an $\ABC$-competitive Pareto-optimal algorithm for $\BDIS{t}$ with additive constant $\AT$.
Then, by~\ref{item:condition_opt_mcs_k_kt_to_mcs_1_t},
\begin{equation*}
\OPT_{\MCS{k}{kt}}(I) \leqslant \OPT_{\BDIS{t}}(I') - \sum_{l=1}^k \sum_{j=1}^{s_l} x_{l,i_j} \cdot y_{l,i_j} \cdot \hat{x}_{l,i_j}.
\end{equation*}
Then, by the $\ABC$-competitiveness of $\ALG_{\BDIS{t}}$, we have that
\begin{equation*}
\OPT_{\MCS{k}{kt}}(I) \leqslant \alpha \cdot \ALG_{\BDIS{t}}(I') + \beta \cdot \MUZERO(I') + \gamma \cdot \MUONE(I') + \AT - \sum_{l=1}^k \sum_{j=1}^{s_l} x_{l,i_j} \cdot y_{l,i_j} \cdot \hat{x}_{l,i_j}.
\end{equation*}
Then, by~\ref{item:condition_alg_mcs_k_kt_to_mcs_1_t} and Equations~\eqref{eq:M0''''} and~\eqref{eq:M1''''},
\begin{align*}
\OPT_{\MCS{k}{kt}}(I) \leqslant \; &\alpha \cdot \ALG_{\MCS{k}{kt}}(I) + \beta \cdot \MUZERO(I) + \gamma \cdot \MUONE(I) \\
&+ \AT + (\gamma - 1) \cdot \sum_{l=1}^k \sum_{j=1}^{s_l} x_{l,i_j} \cdot y_{l,i_j} \cdot \hat{x}_{l,i_j}.
\end{align*}
Finally, since $\ALG_{\BDIS{t}}$ is Pareto-optimal, then, by Lemma~\ref{lem:pareto-optimal_for_bdist}, $\gamma \leqslant 1$ and so
\begin{equation*}
\OPT_{\MCS{k}{kt}}(I) \leqslant \alpha \cdot \ALG_{\MCS{k}{kt}}(I) + \beta \cdot \MUZERO(I) + \gamma \cdot \MUONE(I) + \AT.
\end{equation*}
Hence, assuming that~\ref{item:condition_alg_mcs_k_kt_to_mcs_1_t} and~\ref{item:condition_opt_mcs_k_kt_to_mcs_1_t} are satisfied, the above translates the $\ABC$-competitiveness of $\ALG_{\BDIS{t}}$ to $\ABC$-competitiveness of $\ALG_{\MCS{k}{kt}}$ with respect to $\PAIRMU$.


\textbf{Verifying~\ref{item:condition_alg_mcs_k_kt_to_mcs_1_t} and~\ref{item:condition_opt_mcs_k_kt_to_mcs_1_t}:}
With the above setup, $\ALG_{\MCS{k}{kt}}$ always creates a feasible solution (vertices on level $l$ can be colored by the same color, and all vertices on level $k+1$ are not given a color).
Further, $\ALG_{\BDIS{t}}$ can only gain profit from the vertices $v_{l,i_j}'$, for $l \in \{1,2,\ldots,k\}$ and $j=1,2,\ldots,s_l$, as otherwise it creates an infeasible solution.
Hence, for any $I \in \INSTANCES{\MCS{k}{kt}}$,
\begin{align*}
\ALG_{\MCS{k}{kt}}(I) = \ALG_{\BDIS{t}}(I') = \sum_{l=1}^k \sum_{j=1}^{s_l} (1-y_{l,i_j}).
\end{align*}
This verifies~\ref{item:condition_alg_mcs_k_kt_to_mcs_1_t}.

Towards~\ref{item:condition_opt_mcs_k_kt_to_mcs_1_t}, observe that $\OPT_{\BDIS{t}}$ gains profit $1$ for each request in $I'$ with true bit $0$.
We express the profit of $\OPT_{\BDIS{t}}$ as a function of the true bits of the vertices on the first $k$ levels from $I$.
To this end, let $v_{l,i_j}$ be a vertex that was requested for $\MCS{k}{kt}$, with $l \in \{1,2,\ldots,k\}$.
If $x_{l,i_j} = 0$, then $\OPT_{\BDIS{t}}$ gains profit $1$, since the associated vertex $c_{l,i_j}$ also has true bit $0$.
If, on the other hand $x_{l,i_j} = 1$, then $\OPT_{\BDIS{t}}$ gains profit $1$ if $y_{l,i_j} = 1$ as we set $x_{l,i_j}' = 0$ whenever $x_{l,i_j} = y_{l,i_j} = 1$, and $\OPT_{\BDIS{t}}$ gains profit $t$ if $y_{l,i_j} = 0$ due to the block of request given to $\ALG_{\BDIS{t}}$ after $\ALG_{\MCS{k}{kt}}$ does not receive any further requests.
Therefore,
\begin{align}\label{eq:opt_mcs1t}
\OPT_{\BDIS{t}}(I') = \sum_{l=1}^k \sum_{j=1}^{s_l} \left( (1-x_{l,i_j}) + x_{l,i_j}\cdot y_{l,i_j} + t \cdot x_{l,i_j} \cdot (1-y_{l,i_j}) \right).
\end{align}

Next, we bound $\OPT_{\MCS{k}{kt}}$.
Recall that $\OPT_{\MCS{k}{kt}}(I) = \sum_{i=1}^n (1-x_i)$.
However, to make $\OPT_{\MCS{k}{kt}}$ more easy to relate to $\OPT_{\BDIS{t}}$, we bound the profit of $\OPT_{\MCS{k}{kt}}$ as a function of the true bits of the vertices on the first $k$ levels from $I$.

To this end, fix any vertex $v_i$ for which $x_i = 0$ so $\OPT_{\MCS{k}{kt}}$ gains profit from $v_i$.
If $\ell(v_i) \in \{1,2,\ldots,k\}$ then there exist $l \in \{1,2,\ldots,k\}$ and $j \in \{1,2,\ldots,s_l\}$ such that $v_i = v_{l,i_j}$ and so we may assign the profit that $\OPT_{\MCS{k}{kt}}$ gains from $v_i$ to $x_{l,i_j}$.
On the other hand, if $\ell(v_i) = k+1$ then for all $l \in \{1,2,\ldots,k\}$, there exists a vertex $u_l$ with $\ell(u_l) = l$, such that $(u_l,v_i) \in E$, $y_{u_l} = 0$, and $x_{u_l} = 1$, as otherwise $v_i$ would have been placed on another level.
We assign a profit of $\frac{1}{k}$ to each vertex on levels $1,2,\ldots,k$ that is adjacent to $v_i$.
Since $\MAXDEGREE(G) \leqslant kt$, the profit of a vertex, $v$, on level $l \in \{1,2,\ldots,k\}$ with $x_v = 1$ and $y_v = 0$ is at most $\frac{1}{k} \cdot kt = t$. 
With this accounting scheme, we can upper bound the profit of $\OPT_{\MCS{k}{kt}}$ by
\begin{align}\label{eq:opt_mcskkt}
\OPT_{\MCS{k}{kt}}(I) \leqslant \sum_{l=1}^k \sum_{j=1}^{s_l} \left( (1-x_{l,i_j}) + t \cdot x_{l,i_j}\cdot (1-y_{l,i_j}) \right).
\end{align}
Combining the bounds from Equations~\eqref{eq:opt_mcs1t} and~\eqref{eq:opt_mcskkt}, we get that
\begin{align*}
\OPT_{\MCS{k}{kt}}(I) \overset{\eqref{eq:opt_mcskkt}}{\leqslant} \; &\sum_{l=1}^k \sum_{j=1}^{s_l} \left( (1-x_{l,i_j}) + t \cdot x_{l,i_j}\cdot (1-y_{l,i_j}) \right) \\
= \; &\sum_{l=1}^k \sum_{j=1}^{s_l} \left( (1-x_{l,i_j}) + x_{l,i_j} \cdot y_{l,i_j} + t \cdot x_{l,i_j}\cdot (1-y_{l,i_j}) \right) \\
&- \sum_{l=1}^k \sum_{j=1}^{s_l} x_{l,i_j} \cdot y_{l,i_j} \\
\overset{\eqref{eq:opt_mcs1t}}{=} \; &\OPT_{\BDIS{t}}(I') - \sum_{l=1}^k \sum_{j=1}^{s_l} x_{l,i_j} \cdot y_{l,i_j} \\
\leqslant \; &\OPT_{\BDIS{t}}(I') - \sum_{l=1}^k \sum_{j=1}^{s_l} x_{l,i_j} \cdot y_{l,i_j} \cdot \hat{x}_{l,i_j}.
\end{align*}
This verifies~\ref{item:condition_opt_mcs_k_kt_to_mcs_1_t}, which finishes the proof of Item~\ref{item:pareto_optimal_reduction_new_new}.

\textbf{Towards~\ref{item:extend_beyond_pareto_optimality_new_new}: Proving that $\BDIS{t}$ is as hard as $\MCS{k}{kt}$.}
This proof is very similar to that of Item~\ref{item:extend_beyond_pareto_optimality_new} in the proof of Lemma~\ref{lem:bdis_t_to_sch_t}.
\end{proof}

\section{Algorithmic and Complexity Theoretical Consequences}\label{sec:consequences}

In this section we relate the results from Sections~\ref{sec:a_collection_of_problems_from_CCWM_max} and~\ref{sec:independent_set_vs_asg_t} to the complexity theory introduced in~\cite{BBFL25}, and in particular the complexity classes $\CCWM{\PAIRMUFORCC}{t}$.
As proven in~\cite{BBFL25}, proving membership and hardness results for $\CCWM{\PAIRMUFORCC}{t}$ imply several algorithmic results.

All results in this section are direct consequences of Theorems~\ref{thm:section_3} and~\ref{thm:main_result}.

\vspace{0.2cm}
\begin{theorem}\label{thm:asg_t_is_as_hard_as_bdis_t_purely_online}
For all $t \in \ZZ^+$, $\BDIS{t}$, $\SCH{t}$, $\SP{t}$, and $\CLI{t}$ are $\CCWM{\PAIRMUFORCC}{t}$-complete.
\end{theorem}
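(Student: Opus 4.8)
The plan is to combine the equivalences already established with the transitivity of the as-hard-as relation; no new construction is needed. Fix $t \in \ZZ^+$ and let $P$ be any of $\SP{t}$, $\SCH{t}$, or $\CLI{t}$. By Theorem~\ref{thm:sp_t_completeness}, Theorem~\ref{thm:sch_t_completeness}, or Theorem~\ref{thm:clique_completeness}, respectively, $P$ is exactly as hard as $\BDIS{t}$ with respect to $\PAIRMU$; for $\SP{t}$ and $\CLI{t}$ this is the instantiation $\PAIRETA = \PAIRMU$ of a statement that holds for every pair of error measures, so matching up the error measures is immediate. By Theorem~\ref{thm:main_result}, $\BDIS{t}$ is exactly as hard as $\ASG{t}$ with respect to $\PAIRMU$.

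Next I would expand ``exactly as hard as'' into its two constituent as-hard-as statements and chain them. For membership, note that $\ASG{t}$ is as hard as $\BDIS{t}$ (Theorem~\ref{thm:main_result}) and $\BDIS{t}$ is as hard as $P$ (the relevant completeness theorem); by transitivity of the as-hard-as relation (Lemma~\ref{lem:transitivity}), $\ASG{t}$ is as hard as $P$, i.e.\ $P \in \CCWM{\PAIRMUFORCC}{t}$. For hardness, note that $P$ is as hard as $\BDIS{t}$ and $\BDIS{t}$ is as hard as $\ASG{t}$, so again by Lemma~\ref{lem:transitivity}, $P$ is as hard as $\ASG{t}$, i.e.\ $P$ is $\CCWM{\PAIRMUFORCC}{t}$-hard. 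Combining the two, $P$ is $\CCWM{\PAIRMUFORCC}{t}$-complete, and since $P$ was an arbitrary member of $\{\SP{t}, \SCH{t}, \CLI{t}\}$, the theorem follows.

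I do not anticipate any genuine obstacle here: all the substance is already carried by Theorems~\ref{thm:main_result},~\ref{thm:sp_t_completeness},~\ref{thm:sch_t_completeness}, and~\ref{thm:clique_completeness}, and the remaining argument is just a two-step application of transitivity. The only point requiring a moment's care is that the reductions underlying the completeness theorems for $\SP{t}$ and $\CLI{t}$ are stated for arbitrary error measures while Theorem~\ref{thm:main_result} is specific to $\PAIRMU$, so one should be explicit that it is the $\PAIRMU$-versions of all four statements that are being composed.
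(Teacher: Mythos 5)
Your proposal is correct and is essentially the paper's own argument: the paper states Theorem~\ref{thm:asg_t_is_as_hard_as_bdis_t_purely_online} as a direct consequence of Theorems~\ref{thm:main_result},~\ref{thm:sp_t_completeness},~\ref{thm:sch_t_completeness}, and~\ref{thm:clique_completeness}, and your write-up simply makes the transitivity step (Lemma~\ref{lem:transitivity}) and the $\PAIRMU$-instantiation explicit, which is exactly what is implicit in the paper.
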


\begin{remark}
Theorem~\ref{thm:asg_t_is_as_hard_as_bdis_t_purely_online} also holds in the purely online setting, i.e.\ with respect to $\PAIRZM$. 
\end{remark}

Furthermore, we get several membership and hardness results for $\CCWM{\PAIRMUFORCC}{t}$:

\vspace{0.2cm}
\begin{theorem}
For all $k,t \in \ZZ^+$,
\begin{itemize}
\item $\MCS{k}{kt} , \MAT{\lfloor t/2 \rfloor + 1} \in \CCWM{\PAIRMUFORCC}{t}$, and
\item $\SCHEDULING$, $\IS$, $\CLIQUE$, $\SETPACKING$ are $\CCWM{\PAIRMUFORCC}{t}$-hard.
\end{itemize}
\end{theorem}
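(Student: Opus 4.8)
The plan is to obtain every assertion by chaining the exactly-as-hard-as equivalences already proved in Sections~\ref{sec:independent_set_vs_asg_t} and~\ref{sec:a_collection_of_problems_from_CCWM_max} through transitivity of the as-hard-as relation (Lemma~\ref{lem:transitivity}), together with one elementary observation that I will call \emph{relaxation monotonicity}: if $P$ and $Q$ are two of the problems considered here that differ only in a bound parameter, with $\INSTANCES{P} \subseteq \INSTANCES{Q}$ and with identical optimum value, profit function, and error measures $\MUZERO,\MUONE$ on every instance in $\INSTANCES{P}$, then $Q$ is as hard as $P$. This is immediate from Definition~\ref{def:as_hard_as}: given an $\ABC$-competitive Pareto-optimal algorithm for $Q$, it satisfies Equation~\eqref{eq:competitiveness_max} on all of $\INSTANCES{Q}$, hence on all of $\INSTANCES{P}$, so the same algorithm (with its input domain restricted) is $\ABC$-competitive for $P$. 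The hypotheses of relaxation monotonicity hold for each of the pairs $(\BDIS{t},\IS)$, $(\SP{t},\SETPACKING)$, $(\SCH{t},\SCHEDULING)$, $(\CLI{t},\CLIQUE)$ (loosening the degree, overlap, or intersection bound, or dropping the minimum-degree requirement of $\CLIQUE$, only enlarges the instance set, and none of the relevant quantities depends on the bound), and also for $(\MAT{\lfloor t/2\rfloor},\MAT{\lfloor t/2\rfloor+1})$ since $\lfloor t/2\rfloor\leqslant\lfloor t/2\rfloor+1$.

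For the two membership claims: by Theorem~\ref{thm:more_colors_is_easier}, $\BDIS{t}$ is as hard as $\MCS{k}{kt}$ with respect to $\PAIRMU$, and by Theorem~\ref{thm:main_result}, $\ASG{t}$ is as hard as $\BDIS{t}$ with respect to $\PAIRMU$; transitivity gives $\ASG{t}$ as hard as $\MCS{k}{kt}$, i.e.\ $\MCS{k}{kt}\in\CCWM{\PAIRMUFORCC}{t}$. For the matching problem, recall that $\BDIS{t}$ is as hard as $\MAT{\lfloor t/2\rfloor+1}$ (established at the end of Section~\ref{sec:a_collection_of_problems_from_CCWM_max}); relaxation monotonicity gives that $\MAT{\lfloor t/2\rfloor+1}$ is as hard as $\MAT{\lfloor t/2\rfloor}$; and $\ASG{t}$ is as hard as $\BDIS{t}$. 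Composing these three by transitivity yields $\ASG{t}$ as hard as $\MAT{\lfloor t/2\rfloor}$, i.e.\ $\MAT{\lfloor t/2\rfloor}\in\CCWM{\PAIRMUFORCC}{t}$; the index shift from $\lfloor t/2\rfloor+1$ to $\lfloor t/2\rfloor$ is exactly what relaxation monotonicity supplies.

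For the four hardness claims, we build the mirror-image chains, starting from $\BDIS{t}$ being as hard as $\ASG{t}$ (Theorem~\ref{thm:main_result}). By relaxation monotonicity, $\IS$ is as hard as $\BDIS{t}$, so by transitivity $\IS$ is as hard as $\ASG{t}$, i.e.\ $\IS$ is $\CCWM{\PAIRMUFORCC}{t}$-hard. By Theorems~\ref{thm:sp_t_completeness},~\ref{thm:sch_t_completeness}, and~\ref{thm:clique_completeness}, the problems $\SP{t}$, $\SCH{t}$, and $\CLI{t}$ are each as hard as $\BDIS{t}$ with respect to $\PAIRMU$, hence, composing with $\BDIS{t}$ being as hard as $\ASG{t}$, each is as hard as $\ASG{t}$; and by relaxation monotonicity $\SETPACKING$, $\SCHEDULING$, and $\CLIQUE$ are as hard as $\SP{t}$, $\SCH{t}$, and $\CLI{t}$, respectively. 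Transitivity then shows $\SETPACKING$, $\SCHEDULING$, and $\CLIQUE$ are each as hard as $\ASG{t}$, i.e.\ $\CCWM{\PAIRMUFORCC}{t}$-hard. (For $\CLIQUE$ one can instead apply Theorem~\ref{thm:clique_completeness} at $t=\infty$, since $\CLI{\infty}=\CLIQUE$ and $\BDIS{\infty}=\IS$, and then compose with $\IS$ being $\CCWM{\PAIRMUFORCC}{t}$-hard.)

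The only point requiring care is relaxation monotonicity: for each pair one must check that a Pareto-optimal competitive algorithm for the looser problem, with its input domain cut down to the instances of the tighter problem, is a legitimate algorithm for the latter and stays competitive, and that $\OPT$, the profit function, and $\MUZERO,\MUONE$ agree on the shared instances — all of which hold because these quantities are functions of the request sequence and of $x,\hat{x}$ only, never of the bound parameter. I do not expect any obstacle beyond this bookkeeping; in particular, no new online reductions are needed.
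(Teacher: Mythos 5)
The paper states this theorem without proof, treating it as a direct consequence of the equivalences and reductions proven in Sections~\ref{sec:independent_set_vs_asg_t} and~\ref{sec:a_collection_of_problems_from_CCWM_max}; your proposal is correct and spells out the same chain the paper implicitly relies on. The one ingredient you state explicitly that the paper leaves tacit (it is encoded only in the arcs $\BDIS{t}\to\IS$, $\SP{t}\to\SETPACKING$, $\SCH{t}\to\SCHEDULING$, $\CLI{t}\to\CLIQUE$ of Figure~\ref{fig:hardness_graph}) is your \emph{relaxation monotonicity} observation, which you justify correctly: since $\OPT$, the profit, and $\MUZERO,\MUONE$ depend only on $(x,\hat{x},r)$ and not on the bound parameter, a Pareto-optimal $\ABC$-competitive algorithm for the unbounded (or looser-bounded) problem, restricted to $\INSTANCES{P}\subseteq\INSTANCES{Q}$, is $\ABC$-competitive for the tighter problem, so $Q$ is as hard as $P$ per Definition~\ref{def:as_hard_as}. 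Your use of it to bridge $\MAT{\lfloor t/2\rfloor+1}$ (which is what the theorem at the end of Section~\ref{sec:a_collection_of_problems_from_CCWM_max} actually delivers) down to $\MAT{\lfloor t/2\rfloor}$ is exactly the right move, and the transitivity chains you build from Theorems~\ref{thm:main_result}, \ref{thm:sp_t_completeness}, \ref{thm:sch_t_completeness}, \ref{thm:clique_completeness}, and~\ref{thm:more_colors_is_easier} all check out.
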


Proving these connections to $\CCWM{\PAIRMUFORCC}{t}$ yields a number of positive and negative algorithmic results for the problems considered:

\vspace{0.2cm}
\begin{corollary}\label{cor:strong_lower_bounds}
Let $t \in \ZZ^+ \cup \{\infty\}$, let $P \in \{\BDIS{t},\SCH{t},\SP{t},\CLI{t}\}$, and let $\ALG$ be an $\ABC$-competitive algorithm for $P$ with respect to $\PAIRMU$.
Then,
\begin{enumerate}[label = {(\roman*)}]
\item $\alpha + \beta \geqslant t$, \label{item:alpha+beta_geq_t}
\item $\alpha + (t-1) \cdot \gamma \geqslant t$, and \label{item:alpha+(t-1)gamma_geq_t}
\item if $\beta = \gamma = 0$ then $\alpha \geqslant t$. \label{item:no_t-eps_comp_alg_with_preds}
\end{enumerate}
\end{corollary}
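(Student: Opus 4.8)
The plan is to derive all three statements from the $\CCWM{\PAIRMUFORCC}{t}$-completeness of $\BDIS{t}$, $\SCH{t}$, $\SP{t}$, and $\CLI{t}$ (Theorems~\ref{thm:main_result}, \ref{thm:sp_t_completeness}, \ref{thm:sch_t_completeness}, and~\ref{thm:clique_completeness}), which via Remark~\ref{rem:motivation} means it suffices to prove the three inequalities for $\BDIS{t}$ alone — an $\ABC$-competitive algorithm for any $P$ in the list yields an $\ABC$-competitive algorithm for $\BDIS{t}$, so any necessary condition on $\ABC$ for $\BDIS{t}$ transfers to all four problems. (For the $t=\infty$ case one interprets the inequalities in the limiting sense, and the unbounded problems are hard for every $\CCWM{\PAIRMUFORCC}{t}$, so the finite-$t$ bounds all apply.) Thus I would open the proof by reducing to the single problem $\BDIS{t}$, then attack each item.

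For item~\ref{item:alpha+beta_geq_t}: I would exhibit a family of hard instances on which an algorithm that is ``too cheap'' on both $\alpha$ and $\beta$ fails. The natural construction is a long path (or a disjoint union of stars) of bounded-degree vertices, all arriving at level $1$ in the sense of the earlier reductions, with predictions $\hat{x}_i = 0$ on every vertex while the adversary sets $x_i$ adaptively: whenever $\ALG$ accepts a vertex, the adversary makes that a wrong choice by revealing that a neighbour should have been the one in the optimum, so $x_i = 1$ on the accepted vertex (a ``$0$-prediction that is wrong'', contributing to $\MUZERO$) and the blocked neighbours have $x = 0$. On such an instance $\OPT \approx n$ while $\ALG \approx$ (number it rejects) $+\,0$, and $\MUZERO \approx$ (number it accepts), with $\MUONE = 0$. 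Feeding this into Equation~\eqref{eq:competitiveness_max} and letting $n \to \infty$ to kill the additive constant forces $\alpha \cdot (\text{rejected}) + \beta \cdot (\text{accepted}) \gtrsim t \cdot (\text{rejected}) + (\text{accepted})$ type relations; choosing the adversary's ratio of accepted-to-rejected appropriately squeezes out $\alpha + \beta \geqslant t$. The clean way to do this is to re-use the instance families already built for $\ASG{t}$ in~\cite{BBFL24}, since $\ASG{t}$ is exactly as hard as $\BDIS{t}$, and the analogous bound for $\ASG{t}$ is essentially Lemma~\ref{lem:pareto-optimal-for-bdist-then-a+bleqt}'s converse-flavoured content together with Theorem~\ref{thm:pareto_optimal_algs_for_asg_t}.

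For item~\ref{item:alpha+(t-1)gamma_geq_t}: here the relevant regime is predictions that are wrongly $1$. I would use a construction where the algorithm is lured into rejecting a vertex because its prediction says $\hat{x} = 1$ (``reject''), while in truth $x = 0$ and the vertex was isolated or its would-be neighbours never materialise — each such event costs one unit of $\OPT$ that $\ALG$ forfeits, and contributes $1$ to $\MUONE$. Bundling this with a star gadget as in case~\ref{block:10} of Lemma~\ref{lem:asg_t_to_bdis_t} (one center with $t$ leaves) and balancing how often the adversary plays the ``isolated, prediction wrong'' card against the ``star center'' card should yield $\alpha + (t-1)\gamma \geqslant t$ after normalising away \AT; again the cleanest route is to port the corresponding $\ASG{t}$ lower bound from~\cite{BBFL24}. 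Item~\ref{item:no_t-eps_comp_alg_with_preds} is then immediate: setting $\beta = \gamma = 0$ in either \ref{item:alpha+beta_geq_t} or \ref{item:alpha+(t-1)gamma_geq_t} gives $\alpha \geqslant t$ directly (and this matches the classical purely-online bound for bounded-degree independent set).

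The main obstacle I anticipate is the $t = \infty$ bookkeeping and making the ``divide by $n$, send $n\to\infty$'' step rigorous so that the additive constant \AT genuinely drops out while the prediction-error terms scale correctly — one has to be careful that the number of error-contributing requests and the value of $\OPT$ grow at comparable rates, so that the ratio the adversary controls actually pins down the linear relation among $\alpha,\beta,\gamma$ rather than a weaker one. A secondary subtlety is confirming that the hard instances for $\ASG{t}$ transport through the reduction of Lemma~\ref{lem:asg_t_to_bdis_t} without degrading the error bounds — but Equations~\eqref{eq:muzero_of_is_instance} and~\eqref{eq:muone_of_is_instance} show the error only moves in the favourable direction, so this should go through cleanly.
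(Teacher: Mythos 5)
Your proposal is correct and, at its core, takes the same route as the paper: the paper's proof of Corollary~\ref{cor:strong_lower_bounds} is a single line citing a general lower-bound theorem from~\cite{BBFL24}, which applies to $\BDIS{t}$, $\SCH{t}$, $\SP{t}$, and $\CLI{t}$ precisely because they are $\CCWM{\PAIRMUFORCC}{t}$-complete — exactly the ``cleanest route'' you identify via Remark~\ref{rem:motivation} and Theorem~\ref{thm:asg_t_is_as_hard_as_bdis_t_purely_online}. Your reduction to the single problem $\BDIS{t}$ and the observation that item~\ref{item:no_t-eps_comp_alg_with_preds} is immediate from either~\ref{item:alpha+beta_geq_t} or~\ref{item:alpha+(t-1)gamma_geq_t} with $\beta=\gamma=0$ are exactly right.

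The extra material you sketch — explicit adaptive adversary constructions on bounded-degree graphs with star gadgets, tuning the ratio of accepted-to-rejected vertices to extract the linear trade-offs, and the $n\to\infty$ normalisation to kill $\AT$ — is a genuinely different and more self-contained route that the paper does not pursue, since it can appeal to the already-proven $\ASG{t}$ lower bounds in~\cite{BBFL24} as a black box. Your sketches are plausible in spirit but are not worked out to the point of being verifiable (e.g.\ the exact gadget mix and ratios needed to pin down the plane $\alpha + (t-1)\gamma = t$ rather than a weaker relation are left open), and they would need some care about the quantifier order in Definition~\ref{def:competitiveness} (the additive constant $\AT$ is chosen after $\ABC$, so the adversary must scale $\OPT$, $\MUZERO$, and $\MUONE$ at commensurable rates, as you yourself flag). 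Since the citation route already closes the argument, the direct constructions buy you independence from~\cite{BBFL24} but at the cost of redoing work that paper has already done for $\ASG{t}$.

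One small caveat worth making explicit in a write-up: the completeness results (Theorems~\ref{thm:main_result}, \ref{thm:sp_t_completeness}, \ref{thm:sch_t_completeness}, \ref{thm:clique_completeness}) are stated for $t \in \ZZ^+$, so for $t=\infty$ one must argue via $\CCWM{\PAIRMUFORCC}{t}$-hardness of the unbounded variants for every finite $t$, as you briefly note — this is a hardness (not completeness) transfer, which is the direction one actually needs for lower bounds, but it deserves a sentence rather than a parenthetical.
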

\begin{proof}
This is a direct consequence of a more general statement from~\cite{BBFL25}.
\end{proof}

\vspace{0.2cm}
\begin{corollary}\label{cor:positive_result_members}
Let $P \in \{\BDIS{t},\SP{t},\SCH{t},\CLI{t},\MCS{k}{kt},\MAT{\lfloor t/2 \rfloor + 1}\}$ for $k,t \in \ZZ^+$.
Then, for all $1 \leqslant \alpha \leqslant t$, there exists an $(\alpha,t-\alpha,1)$-competitive and a $(t,0,0)$-competitive algorithm for $P$.
If $P \in \{\BDIS{t},\SP{t},\SCH{t},\CLI{t}\}$, these algorithms are Pareto-optimal.
\end{corollary}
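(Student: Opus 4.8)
The plan is to obtain everything by transporting, along the as-hard-as relations proven earlier, the two positive results already available for the base problem $\ASG{t}$. First I would record these for $\ASG{t}$ itself: by Theorem~\ref{thm:pareto_optimal_algs_for_asg_t}, for every $\alpha\geqslant 1$ there is an $(\alpha,t-\alpha,1)$-competitive algorithm for $\ASG{t}$ with respect to $\PAIRMU$; and the algorithm that always guesses $0$ is $(t,0,0)$-competitive with additive constant $0$, since on every instance $I$ its cost is $t\sum_i x_i=t\cdot\OPT^c_{\ASG{t}}(I)$. By~\cite{BBFL24} (as quoted in the proof of Lemma~\ref{lem:pareto-optimal-for-bdist-then-a+bleqt}), the $(t,0,0)$-competitive algorithm and the $(\alpha,t-\alpha,1)$-competitive algorithms for $\ASG{t}$ are Pareto-optimal.

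Second, I would invoke membership in $\CCWM{\PAIRMUFORCC}{t}$, i.e.\ that $\ASG{t}$ is as hard as $P$ with respect to $\PAIRMU$ for each $P$ in the list. For $P=\BDIS{t}$ this is Theorem~\ref{thm:main_result}; for $P\in\{\SP{t},\SCH{t},\CLI{t}\}$ it follows from Theorems~\ref{thm:sp_t_completeness},~\ref{thm:sch_t_completeness},~\ref{thm:clique_completeness} combined with Theorem~\ref{thm:main_result} and transitivity (Lemma~\ref{lem:transitivity}); for $P=\MCS{k}{kt}$ it follows from Theorem~\ref{thm:more_colors_is_easier}, Theorem~\ref{thm:main_result}, and Lemma~\ref{lem:transitivity}; and for $P=\MAT{\lfloor t/2\rfloor}$ it is the membership theorem stated earlier in this section. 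Now Definition~\ref{def:as_hard_as} applied with $Q=\ASG{t}$ says exactly that the existence of a Pareto-optimal $\ABC$-competitive algorithm for $\ASG{t}$ implies the existence of an $\ABC$-competitive algorithm for $P$; feeding in the $(t,0,0)$- and the $(\alpha,t-\alpha,1)$-competitive Pareto-optimal algorithms for $\ASG{t}$ therefore yields $(t,0,0)$- and $(\alpha,t-\alpha,1)$-competitive algorithms for $P$, for every $\alpha\geqslant 1$. This establishes the first assertion.

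Third, for $P\in\{\BDIS{t},\SP{t},\SCH{t},\CLI{t}\}$ I would prove Pareto-optimality by matching these algorithms against the lower bounds. For the $(t,0,0)$-competitive algorithm: $\beta$ and $\gamma$ are already $0$ and cannot be lowered, and an $(\alpha,0,0)$-competitive algorithm with $\alpha<t$ is ruled out by item~\ref{item:no_t-eps_comp_alg_with_preds} of Corollary~\ref{cor:strong_lower_bounds}; hence it is Pareto-optimal. For the $(\alpha,t-\alpha,1)$-competitive algorithm with $1\leqslant\alpha<t$: lowering $\alpha$ or $\beta$ would force $\alpha+\beta<t$, contradicting item~\ref{item:alpha+beta_geq_t} of Corollary~\ref{cor:strong_lower_bounds}, while lowering $\gamma$ below $1$ is impossible because $\alpha<t$ forces $\gamma\geqslant 1$ by Lemma~\ref{lem:small_alpha_imply_large_gamma}; hence it too is Pareto-optimal (for $\alpha=t$ the $(t,0,1)$-competitiveness it guarantees is subsumed by the $(t,0,0)$-competitive algorithm, so nothing new is asserted there). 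The one step that is not routine bookkeeping is exactly this last point about $\gamma$: item~\ref{item:alpha+(t-1)gamma_geq_t} of Corollary~\ref{cor:strong_lower_bounds} alone does not rule out an $(\alpha,t-\alpha,1-\eps)$-competitive algorithm when $\alpha>1$, so Lemma~\ref{lem:small_alpha_imply_large_gamma} is genuinely needed here; everything else is assembling the stated results.
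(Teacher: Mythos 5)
Your reconstruction matches the natural derivation and, since the paper gives no explicit proof of this corollary, it lines up with the intended argument: transport the $(t,0,0)$- and $(\alpha,t-\alpha,1)$-competitive Pareto-optimal $\ASG{t}$-algorithms along the membership relations, then match against Corollary~\ref{cor:strong_lower_bounds} and Lemma~\ref{lem:small_alpha_imply_large_gamma} to certify Pareto-optimality for the four complete problems. Your observation that item~\ref{item:alpha+(t-1)gamma_geq_t} of Corollary~\ref{cor:strong_lower_bounds} alone does not rule out lowering $\gamma$ when $\alpha>1$, so that Lemma~\ref{lem:small_alpha_imply_large_gamma} carries the real load there, is correct and is the crux of the Pareto-optimality argument.

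One step is missing: Lemma~\ref{lem:small_alpha_imply_large_gamma} is stated only for $\BDIS{t}$, yet you invoke it uniformly for $P\in\{\BDIS{t},\SP{t},\SCH{t},\CLI{t}\}$. The lemma does transfer, but that needs an explicit appeal to completeness. By Theorems~\ref{thm:main_result} and~\ref{thm:asg_t_is_as_hard_as_bdis_t_purely_online}, all four problems are $\CCWM{\PAIRMUFORCC}{t}$-complete, and Remark~\ref{rem:motivation} gives that an $\ABC$-competitive algorithm exists for one of them if and only if it exists for all; hence an $\ABC$-competitive algorithm for $\SP{t}$, $\SCH{t}$, or $\CLI{t}$ with $\alpha<t$ and $\gamma<1$ would yield one for $\BDIS{t}$, contradicting Lemma~\ref{lem:small_alpha_imply_large_gamma}. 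Adding that sentence closes the gap; with it in place the proof is complete.
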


\section{Concluding Remarks}

We have shown several reductions implying results on the relative hardness of several online maximization and minimization problems. 
In particular, we have shown that there exists an $\ABC$-competitive algorithm for $\BDIS{t}$ if and only if there exists an $\ABC$-competitive algorithm for its dual minimization problem $\BDVC{t}$.
In full generality, our results imply that any two problems on the following list are equivalent in terms of hardness: $\ASG{t}$, $\BDVC{t}$, $\BDIS{t}$, $\SP{t}$, $\SCH{t}$, $\CLI{t}$, and \emph{Online $t$-Bounded Overlap Interval Rejection with Predictions} (see~\cite{BBFL25}).

There are still several open problems from~\cite{BBFL25} that remain open, which includes considering randomized algorithms instead of deterministic algorithms, and changing the hardness measure from competitiveness to e.g.\ random order or relative worst order.
Beyond these, it would be interesting to consider other prediction schemes beyond the binary prediction scheme.


\bibliographystyle{plain}
\bibliography{refs.bib}

\end{document}